%% file: main.tex
\documentclass[11pt]{article}

\usepackage{amsfonts,amsmath,amsthm,amssymb}
\usepackage[dvipsnames]{xcolor}
\usepackage{float}
\definecolor{DarkBlue}{rgb}{0.1,0.1,0.5}
\usepackage[colorlinks,citecolor=ForestGreen,linkcolor=blue,urlcolor=DarkBlue,pagebackref]{hyperref}
\usepackage{xspace}
\usepackage[margin=1in]{geometry}
\usepackage{graphicx} 
\usepackage[capitalise,nameinlink]{cleveref}
\usepackage{thm-restate}
\usepackage{framed}
\usepackage{tikz}
\usetikzlibrary{backgrounds,decorations.pathreplacing,fadings,calc,positioning}
\usepackage{float}
\usepackage{multirow,makecell}
\usepackage{stmaryrd}
\usepackage{url}
\usepackage{paralist}

\usepackage{tcolorbox}
\usepackage{footnote}
\BeforeBeginEnvironment{tcolorbox}{\begin{savenotes}}
\AfterEndEnvironment{tcolorbox}{\end{savenotes}}

\usepackage{algorithm, algpseudocode}
\algtext*{EndWhile} 
\algtext*{EndIf} 
\algtext*{EndFor} 
\algtext*{EndFunction} 

\definecolor{midnight}{rgb}{0.0, 0.2, 0.4}
\definecolor{dblue}{rgb}{0.12, 0.56, 1.0}
\definecolor{amber}{rgb}{1.0, 0.49, 0.0}
\definecolor{dorange}{rgb}{1.0, 0.55, 0.0}
\definecolor{dtang}{rgb}{1.0, 0.66, 0.07}
\definecolor{dsblue}{rgb}{0.0, 0.75, 1.0}
\definecolor{denim}{rgb}{0.08, 0.38, 0.74}
\definecolor{pblue}{rgb}{0.2, 0.2, 0.6}
\definecolor{gblue}{rgb}{0.0, 0.58, 0.71}
\definecolor{cdblue}{rgb}{0.16, 0.32, 0.75}
\definecolor{dlavender}{rgb}{0.45, 0.31, 0.59}
\definecolor{alizarin}{rgb}{0.82, 0.1, 0.26}
\definecolor{dorchid}{rgb}{0.6, 0.2, 0.8}
\definecolor{dimgray}{rgb}{0.41, 0.41, 0.41}
\definecolor{skobeloff}{rgb}{0.0, 0.48, 0.45}
\definecolor{cgreen}{rgb}{0.0, 0.8, 0.6}

\tikzset{
    ncbar angle/.initial=90,
    ncbar/.style={
        to path=(\tikztostart)
        -- ($(\tikztostart)!#1!\pgfkeysvalueof{/tikz/ncbar angle}:(\tikztotarget)$)
        -- ($(\tikztotarget)!($(\tikztostart)!#1!\pgfkeysvalueof{/tikz/ncbar angle}:(\tikztotarget)$)!\pgfkeysvalueof{/tikz/ncbar angle}:(\tikztostart)$)
        -- (\tikztotarget)
    },
    ncbar/.default=0.1cm,
}
\tikzset{square left brace/.style={ncbar=0.1cm}}
\tikzset{square right brace/.style={ncbar=-0.1cm}}

\newcommand{\set}[1]{\left\{#1\right\}}

\DeclareMathOperator{\poly}{poly}

\usepackage{old-arrows}

\renewcommand{\epsilon}{\varepsilon}

\newcommand{\BN}{\mathbb{N}}
\newcommand{\be}{\mathbf{e}}
\newcommand{\bu}{\mathbf{u}}
\newcommand{\bv}{\mathbf{v}}
\newcommand{\cB}{\mathcal{B}}
\newcommand{\cC}{\mathcal{C}}
\newcommand{\cO}{\mathcal{O}}
\newcommand{\cP}{\mathcal{P}}

\newcommand{\cV}{\mathcal{V}}
\newcommand{\sI}{\mathsf{I}}
\newcommand{\sO}{\mathsf{O}}
\newcommand{\dist}{\mathsf{dist}}

\theoremstyle{definition}
\newtheorem{theorem}{Theorem}[section]

\newtheorem{lemma}[theorem]{Lemma}

\newtheorem{corollary}[theorem]{Corollary}
\newtheorem{definition}[theorem]{Definition}

\newtheorem{remark}[theorem]{Remark}

\crefname{claim}{claim}{claims}

\NewDocumentEnvironment{claimproof}{ }{%
  \proof
}{%
  \endproof
}

\definecolor{amethyst}{rgb}{0.6, 0.4, 0.8}
\newcommand{\revised}[1]{#1}

\title{Better Diameter Bounds for Efficient Shortcuts\\ and a Structural Criterion for Constructiveness}

\author{Bernhard Haeupler\thanks{INSAIT, Sofia University ``St. Kliment Ohridski'' and ETH Zurich. \href{mailto:bernhard.haeupler@inf.ethz.ch}{bernhard.haeupler@inf.ethz.ch}. This research was partially funded by the Ministry of Education and Science of Bulgaria (support for INSAIT, part of the Bulgarian National Roadmap for Research Infrastructure) and by the European Research Council (ERC) under the European Union's Horizon 2020 research and innovation program (ERC grant agreement 949272).} \and
Antti Roeyskoe\thanks{ETH Zurich. \href{mailto:antti.roeyskoe@inf.ethz.ch}{antti.roeyskoe@inf.ethz.ch}. This research was partially funded by the European Research Council (ERC) under the European Union's Horizon 2020 research and innovation program (ERC grant agreement 949272).} \and
Zhijun Zhang\thanks{INSAIT, Sofia University ``St. Kliment Ohridski''. \href{mailto:zhijun.zhang@insait.ai}{zhijun.zhang@insait.ai}. This research was partially funded by the Ministry of Education and Science of Bulgaria (support for INSAIT, part of the Bulgarian National Roadmap for Research Infrastructure).}}
\date{}

\begin{document}

\pagenumbering{gobble}

\maketitle

\begin{abstract}
\revised{%
All parallel algorithms for directed reachability and shortest paths crucially rely on efficient shortcut constructions. These constructions find directed paths and shortcut them by adding edges, with the goal to reduce the diameter of the graph. A long sequence of works has studied (efficient) shortcut constructions as well as impossibility results on the best diameter and therefore the best parallelism that can be achieved via this approach.\\

This paper introduces a new conceptual 
tool for this line of research in the form of a simple and natural structural criterion: A shortcut $H$ for a graph $G$ is \emph{certified} if for any shortcut edge $(u, v) \in H$, there exists a vertex $w$ such that the edges $(u, w)$ and $(w, v)$ are also in $G \cup H$.\\

We show that this criterion captures constructiveness in the following sense: A shortcut $H$ can be constructed in $t$ time by repeatedly spending $\ell$ time on shortcutting a path of length $\ell$, if and only if, there exists a certified shortcut $H' \supseteq H$ of size $\tilde{O}(t)$. Furthermore, all known shortcut constructions with efficient algorithms can be extended to produce certified shortcuts of size $\tilde{O}(m)$. On the other hand, for shortcut constructions for which attempts to find efficient implementations have failed, we can show that this is impossible.\\

We also obtain stronger diameter lower bounds for certified shortcuts and hopsets. For example, no certified shortcut construction with almost-linear size can reduce a graph's diameter below $n^{1/4-o(1)}$. This seems to be the best bound one can hope for with current techniques.
}
\end{abstract}

\newpage

\tableofcontents

\newpage

\pagenumbering{arabic}

\input{intro}
\input{prelim}
\input{lb}
\input{existing}

\paragraph{Acknowledgement.}
We are thankful to Seth Pettie and the anonymous reviewers of ICALP 2026 for their helpful comments on the paper.

\bibliographystyle{alphaurl}
\bibliography{main}

\appendix

\input{appendix_lowdepth}
\input{app_proof}
\input{appendix_treaps}

\end{document}

%% file: intro.tex
\section{Introduction}
\label{sec:intro}

One of the central and long-standing open problems in parallel computing is computing shortest paths or even just directed reachability.
Despite the presence of simple and efficient algorithms in the sequential setting, designing parallel algorithms that are both work-efficient and have low depth is notoriously difficult.
So far, the best such algorithm, due to \cite{JambulapatiLS19}, has depth $n^{1/2+o(1)}$.

The critical ingredient and primary idea powering all known parallel algorithms are \emph{shortcuts}.
On a directed acyclic graph (DAG) $G = (V, E)$ of diameter $D$, a standard parallel breadth-first search algorithm can resolve reachability with almost-linear work and depth $\tilde{\cO}(D)$. However, the input DAG could have a diameter as large as $\Theta(n)$. Computing a shortcut to reduce the diameter can drastically improve the depth of the breadth-first search. 
\begin{definition}
    A shortcut $H$ for a DAG $G$ is a subset of edges from
    the transitive closure of $G$.
\end{definition}
Suppose that there exists a shortcut $H$ of size $|H| = m^{1 + o(1)}$ that can be constructed with almost-linear work and depth $D'$, such that, the graph $G \cup H$ has diameter $D$. This immediately implies an efficient algorithm for directed reachability with almost-linear work and depth $\tilde{\cO}(D + D')$ --- simply compute the shortcut $H$ and run a breadth-first search on $G \cup H$.
Conversely, if no (efficiently computable) shortcuts with low diameters exist in general graphs, then this poses a limitation on the best depth achievable by any algorithm following this shortcut-based approach, and therefore a severe barrier for improving the current state of the art. 

A major advantage of studying shortcuts is that they are purely combinatorial structures, independent of any computation model. This, combined with the above connection, makes it possible to apply graph-theoretic and purely structural results to give strong evidence about parallel computations.

A long line of work has studied the (non-)existence of shortcuts with low diameters. The state of the art lower bound due to \cite{HoppenworthXX25} shows that $\cO(m)$-size shortcuts cannot in general reduce the diameter of a graph below $\Omega(n^{2/9})$, while
on the upper bound side, $\cO(n)$-size shortcuts reducing diameter to $\cO(n^{1/3})$ are known to exist for all graphs \cite{KoganP22,bals2025greedy}.

Similar approaches have also been extended to \emph{hopsets}, which are generalizations of shortcuts that reduce the number of edges on \emph{shortest} paths, instead of arbitrary paths, while preserving distances, not just connectivity.
\begin{definition}
    A hopset $H$ for a graph $G$ with positive edge lengths is a set of edges such that $\dist_{G}(u, v) = \dist_{G \cup H}(u, v)$ for every pair of vertices $u, v$. 
\end{definition}
A hopset $H$ for $G$ is said to have \textit{hopbound} $\beta$ if for any pair of vertices $u, v$, where $u$ can reach $v$, $G \cup H$ contains a shortest $(u, v)$-path with at most $\beta$ edges. So far, the best known hopset construction remains the one produced by the folklore algorithm \cite{UllmanY91,BermanRR10}, obtaining a hopbound of $\cO(n^{1 / 2})$ by adding $\cO(n)$ edges. For lower bounds, \cite{HoppenworthXX25} shows $\cO(m)$-size hopsets cannot in general obtain a hopbound of $o(n^{2/7})$.

A more detailed discussion on prior work is provided in \cref{sec:related}.

\subsection{Our Results}
\label{sec:result}

This paper introduces new conceptual tools to study the structure of shortcuts that can be \emph{efficiently} produced by a large and natural family of shortcut construction algorithms.

There are two aspects in which existential results on shortcuts are non-constructive. Firstly, how does one efficiently decide which node pairs to shortcut? Secondly, how can one be convinced that an edge $(u,v)$ is a valid shortcut edge, i.e., how can one verify that there is a $(u,v)$-path? We focus on the second aspect and define the following \emph{certified shortcutting procedure}, in which the algorithm still gets (nondeterministic) advise on which shortcut edges to add and also on paths that certify the validity of these edges, but forces the algorithm to verify the path:

\begin{definition}
A \emph{certified shortcutting procedure} for a graph $G$ starts with an empty shortcut $H$ and repeatedly takes some path $P=(u_1, \ldots,u_{|P|})$ in the \emph{current} graph $G \cup H$, verifies it using $|P|$ steps, and then adds to $H$ an arbitrary number of shortcut edges $(u_i,u_j)$ with $i<j$ using one step per edge.  
\end{definition}

We note that shortcut edges can be used in subsequent paths and therefore speed up future verifications. For example, if one wants to
shortcut two paths $P$ and $P'$ that share a long subpath $P''$, then they can be verified in $|P''| + (|P \setminus P''|+1)+ (|P' \setminus P''|+1) = |P \cup P'|+2$ steps by first adding an edge for $P''$.

We call an algorithm for directed reachability \emph{certified-shortcut-based} if it first constructs a shortcut $H$ by a certified shortcutting procedure and then runs a breadth-first search on $G \cup H$. The first main result of this paper gives a stronger lower bound on the best depth achievable by the certified-shortcut-based framework.

\begin{theorem}
\label{thm:reach}
Any certified-shortcut-based algorithm for directed reachability with $\tilde{O}(m)$ work must have depth $\tilde{\Omega}(n^{1/4})$.
\end{theorem}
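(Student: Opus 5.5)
The plan is to reduce \cref{thm:reach} to a purely structural lower bound on certified shortcuts, and then prove that bound on an explicit hard instance.

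\textbf{Step 1: reduction to a structural statement.} A certified-shortcut-based algorithm with $\tilde O(m)$ work runs a certified shortcutting procedure whose total number of steps is at most $t=\tilde O(m)$. The steps are $|P|$ per verified path plus one per added edge. It then runs BFS on $G\cup H$, whose depth is at least the diameter of $G\cup H$. The procedure's output should close up into a \emph{certified} set. When a path $P=(u_1,\dots,u_{|P|})$ is verified, we add, in a balanced binary-tree fashion, the edges $(u_{\text{left}},u_{\text{mid}})$ and $(u_{\text{mid}},u_{\text{right}})$ for all dyadic subintervals, i.e.\ $O(|P|)$ edges in total. Each requested edge $(u_i,u_j)$ is then covered by $O(\log |P|)$ such intervals, and we splice them with another $O(\log)$ certified edges. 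Each new edge $(a,c)$ now has a witness $b$ with $(a,b),(b,c)\in G\cup H'$. This gives a certified $H'\supseteq H$ with $|H'|=\tilde O(t)=\tilde O(m)$. Adding edges never increases the diameter, so it suffices to show the following: there exist graphs with $m=\tilde O(n)$ such that every certified shortcut of size $\tilde O(m)$ leaves diameter $\tilde\Omega(n^{1/4})$.

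\textbf{Step 2: the hard instance.} I would take the standard layered-grid/critical-path constructions used for shortcut lower bounds (as in \cite{HoppenworthXX25}). That is a layered DAG with a family of $p$ pairwise edge-disjoint "critical" paths of length $L$ such that each critical path is the unique path between its endpoints. The parameters are chosen so that $pL\approx n$ and the other paths diverge quickly. Because critical pairs have unique paths, any shortcut edge helping critical path $P$ must have both endpoints on $P$. A certified edge $(u,v)$ with witness $w$ must satisfy $u\to w\to v$, so $w$ lies on the same unique path. Inductively, every shortcut edge is an "interval" edge of one critical path. Moreover, the certified structure forces a hierarchy: the edges used on $P$ form a family of intervals that is closed under the witness splitting.

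\textbf{Step 3: counting.} Certification alone does not yet bite, since $O(L)$ edges on a single path already suffice to get $O(\log L)$ hops. The gain over uncertified bounds must come from paths sharing edges. Intervals along overlapping (non-critical but merging) paths force a witness vertex that is reachable consistently, and this kills the sharing that existential shortcuts exploit. I would therefore construct an instance where critical paths pairwise intersect in single vertices, so that no interval edge is shared by two critical paths. Certification then forces any edge spanning $k$ vertices of a path to be built from edges lying inside that same path. The per-path cost of reducing hop length from $L$ to $D$ is then $\Omega(L)$ unless $D\gtrsim L$, so the budget forces $p\cdot\Omega(L)\lesssim m$ for most paths. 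With $n^{1/2}$-ish geometric freedom, a grid construction gives $p L\gg m$ while $L\approx n^{1/4}$: take $\sqrt n\times\sqrt n$ grid-like layers with about $n^{3/4}$ critical paths of length $n^{1/4}$, chosen along distinct directions.

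\textbf{Main obstacle.} The hard part is making Step 3 rigorous. One needs a geometric (convex-set/direction) construction in which critical paths are unique, there are superlinearly many in total length, and pairwise overlap is at most a vertex. The proof must also show that a certified interval edge lying on a critical path must have its whole witness tree inside that path. I would first try the integer-lattice construction with direction vectors from a convex polygon, as in Hesse-type constructions, tuning the number of directions so that total critical length is $n^{1+\epsilon}$ with $L=\tilde\Theta(n^{1/4})$.
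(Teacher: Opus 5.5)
Your Step 1 is correct and is essentially the paper's reduction (\cref{lem:lem-two-steps} and \cref{corr}). Certifying each verified path by dyadic edges and then splicing costs $O(|P| + |H_P|\log|P|)$ edges, compared with the paper's $O(|P|\log|P| + |H_P|)$, and either bound is $\tilde O(t)$.

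The gap is in Steps 2--3, in the choice of hard instance. You require critical paths to pairwise intersect in at most a single vertex. That makes them edge-disjoint, so $\sum_{P}|P| \le m$ trivially. Your per-path count says that reducing a critical path from $L$ to $L/2$ hops forces at least $L/2$ distinct certified shortcut edges inside it. Summed over all paths this gives at most $pL/2 \le m/2$ edges, which is well within the $\tilde O(m)$ budget, so no diameter bound follows. Your target ``$pL \gg m$'' is therefore impossible for such instances. Your concrete parameters confirm it: $n^{3/4}$ paths of length $n^{1/4}$ with $m=\tilde O(n)$ give total length about $n \approx m$. Step 2's ``$pL\approx n$'' also contradicts Step 3's ``$pL\gg m$''. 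At best, an edge-disjoint (Huang--Pettie type) instance recovers the bound against $\tilde O(n)$-size certified shortcuts (\cref{thm:lb-sc-n}). It does not give the $\tilde O(m)$-size bound that \cref{thm:reach} needs, since the algorithm is allowed $\tilde O(m)$ work.

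The missing idea is that you only need critical paths to pairwise share at most one \emph{edge}. Suppose a shortcut edge $(u,v)\notin E$ lies inside two critical paths, each the unique path between its endpoints. Then both contain the unique $u$--$v$ path, which has at least two edges. So no shortcut edge can be charged to two such paths; this is the content of \cref{lem:shortcut-size}. This relaxation is exactly what lets every edge lie on many critical paths. Your own remark that ``the gain must come from paths sharing edges'' was the right instinct, but the instance you then propose forbids that sharing.

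The paper uses the improved alternation product of \cite{LuWWX22} (\cref{lem:lvwx}) with $d=2$. This gives $\Theta(mn^{\epsilon}/D)$ critical paths of $2D$ edges each, pairwise overlapping in at most one edge. The parameters are $\Delta=\Theta(n^{\epsilon})$ and $D=\Theta(n^{1/4-9\epsilon/8})$, so the total critical length is $\Theta(mn^{\epsilon})$. Hence any certified $D$-shortcut has at least $|\cP|\cdot D=\Omega(mn^{\epsilon})$ edges. Choosing $n^{\epsilon}$ polylogarithmic yields diameter $\tilde\Omega(n^{1/4})$ against every $\tilde O(m)$-size certified shortcut.
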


\subsubsection{Certified Shortcuts: Definition and Equivalences}

The key conceptual ingredient that powers the proof of \cref{thm:reach} is a simple and natural structural criterion, which captures shortcuts that can be easily verified and are therefore more constructive in some sense.

\begin{definition}
\label{def:cert-sc}
A shortcut $H$ is certified if for any edge $(u,v) \in H$, there exist two edges $(u,w),(w,v) \in E \cup H$ for some vertex $w \in V$.
\end{definition}

Being certifiable is a static structural property of a shortcut which is easy to state and easy to (formally) argue about. One way to interpret this criterion is to think of it as a requirement that every edge in $H$ has two ''parent'' edges from which it could have been derived. This induces a natural order in which any certified shortcut can be constructed via a particularly simple certified shortcutting procedure that repeatedly shortcuts only paths of length two. We call such a procedure a \emph{$2$-hop certified shortcutting procedure}.

\begin{restatable}[]{lemma}{lemequi}
\label{lem:equi-seq-sc}
Any shortcut $H$ for a DAG is certified if and only if it can be constructed by a $2$-hop certified shortcutting procedure. 
\end{restatable}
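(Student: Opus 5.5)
We prove the two directions separately. In both, a \emph{$2$-hop certified shortcutting procedure} means a certified shortcutting procedure whose verified paths $P$ all have length two, i.e.\ $P=(u,w,v)$. The only edge such a step can add is $(u,v)$: the pairs $(u,w)$ and $(w,v)$ are already edges of $G\cup H$, and adding them again changes nothing. We also assume $H$ contains no edge of $E$ and no self-loops; since $G$ is a DAG, every edge of $H$ joins two distinct vertices.

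\emph{($\Leftarrow$).} Suppose $H$ is produced by a $2$-hop procedure, and let $H_0\subseteq H_1\subseteq\cdots\subseteq H_k=H$ be the successive shortcut sets.

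\begin{itemize}
\item Each edge $(u,v)\in H$ was added at some step $i$ after verifying a path $(u,w,v)$ in $G\cup H_{i-1}$.
\item Hence $(u,w),(w,v)\in E\cup H_{i-1}\subseteq E\cup H$, which is exactly the condition of \cref{def:cert-sc}.
\item Each added edge lies in the transitive closure, because the procedure only shortcuts genuine paths of $G\cup H_{i-1}$. By induction these are paths in $G$ after expanding shortcut edges. So $H$ is a shortcut, and it is certified.
\end{itemize}

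\emph{($\Rightarrow$).} This is the main step. Given a certified $H$, we must order its edges so that each edge's two ``parent'' edges are already present when it is added. The naive worry is circularity: an edge $(u,v)$ certified through $w$ might have a parent, say $(u,w)\in H$, whose own certificate eventually uses $(u,v)$.

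We break cycles with the DAG structure.

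\begin{itemize}
\item Fix a topological order of $G$. Since $H$ lies in the transitive closure, every edge $(x,y)\in E\cup H$ satisfies $x$ strictly before $y$ in this order.
\item Define the \emph{span} of an edge $(x,y)$ as the number of positions between $x$ and $y$ in the topological order; it is positive for every edge.
\item If $(u,v)\in H$ is certified via $w$, then $u<w<v$ in the order. Hence both parents $(u,w)$ and $(w,v)$ have strictly smaller span than $(u,v)$.
\end{itemize}

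Now add the edges of $H$ in nondecreasing order of span, each step verifying the $2$-path $(u,w,v)$ given by its certificate. By induction on span, each parent edge is either in $E$ or is an $H$-edge of smaller span that has already been added. So the path $(u,w,v)$ exists in the current graph when $(u,v)$ is processed, and the procedure produces exactly $H$.

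There is no real obstacle beyond noticing that the topological order yields this well-founded potential. The only detail to check is that no edge of $E\cup H$ can go backwards or be a loop, which follows because $G$ is acyclic and $H$ lies in its transitive closure.
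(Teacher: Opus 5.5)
Your proof is correct and follows essentially the paper's argument. In the forward direction the paper also fixes a topological order (of $G\cup H$, which agrees with yours for $G$ because $H$ lies in the transitive closure), uses $i<k<j$ for the certifying vertex, and inserts the edges of $H$ in nondecreasing order of $j-i$; your explicit exclusion of self-loops and of edges of $E$ from $H$ is harmless extra care that the paper leaves implicit.
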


\begin{proof}
Shortcuts constructed by $2$-hop certified shortcutting procedures are certified by definition.
For the other direction, since $H$ is a shortcut for a DAG $G$, $G \cup H$ is also a DAG. Fix a topological order $(v_1, v_2, \dots, v_n)$ of $G \cup H$. Consider some shortcut edge $(v_i, v_j) \in H$ certified by $(v_i, v_k), (v_k, v_j) \in G \cup H$. It must hold that $i < k < j$, otherwise the topological order would not be valid. Thus, a $2$-hop certified shortcutting procedure can insert the edges $(v_i, v_j) \in H$ in nondecreasing order of $j - i$, as $j - i > \max \{j - k, k - i\}$.
\end{proof}

A similar equivalence between general certified shortcutting procedures and certified shortcuts holds, up to $\tilde{O}(1)$ factors. In particular, one can transform any certified shortcutting procedure into a $2$-hop one with small overhead:

\begin{restatable}[]{lemma}{lemtwosteps}
\label{lem:lem-two-steps}
If a shortcut $H$ for a DAG $G$ can be constructed by a certified shortcutting procedure in $t$ steps, then there exists a certified shortcut $H' \supseteq H$ that can be constructed by a $2$-hop certified shortcutting procedure using $\tilde{O}(t)$ steps.
\end{restatable}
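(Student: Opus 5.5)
The plan is to simulate the given certified shortcutting procedure one path at a time. Each path $P$ gets a divide-and-conquer "2-hop skeleton" built from $2$-hop certified steps only, of size $O(|P|\log|P|)$. This skeleton certifies every pair $(u_i,u_j)$ of $P$ through a single intermediate vertex. Throughout, I maintain the invariant that after simulating the first $s$ paths, the current set $H'$ contains every edge the original procedure added in its first $s$ rounds. It follows that every edge of the next path $P$, which lies in $G \cup H$ at that moment, also lies in $G \cup H'$. All edges I add join two vertices of a valid path in the current graph, so they are in the transitive closure of $G$. Hence $H'$ is a shortcut, and it is certified because it is produced by a $2$-hop procedure (\cref{lem:equi-seq-sc}).

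Fix a round in which the original procedure takes $P=(u_1,\dots,u_\ell)$ and adds $k$ edges, costing $\ell + k$ steps. I process the index range $[1,\ell]$ recursively as follows.
\begin{itemize}
\item For the current range $[a,b]$ with midpoint $m$, I add the edges $(u_{m-2},u_m),(u_{m-3},u_m),\dots,(u_a,u_m)$ in this order.
\item Each edge $(u_i,u_m)$ is certified by $(u_i,u_{i+1})$, which is an edge of $P$ and hence in $G\cup H'$, together with the previously added $(u_{i+1},u_m)$ (or by the path edge $(u_{m-1},u_m)$ when $i=m-2$).
\item Symmetrically, I add $(u_m,u_{m+2}),\dots,(u_m,u_b)$ in this order, each certified by $(u_m,u_{j-1})$ and the path edge $(u_{j-1},u_j)$.
\item I then recurse on $[a,m-1]$ and $[m+1,b]$.
\end{itemize}
Each recursion level costs $O(\ell)$ steps and there are $O(\log \ell)$ levels, so the skeleton costs $O(\ell\log\ell)$ steps.

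Now take any edge $(u_i,u_j)$ with $i<j$ that the original procedure adds in this round. Consider the unique recursion range containing both $i$ and $j$ in which the midpoint $m$ separates them, meaning $i\le m\le j$.
\begin{itemize}
\item If $i=m$ or $j=m$, the edge either was already added by the skeleton or is an edge of $P$.
\item Otherwise both $(u_i,u_m)$ and $(u_m,u_j)$ are present, so $(u_i,u_j)$ can be added in one $2$-hop step.
\end{itemize}
Thus the round is simulated in $O(\ell\log\ell + k) = \tilde O(\ell+k)$ steps. Summing over all rounds gives $\tilde O(t)$ total steps, with the logarithm bounded by $\log n$.

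The only delicate point is the ordering inside the skeleton, so that every new edge's two parent edges already exist. The outward-from-the-midpoint order above handles this. Beyond that, I must make sure that path edges which are themselves earlier shortcut edges are available, and this is exactly what the invariant $H\subseteq H'$ at each round guarantees. I do not expect a genuine obstacle beyond bookkeeping.
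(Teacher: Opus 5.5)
Your proposal is correct and takes essentially the same approach as the paper. The paper builds the same midpoint divide-and-conquer certified shortcut of size $O(\ell\log\ell)$ (certified via the in-tree/out-tree lemma), adds the original edges on top, each certified through its separating midpoint, and then invokes the topological-order equivalence to get a $2$-hop procedure; your explicit outward-from-the-midpoint ordering and your round-by-round invariant $H\subseteq H'$ spell out what the paper covers by citing that equivalence and by saying it suffices to consider a single path.
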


The following corollary follows directly from the two lemma above.

\begin{corollary}
\label{corr}
 If for a graph $G$, there does not exist a certified shortcut of size $t$ that reduces the diameter to $D$, then any certified shortcutting procedure that reduces the diameter of $G$ to at most $D$ requires at least $\tilde{\Omega}(t)$ steps. 
\end{corollary}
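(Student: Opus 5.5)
The corollary is the contrapositive of the two lemmas combined, so the plan is to argue by contradiction.

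Suppose some certified shortcutting procedure builds a shortcut $H$ in $t'$ steps and $G \cup H$ has diameter at most $D$. Apply \cref{lem:lem-two-steps} to get a certified shortcut $H' \supseteq H$ that a $2$-hop certified shortcutting procedure builds in $\tilde{O}(t')$ steps. \cref{lem:equi-seq-sc} is not needed here, because \cref{lem:lem-two-steps} already says $H'$ is certified.

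Next I check the two properties of $H'$ that matter.
\begin{itemize}
\item \emph{Diameter.} Every edge of $H'$ lies in the transitive closure of $G$, so adding $H' \setminus H$ to $G \cup H$ does not change reachability. It only adds edges, so no distance grows. Hence $G \cup H'$ also has diameter at most $D$.
\item \emph{Size.} A $2$-hop procedure adds each shortcut edge at a cost of at least one step, so $|H'|$ is at most its number of steps. Thus $|H'| = \tilde{O}(t')$. If we want the size to be exactly $t$, pad $H'$ with arbitrary edges of the transitive closure; this keeps it a shortcut and does not increase the diameter.
\end{itemize}
I should check that padding preserves certifiedness. A simple route is to read the hypothesis monotonically: if there is no certified shortcut of size at most $t$ achieving diameter $D$, the padding is never needed.

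To conclude, if $t' = \tilde{o}(t)$ then $|H'| \le t$. This gives a certified shortcut of size at most $t$ that reduces the diameter to $D$, contradicting the hypothesis. Therefore $t' = \tilde{\Omega}(t)$.

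The only real obstacle is bookkeeping the polylogarithmic factors from \cref{lem:lem-two-steps}. If its bound is $t' \cdot \mathrm{polylog}(n)$, then $|H'| \le t$ forces $t' \ge t/\mathrm{polylog}(n)$, which is exactly the claimed $\tilde{\Omega}(t)$ bound.
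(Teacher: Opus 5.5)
Your proposal is correct and is essentially the paper's argument: the paper just notes that the corollary follows from \cref{lem:equi-seq-sc} and \cref{lem:lem-two-steps}. That is, any $t'$-step procedure yields a certified $H' \supseteq H$ of size $\tilde{O}(t')$ with no larger diameter, which is exactly your contrapositive. Your reading of ``size $t$'' as ``size at most $t$'' is the intended one (under an ``exactly $t$'' reading the hypothesis becomes vacuous for huge $t$), so the padding digression can be dropped.
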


On the other hand, if there exists a certified shortcut $H$ of size $t$ with $\mathrm{diam}(G \cup H) \leq D$ then there exists a simple (nondeterministic) $2$-hop certified shortcutting procedure that takes $t$ steps and reduces the diameter to $D$ while also producing an easily and locally checkable certificate for the validity of $H$. 
 
\subsubsection{Lower Bounds on Certified Shortcuts and Hopsets}

With the additional requirement of a shortcut being certified, we are able to prove stronger diameter lower bounds.

\begin{theorem}
\label{thm:lb-sc-m}
There exists a family of DAGs with $n$ vertices and $m$ edges, such that any certified shortcut of size $\tilde{O}(m)$ has diameter $\tilde{\Omega}(n^{1/4})$.
\end{theorem}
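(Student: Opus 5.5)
We follow the standard shortcut lower-bound template: a layered grid of vertices with many ``critical paths'' that are unique and pairwise share at most one edge. We then use certification to argue that an edge shortcutting a long stretch of a critical path must be built recursively from parent edges, and that this is expensive. The classical bound of $n^{2/9}$ comes from arguing that each shortcut edge helps only few critical paths. With certification, we instead charge every critical path that a shortcut edge lies ``on'' for the whole recursive derivation tree of that edge. In a DAG each edge $(u,v)\in H$ has a certification witness $w$; unfolding this recursively gives a binary derivation tree whose leaves form a $u$--$v$ path in $G$. The structural lemma we need is the following: if a shortcut edge is usable on critical path $P$ (that is, its derived $G$-path is a subpath of $P$, which is forced when critical paths are unique), then every node of its derivation tree is also an edge whose $G$-path is a subpath of $P$.

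\emph{Construction.} We use a layered graph with $L$ layers, each a grid of width $w$. Critical paths are defined by direction vectors drawn from a convex set in the style of Hesse, Huang--Pettie, and Hoppenworth et al. They are chosen so that (i) each pair $(s,t)$ of a source and sink of a critical path has a unique path in $G$, and (ii) two distinct critical paths share at most one edge, or more weakly share only $O(1)$ consecutive vertices. The parameters are chosen so that the number of critical paths is about $m$ and each has length $L\approx n^{1/4}$, with $n\approx L w^{d}$ and $m$ tuned suitably.

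\emph{Key counting step.} Suppose $\mathrm{diam}(G\cup H)\le D=o(L)$. Then each critical path $P$ must be covered by at most $D$ edges of $G\cup H$, all of whose $G$-paths lie inside $P$ by uniqueness. So $P$ has a ``private'' shortcut structure. Its derivation forest contains, for each used shortcut edge, a certified binary tree whose internal nodes are shortcut edges spanning subpaths of $P$. Any shortcut edge spanning at least two edges of $P$ spans a subpath shared by no other critical path, by property (ii). Hence such an edge is private to $P$. The number of private edges needed is at least the number of internal nodes of the derivation trees covering $P$, which is at least $L-D=\Omega(L)$, since a binary tree with $k$ leaves has $k-1$ internal nodes. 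Summing over all critical paths gives $|H|\ge \#\text{paths}\cdot\Omega(L)$. The parameters are chosen so that this exceeds $\tilde{O}(m)$ whenever $D\le n^{1/4}/\mathrm{polylog}$.

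\emph{Main obstacle.} The hard part is achieving property (ii) strongly enough, with pairwise intersections of length $O(1)$ rather than merely ``one shared vertex per layer pair'', while keeping the number of critical paths times $L$ superlinear in $m$ at $L=n^{1/4}$. I would first try the convex-direction grid construction with $\Theta(\sqrt{w})$ directions in two dimensions. The balancing of parameters should then be what pins the exponent at $1/4$, since certification upgrades the per-path cost from $\Omega(L/D)$ edges to $\Omega(L)$ edges. If exact uniqueness fails, I would relax it by ensuring that any path from the start to the end of a critical path stays within the critical path's ``tube''. The counting above is then done up to $\mathrm{polylog}$ losses, which explains the tildes.
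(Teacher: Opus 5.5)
Your certification argument is correct and is essentially the paper's \cref{lem:shortcut-size}. Unfolding a short $s$--$t$ path of $G\cup H$ through certifying edges never leaves the unique critical path $P$. The replaced edges are distinct, since they form a laminar family of subintervals of $P$. So at least $L-D$ shortcut edges spanning two or more edges of $P$ are needed, and these are private to $P$ when critical paths pairwise share at most one edge.

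The genuine gap is the construction, which is where the exponent $1/4$ actually comes from, and the one you propose to try cannot work. In a Hesse/Huang--Pettie convex-direction grid, an edge fixes both the direction and the position of the critical path through it. Critical paths are therefore edge-disjoint (\cref{lem:hp}), and $|\cP|\cdot L\le m$. Your count then yields a lower bound of only $|\cP|(L-D)\le m$, which cannot exclude certified shortcuts of size $m\log n$, let alone $\tilde{O}(m)$. No tuning of $L$, $w$, or the number of directions helps, because the obstruction is edge-disjointness itself. Used with $m=\tilde\Theta(n)$, that construction gives the $n^{1/3}$ bound of \cref{thm:lb-sc-n} for $\tilde O(n)$-size shortcuts, not the $m$-version. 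Relatedly, ``number of critical paths about $m$ with $L\approx n^{1/4}$'' is not what is needed, and no construction you cite achieves it. The real requirement is $|\cP|\cdot L\ge m\cdot\mathrm{polylog}(n)$.

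The missing idea is the alternation product, in the improved form of \cref{lem:lvwx} with $d=2$. Critical paths alternate between two direction vectors $\bv_1,\bv_2\in\cV(r)$ acting on different coordinate pairs of a $3$-dimensional grid. A single edge then fixes only one direction and can lie on up to $\Delta$ critical paths, one per choice of the other direction. Two consecutive edges fix both directions, so distinct critical paths share at most one edge; uniqueness still follows from strict convexity, chained across coordinates. This gives $|\cP|\cdot 2D=\Theta(m\Delta)$ with $n=\Theta(D^4r^3)$ and $\Delta=\Theta(r^{2/3})$. Choose $r$ so that $\Delta$ is polylogarithmic (the paper takes $r=n^{3\epsilon/2}$, $\Delta=n^{\epsilon}$ in \cref{lem:edge-size-lowerbound}). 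Then $D=n^{1/4}/\mathrm{polylog}(n)$, and your counting, applied verbatim, shows every certified $D$-shortcut has size at least $|\cP|\cdot D=\Omega(m\Delta)$. The extra grid dimension paid for the alternation is precisely what turns $1/3$ into $1/4$. Without this ingredient your proposal does not yet prove the theorem.
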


This theorem together with \cref{corr} completes the proof of \cref{thm:reach}.
We remark that \cref{thm:lb-sc-m} can be easily derived from the constructions of \cite{Hesse03,HuangP21} and an improved alternation product due to \cite{LuWWX22}.

Our notion of certifiability also naturally extends to (exact) hopsets, which generalize shortcuts, by additionally requiring the weight of an added edge $(u,v)$ to be equal to the total weights of its two certifying edges $(u,w),(w,v)$.
We similarly obtain improved lower bounds on hopbounds of certified hopsets.

\begin{theorem}
\label{thm:lb-hs-m}
There exists a family of unweighted undirected graphs with $n$ vertices and $m$ edges, such that any certified hopset of size $\tilde{O}(m)$ has hopbound $\tilde{\Omega}(n^{1/3})$.
\end{theorem}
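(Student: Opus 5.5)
The plan is to combine two things: a path system with unique shortest paths, in which certification makes every hopset edge pay for the whole stretch of a single path that it jumps over, and an alternation-product style amplification (in the spirit of \cite{LuWWX22}) that turns a lower bound against hopsets of size $\varepsilon m$ into one against $\tilde{O}(m)$.

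\textbf{Step 1 (base structure).} I will take an unweighted undirected graph $G$ with a path system $\cP$ of $p$ paths, each of length $L$. The system should satisfy three properties:
\begin{itemize}
\item every path in $\cP$ is the \emph{unique} shortest path between its endpoints;
\item any two paths in $\cP$ share at most one vertex;
\item $pL = \Theta(m)$.
\end{itemize}
Such systems are standard in the hopset lower bounds of \cite{HuangP21,HoppenworthXX25}. The undirected unweighted version is obtained from a layered or weighted construction by subdividing edges and by adding long ``guard'' paths so that weights become hop counts.

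\textbf{Step 2 (certification localizes edges).} Let $(u,v)\in H$ with $u,v$ on some $P\in\cP$, and suppose $(u,v)$ is used by a shortest path. Certification gives $w$ with $(u,w),(w,v)\in E\cup H$ and $\mathrm{wt}(u,v)=\mathrm{wt}(u,w)+\mathrm{wt}(w,v)$. Since every edge of $E \cup H$ has weight at least the distance between its endpoints, the triangle inequality forces $\mathrm{wt}(u,w)=\dist(u,w)$ and $\mathrm{wt}(w,v)=\dist(w,v)$. Hence $w$ lies on a shortest $u$--$v$ path, and by uniqueness $w$ lies on $P[u,v]$.

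Unrolling this recursively, each such edge has a certification tree whose leaves are exactly the base edges of $P[u,v]$, in order. If the internal nodes of this tree are distinct hopset edges, there are $|P[u,v]|-1$ of them. I will then argue on a maximal union $I$ of spans of hopset edges along $P$. The set of hopset edges whose endpoints both lie in $I$ has size at least $|I|-O(1)$: each internal split vertex of $I$ must be the endpoint of some certifying edge, and distinct split vertices give distinct edges.

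Conclusion: if $k_P$ hopset edges are ``internal'' to $P$, then the spans cover at most $O(k_P)$ vertices of $P$. Every shortest path between the endpoints of $P$ in $G\cup H$ must use only edges inside $P$ (by uniqueness), so it needs at least $L-O(k_P)$ hops. Because paths share at most one vertex, an edge is internal to at most one path, so $\sum_P k_P\le |H|$. Averaging, some path has $k_P\le |H|/p$, which gives hopbound $\Omega(L)$ whenever $|H|\le \varepsilon pL=\varepsilon' m$.

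\textbf{Step 3 (amplification).} Step 2 alone only handles hopsets of size $\varepsilon m$. To reach $\tilde{O}(m)$ I will apply the alternation product of \cite{LuWWX22}: an outer instance whose edges are replaced by copies of an inner instance. I then rerun Step 2 at each level. A hopset edge is either local to one inner copy, or it spans outer structure, in which case certification forces it to be built from a certified chain through inner copies. This yields a recursive ``cost-or-hops'' inequality that loses only constant or logarithmic factors per level. With $O(\log n)$ levels and parameters balanced to $L\approx n^{1/3}$ (the undirected path systems allow $p\approx n^{2/3}$ paths of length $n^{1/3}$ on $m\approx n$ edges), this should give hopbound $\tilde{\Omega}(n^{1/3})$ against size $\tilde{O}(m)$.

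\textbf{Main obstacle.} I expect the main obstacle to be Step 3. The alternation product must preserve unique shortest paths in the undirected unweighted setting, and the certification-localization argument of Step 2 must survive the product. In particular, I must ensure that a certifying vertex $w$ for an outer-level edge lies on the product path rather than inside a detour. My first attempt would be to take weighted gadgets with sufficiently separated scales, where inner weights are polynomially smaller than outer weights, so that any certifying split stays on the canonical path. I would then convert to unweighted by subdivision, checking that subdivision keeps both the size and the hop accounting within $\tilde{O}$ factors.
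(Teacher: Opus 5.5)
Your Steps 1--2 are sound and are essentially the paper's counting lemma (\cref{lem:hopset-size}). The weight-equality condition pins every certifying vertex to the unique shortest path, so shortening a critical path with $L$ edges to $L/2$ hops costs $L/2$ distinct hopset edges with both endpoints on it.

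The gap is Step 3, and the restriction you impose in Step 1 is what creates it. Paths that pairwise share at most one vertex are edge-disjoint, so $pL\le m$. Step 2 therefore can never rule out hopsets of size more than $O(m)$, whatever the parameters. Your remedy, a multi-level alternation product with a recursive ``cost-or-hops'' inequality, is not carried out, and as described it cannot reach $\tilde\Omega(n^{1/3})$:
\begin{itemize}
\item Constant-factor losses over $\Theta(\log n)$ levels compound to $n^{\Theta(1)}$, and logarithmic losses compound to $n^{\Theta(\log\log n)}$.
\item In alternation products each extra alternating direction adds a grid dimension. With $k$ of them the critical paths have length only about $n^{1/(k+1)}$.
\item Separating weight scales polynomially and then subdividing would inflate $n$ polynomially as well.
\end{itemize}

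The missing observation is that Step 2 never needed vertex-disjointness. It only needs any two critical paths to intersect in a subpath of at most one edge. A hopset edge that is an internal node of a certification tree along $P$ skips at least two consecutive edges of $P$. So if $P\cap P'$ contains at most one edge, its endpoints cannot both lie on $P'$, and each such edge is still charged to a single path.

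This weaker condition is exactly what a single alternation product with $d=2$ provides. The paper's \cref{lem:hs} adapts \cite{LuWWX22} to undirected unweighted graphs on $[2]\times[4Dr]^3$. It proves uniqueness of shortest paths directly via strict convexity and a linear functional, with no weights or subdivision. The construction has $n=\Theta(D^3r^3)$ vertices, $m=\Theta(n\Delta)$ edges, and $\Theta(n\Delta^2/D)$ critical paths with $2D$ edges each. The total path length is therefore $\Theta(m\Delta)$ rather than $\Theta(m)$.

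Your Step 2 argument then applies verbatim: every certified $D$-hopset has size at least $|\cP|\cdot D=\Theta(m\Delta)$. Choosing $\Delta=n^{\epsilon}$ polylogarithmic (i.e.\ $r=\Theta(n^{3\epsilon/2})$) gives hopbound $D=\Theta(n^{1/3-3\epsilon/2})=\tilde\Omega(n^{1/3})$ against all certified hopsets of size $\tilde O(m)$, with no recursion. The exponent $1/3$ comes from the three grid coordinates of that one product, not from balancing levels.
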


To put this bound in context, we remark that existing hopset lower bound results had a gap between the cases of
unweighted and weighted graphs.
Though not explicitly stated, it appears that the lower bound of \cite{BodwinH23}, combined with alternation products, implies a lower bound of $\tilde{\Omega}(n^{1/3})$ for the hopbound of $\tilde{O}(m)$-size hopsets in weighted graphs but left the case of unweighted graphs open.
Subsequently, \cite{HoppenworthXX25} proved a lower bound of $\tilde{\Omega}(n^{2/7})$ for unweighted graphs. The lower bounds in \cref{thm:lb-hs-m} proves the $\tilde{\Omega}(n^{1/3})$ bound for certified shortcuts in unweighted graphs.

The discussion so far has focused on the setting where  $\tilde{O}(m)$ edges are added, which are the largest amount of edges any efficient algorithm can afford to add. For completeness, we also give lower bounds on certified shortcuts and hopsets with $\tilde{O}(n)$ edges, even though these are less natural in the context of certified shortcuts. 

\begin{theorem}
\label{thm:lb-sc-n}
There exists a family of DAGs with $n$ vertices, such that any certified shortcut of size $\tilde{O}(n)$ has diameter $\tilde{\Omega}(n^{1/3})$.
\end{theorem}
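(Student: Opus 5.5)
We plan to prove the theorem by a path-counting argument on a layered grid-like DAG, in the spirit of the Hesse and Huang--Pettie constructions, and to use certification to force every shortcut edge to "live" on a single critical path.

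\textbf{Construction.} We take a layered DAG $G$ with $L$ layers and a family $\cP$ of vertex-disjoint-in-edges critical paths. The intended parameters are $L \approx n^{1/3}$ and $|\cP| \approx n^{2/3}$, so that $\sum_{P \in \cP} |P| \approx n$. The paths are generated geometrically, as in Hesse's construction: vertices are lattice points of a layer, and paths follow direction vectors taken from a strictly convex set. This gives three properties:
\begin{enumerate}
\item each path $P$ is the unique path between its endpoints;
\item any two paths share at most one vertex;
\item every edge of $G$ lies on exactly one path of $\cP$.
\end{enumerate}

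\textbf{Key structural lemma.} We want to show that in a certified shortcut $H$ every edge $(u,v) \in H$ connects two vertices of a common path $P \in \cP$ and can be attributed to that $P$. The argument is by induction along the 2-hop order of \cref{lem:equi-seq-sc}. An edge $(u,v)$ is certified by $(u,w),(w,v)$, and by induction these are attributed to paths $P_1$ and $P_2$.

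If $P_1 \neq P_2$, we would like to say that $(u,v)$ is still a legitimate edge but is "useless", in the sense that no critical path between its endpoints benefits from it. This is where the choice of $G$ matters: we need every pair $u \leadsto v$ to be connected only along a unique critical path, in a convexity-based construction where reachability beyond critical paths is limited. Rather than proving uselessness, I would attribute such a mixed edge to \emph{no} path, and then show that any $(s,t)$-path in $G\cup H$ between the endpoints of a critical path $P$ uses only edges attributed to $P$. This relies on the uniqueness of critical paths. It then follows that $H$ restricted to $P$ behaves like a certified shortcut of a single path.

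\textbf{Single-path bound.} For one path of length $\ell$ we need the following: a certified shortcut of a path in which every edge has two parents on that path, and which achieves hop-diameter $D$, needs about $\ell/D$ edges. This is essentially trivial from a cut-covering argument, giving $\Omega(\ell/D)$ edges per path.

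\textbf{Counting.} Summing over the paths gives $|H| \gtrsim \sum_P |P|/D \approx n/D$. This alone is too weak. To reach $D \approx n^{1/3}$ we must exploit certification further. A shortcut edge of span $s$ on $P$ forces the existence of edges of spans summing to $s$ below it, and hence a whole binary derivation tree on $P$ of size about $s/\text{(base span)}$. To cut the diameter of $P$ to $D$ using only $k$ edges on $P$, we need spans of about $\ell/D$, and the derivation trees cost about $\ell$ edges in total when the edges are disjoint. Thus each path needs roughly $\ell$ edges unless $D \gtrsim \ell$. I would make this precise by showing that any certified shortcut of a single path $P$ with hop-diameter $D<|P|/2$ has at least $\Omega(|P|/\log)$ edges. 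Then $|H| \gtrsim \sum_P |P| = |\cP| \cdot L$, and we choose the construction so that $|\cP|\, L \gg n\,\mathrm{polylog}$ whenever $D < L/2$. With $|\cP| \cdot L \approx n^{4/3}$ under $L=n^{1/3}$, this is feasible because paths share vertices.

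\textbf{Main obstacle.} I expect the hardest step to be making the attribution lemma rigorous. Concretely, we must rule out that mixed-parent edges (those with $P_1 \ne P_2$) create useful shortcuts for some third path. I would first try to prove that $u$ reaches $v$ only via the unique chain, which holds in Hesse-type constructions.
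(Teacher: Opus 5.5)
Your certification mechanism is the right one (it is exactly the paper's key lemma), but the counting step fails because you never exhibit a graph with the parameters that mechanism needs. The argument yields $|H|\ge \tfrac12\sum_{P\in\mathcal P}|P|=\tfrac12|\mathcal P|\,L$ for every certified shortcut of diameter at most $L/2$. So to exclude all shortcuts of size $\tilde O(n)$ you need $|\mathcal P|\cdot L\ge n\,\mathrm{polylog}(n)$ with $L=\tilde\Omega(n^{1/3})$.

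Your construction paragraph fixes $|\mathcal P|\approx n^{2/3}$ and $L\approx n^{1/3}$, so $\sum_P|P|\approx n$. Since your paths partition $E$, this also means $m\approx n$. That only rules out shortcuts of size below $cn$. In fact, no graph with $m=O(n)$ and edge-disjoint critical paths can give more via this argument.

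Your later claim that $|\mathcal P|\cdot L\approx n^{4/3}$ is achievable at $L=n^{1/3}$ contradicts that setup and is unsupported. In the convex-grid construction with two-dimensional layers (\cref{lem:hp} with $d=2$) one has $n=\Theta(L^3r^2)$ and $|\mathcal P|\,L=\Theta(n\Delta)$ with $\Delta=\Theta(r^{2/3})$. So $L=n^{1/3}$ forces $r=O(1)$ and $|\mathcal P|\,L=\Theta(n)$.

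The missing idea is to trade a polylogarithmic factor in the number of layers for a polylogarithmic factor in path multiplicity. Choose $r$ with $r^{2/3}=\log^{c+1}n$. Then $|\mathcal P|\,L=\Theta(n\log^{c+1}n)$ beats any budget $n\log^c n$, while $L=\Theta(n^{1/3}/\log^{c+1}n)=\tilde\Omega(n^{1/3})$. This is why the theorem carries a tilde, and it is precisely the paper's proof (\cref{lem:node-size-lowerbound} with $n^{\epsilon}$ polylogarithmic).

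Separately, the step you flag as the main obstacle is not one, and you need neither the attribution induction nor the $\log$ loss. Let $P'$ be an $s$--$t$ path in $G\cup H$ with $|P'|\le |P|/2$, and repeatedly replace an edge $(u,v)\notin E$ by its certifying pair $(u,w),(w,v)$. Because $H$ lies in the transitive closure, $s$ reaches $t$ through $w$ in $G$. Uniqueness of $P$ then forces $w\in P$, strictly between $u$ and $v$. Each step inserts a new vertex of $P$, so the replaced edges are pairwise distinct. The process runs exactly $|P|-|P'|\ge |P|/2$ steps before reaching $P$ itself. Equivalently, your derivation trees over the edges of $P'$ have disjoint spans and $\sum_i (s_i-1)=|P|-|P'|$ internal nodes.

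Each replaced edge joins two non-adjacent vertices of $P$. Since two critical paths share at most one vertex, each such edge is charged to $P$ alone. ``Mixed'' edges never arise in such an unfolding, and the $\Omega(\ell/D)$ cut-covering bound plays no role.
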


\begin{theorem}
\label{thm:lb-hs-n}
There exists a family of unweighted undirected graphs with $n$ vertices, such that any certified hopset of size $\tilde{O}(n)$ has hopbound $\tilde{\Omega}(n^{1/2})$.
\end{theorem}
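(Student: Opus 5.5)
The plan is a counting argument. Certification should force every hopset edge that helps a demand path to be ``private'' to that path. A graph with many long, unique, nearly disjoint shortest paths then makes $\tilde{O}(n)$ edges insufficient.

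\emph{Step 1 (structure of certified hopset edges).} Let $H$ be a certified hopset. I would show by induction, in the insertion order given by \cref{lem:equi-seq-sc}, that every $(u,v)\in H$ has a $G$-path of weight $w(u,v)=\dist_G(u,v)$ whose vertices are the leaves of a binary ``derivation tree''. In this tree every internal node is an edge of $H$, certified by its two children. The weight condition $w(u,v)=w(u,w)+w(w,v)$ forces the middle vertex $w$ onto a shortest $u$--$v$ path. So if shortest paths in $G$ are unique, each derivation tree lives inside the unique shortest $u$--$v$ path $\pi(u,v)$, and every hopset edge used there spans a subpath of $\pi(u,v)$. A binary tree whose leaves are the $s+1$ vertices of a spanned subpath has at least $s-1$ internal nodes. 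Hence an edge of $H$ spanning a subpath with $s$ edges requires at least $s-1$ distinct edges of $H$, all spanning subpaths of that same subpath.

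\emph{Step 2 (lower bound per demand path).} Suppose the demand pair $(a,b)$ has a unique shortest path $P$ with $L$ edges, and $G\cup H$ contains a shortest $a$--$b$ path with at most $\beta$ hops. That path must follow $P$, and its hopset edges span disjoint subpaths of total length at least $L-\beta$. Summing Step 1 over these subpaths, at least $L-2\beta$ distinct hopset edges span subpaths of $P$ with at least two edges. Now assume the demand paths pairwise share at most one vertex. Then such an edge spans a subpath of at most one demand path, so these edge sets are disjoint across demand paths. With $p$ demand paths, $|H|\ge p(L-2\beta)$. Taking $\beta=L/4$, we get $|H|\ge pL/2$.

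\emph{Step 3 (the graph, the main obstacle).} It remains to build an unweighted undirected graph on $n$ vertices with the following properties:
\begin{itemize}
\item $p$ demand pairs whose shortest paths are unique;
\item these paths pairwise intersect in at most one vertex;
\item each path has length $L=\tilde{\Omega}(n^{1/2})$;
\item $pL\ge n\cdot\omega(\mathrm{polylog}\,n)$, beating any fixed $\tilde{O}(n)$ budget.
\end{itemize}
I would first try a layered construction, in which edges join only consecutive layers. This makes every layer-monotone path between the first and last layer a shortest path. Demand paths are straight lines whose directions come from a strictly convex set $\{(i,i^2)\}$, so Jensen's inequality forces a sum of $L$ direction vectors equal to $L$ copies of one direction to use only that direction. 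This gives uniqueness, and lines with distinct directions meet at most once.

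The difficulty is the budget. With two-dimensional layers the count yields $pL/n$ equal to the number of directions, but $L$ only about $n^{1/4}$. Reaching $n^{1/2}$ needs thinner layers, where uniqueness by convexity breaks. I would therefore try the weighted $\tilde{\Omega}(n^{1/2})$ constructions for $O(n)$-size hopsets (e.g.\ those behind \cite{BodwinH23}). I would replace weights by an unweighted gadget, such as short subdivided paths or a small alternation-product-style layering in the spirit of \cite{LuWWX22}, that preserves uniqueness of the demand shortest paths and costs only polylogarithmic blowup. This is the step I expect to be hardest, and where the concrete parameters would have to be tuned.
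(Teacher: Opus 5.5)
Your Steps 1 and 2 are sound. They amount to the paper's \cref{lem:hopset-size}: your derivation-tree count replaces the paper's repeated replacement of a hopset edge by its two certifying edges. One minor fix is needed there. Since $G\cup H$ is undirected and not a DAG, you should induct on edge weight rather than cite \cref{lem:equi-seq-sc}; positivity of weights makes certifying edges strictly lighter.

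The genuine gap is Step 3, which is where the whole $n^{1/2}$ comes from, and you leave it open. The route you sketch is also not promising. Going through the weighted construction of \cite{BodwinH23} plus an unspecified unweighted gadget runs into exactly the weighted-versus-unweighted gap the paper notes was open. You also give no argument that subdivision or an alternation product keeps the demand shortest paths unique with only polylogarithmic blowup. Your diagnosis that the layers must become ``thinner'' is the wrong turn; what should be dropped is the layering itself.

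The missing idea is that a hopset only needs unique \emph{shortest} paths, so an unlayered construction works:
\begin{itemize}
\item \textbf{Graph.} Take the single grid $[4Dr]^2$ and join each $\bu$ to $\bu+\bv$ for every $\bv\in\cV(r)$. This gives $n=\Theta(D^2r^2)$ and $\Delta=|\cV(r)|=\Theta(r^{2/3})$ by \cref{lem:ball-size}.
\item \textbf{Uniqueness.} Any path, with edges traversed in either direction, takes steps in $\cV(r)\cup(-\cV(r))\subseteq\cB(r)$. Each $\bv\in\cV(r)$ is a vertex of the convex hull of $\cB(r)$, so it is the unique maximizer over $\cB(r)$ of some linear functional, which is positive at $\bv$ because $\mathbf{0}\in\cB(r)$. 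Hence a path of at most $D$ edges from $\bu$ to $\bu+D\bv$ must use $\bv$ at every step, so the straight segment is the unique shortest path.
\item \textbf{Direction set.} This is also why you want $\cV(r)$ rather than your parabola $\{(i,i^2)\}$. Once backward steps matter, $(i,i^2)$ with $i$ small compared to the largest index is no longer extreme among $\pm\{(j,j^2)\}$.
\item \textbf{Counting.} Greedily packing edge-disjoint straight segments of $D$ edges in every direction gives $\Theta(n\Delta/D)$ critical paths, any two sharing at most one vertex. Your Step 2 then forces size $\Omega(n\Delta)=\Omega(nr^{2/3})$ for any certified hopset with hopbound at most $D/4$.
\item \textbf{Parameters.} Choose $r$ to be a large enough polylogarithm that $r^{2/3}$ beats the $\tilde O(n)$ budget. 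Then the hopbound is $\Omega(D)=\Omega(n^{1/2}/r)=\tilde\Omega(n^{1/2})$.
\end{itemize}
This is exactly the $d=1$ case of \cref{lem:hs}, which the paper uses to prove \cref{lem:lb-hs-n}.
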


\subsubsection{Certification Complexity}

As a final main contribution, this paper shows that one can use the concept of certified shortcuts to define a complexity measure $\mathcal{C}(H,G)$ that gives some indication on how fast one can hope to construct $H$. 
This complexity measure $\mathcal{C}(H,G)$ is purely structural, i.e., it does not involve and is independent of the procedure or algorithm that produced $H$. It is also simple enough that it can be argued about formally. We believe this concrete measure is a useful indicator and provides valuable guidance towards assessing whether any given shortcut construction might have an efficient implementation. We give some evidence for this belief by formally proving that our complexity measure perfectly predicts or at least coincides with the current landscape of results on the efficiency of all known shortcut constructions~\cite{UllmanY91,BermanRR10,Fineman18,JambulapatiLS19,KoganP22,KoganP23,bals2025greedy}. 

Concretely, we define the \emph{certification complexity} $\mathcal{C}(H,G)$ of a shortcut $H$ in a DAG $G$ to be the smallest size of any certified shortcut $H'$ in $G$ that contains $H$:

\begin{definition}
\label{def:cert-comp-sc}
$\mathcal{C}(H,G) = \min\left\{|H'| \ \mid \ H' \supseteq H\ \text{is a certified shortcut in }G\right\}.$
\end{definition}

The name certification complexity comes from the following observation: A certified shortcut $H'$ that contains a given shortcut $H$ can be seen as a natural and locally checkable certificate for the validity of $H$. In particular, it suffices for any edge in a certified shortcut $H'$ to look at only its adjacent edges in $G \cup H'$ to detect a violation. The certification complexity of a shortcut $H$ is therefore exactly the size of the smallest locally checkable certificate $H'$ in this form that proves the validity of $H$. 

Alternatively, using the equivalences explained in \cref{lem:equi-seq-sc} and \cref{lem:lem-two-steps}, one can also think of $\mathcal{C}(H,G)$ as exactly the number of steps a certified shortcutting procedure requires to construct (a super set of) $H$, up $\tilde{O}(1)$ factors.

It is now natural to informally say that a shortcut $H$ in a graph $G$ is $\mathcal{C}$-indicated to be ''efficiently constructible'' if its certification complexity $\mathcal{C}(H,G)$ is comparable to the size of $G$ and $\mathcal{C}$-indicated to be ''not efficiently constructible'' otherwise. Similarly, any (randomized) shortcut construction, which given a graph $G$ defines a shortcut $H(G)$ (or a distribution over such shortcuts), can be formally classified based on the complexity measure $\mathcal{C}$ as follows:

We call a construction \emph{certified-shortcut efficient} if for every graph $G$, the construction (with high probability) produces shortcuts with certification complexity $\tilde{O}(m)$. Conversely, we call a construction \emph{provably certified-shortcut inefficient} if there exists a family of graphs such that the construction with high probability produces a shortcut with certification complexity $m^{1+\Omega(1)}$ for every graph in this family. 

We prove in \cref{sec:existing} that this classification perfectly matches the state of the art regarding what shortcut constructions have a known efficient implementation. 

\begin{theorem}[Informal]
\label{thm:existing}
Every known shortcut construction that has a known sequential/parallel algorithm with near-linear time/work~\cite{Fineman18,JambulapatiLS19,KoganP23,bals2025greedy} is certified-shortcut efficient, i.e., provably always constructs shortcuts with certification complexity $\tilde{O}(m)$.

Conversely, every known shortcut construction for which no  efficient implementation is known, i.e., \cite{UllmanY91,KoganP22,BermanRR10,bals2025greedy}, is provably certified-shortcut inefficient. 
\end{theorem}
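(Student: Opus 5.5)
The plan splits the theorem into its two directions and treats each construction separately. Both directions rest on \cref{lem:equi-seq-sc} and \cref{lem:lem-two-steps}. By \cref{lem:lem-two-steps}, to show that a construction is certified-shortcut efficient it suffices to show that its output $H$ can be produced, or contained in a set produced, by a certified shortcutting procedure using $\tilde{O}(m)$ steps. Conversely, to show provable inefficiency I will lower bound $\mathcal{C}(H,G)$ directly through a structural consequence of \cref{def:cert-sc}.

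\emph{Efficient side.} The constructions of \cite{Fineman18,JambulapatiLS19,KoganP23,bals2025greedy} all add edges between a pivot $p$ and vertices found by (possibly hop- or distance-limited) searches from $p$ in $G$ or in the current $G \cup H$. The basic gadget is the following. Suppose a search from $p$ produces a search tree $T$ with $|T|$ edges. Then all edges $(p,v)$ for $v \in T$ can be made certified with exactly $O(|T|)$ shortcut edges: process $T$ in BFS order, so that $(p,v)$ is certified by $(p,\mathrm{parent}(v))$ and $(\mathrm{parent}(v),v)$. The symmetric gadget handles in-searches and edges $(v,p)$. For each construction I would therefore check two things. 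First, every shortcut edge it adds is of this pivot-star form; if it is not, it should be a composition of two star edges $(u,p),(p,v)$, which is itself certified. Second, the total size of all search trees is bounded by the known $\tilde{O}(m)$ work bound of its near-linear implementation. The analyses for these algorithms bound exactly this search volume, for example via the $\tilde{O}(1)$ expected number of times a vertex is visited in \cite{Fineman18,JambulapatiLS19}, so this amounts to re-reading their work analyses with certificate edges charged to search steps.

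\emph{Inefficient side.} The key lemma I would prove first is the following. If $H'$ is certified and $(s,t) \in H'$, then by induction on the topological distance, as in the proof of \cref{lem:equi-seq-sc}, there is an $s$-to-$t$ path in $G \cup H'$ all of whose vertices other than $s$ are out-neighbours of $s$ in $G \cup H'$. Hence $\mathcal{C}(H,G) \ge \sum_s |N^+_{G\cup H'}(s)|$, where for each $s$ the set $N^+_{G \cup H'}(s)$ must contain, for every target $t$ of $s$ in $H$, a path ending at $t$. The path may use $H'$-edges, but those edges are themselves charged to other sources.

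For each of \cite{UllmanY91,BermanRR10,KoganP22,bals2025greedy} I would then construct a family of graphs in which the construction, with high probability, adds many edges $(s,t)$ between sampled vertices that are far apart. In this family every $s$-to-$t$ path must use many distinct intermediate vertices unless $H'$ already contains many shortcuts. For \cite{UllmanY91}, this means connecting all reachable pairs among $\tilde{\Theta}(\sqrt n)$ sampled vertices on long disjoint-ish paths, for instance a grid or a Hesse-type layered graph. Combining the above lemma with a counting argument should then give $m^{1+\Omega(1)}$.

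The main obstacle is this last step. A single out-neighbourhood can be small because it reuses $H'$-edges whose cost is charged elsewhere. To avoid circularity, the charging must be global: for instance, argue via a diameter-type lower bound (like \cref{thm:lb-sc-m}) on the graph induced by the relevant region, showing that a small $H'$ cannot make all required targets reachable inside the stars. I would try first to choose the graph families so that the required $(s,t)$ pairs form a set of paths with unique connecting paths and little overlap. In that case each certificate path contributes edges that are private to a single source, and the sum becomes a straightforward count.
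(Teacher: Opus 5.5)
Your proposal has a genuine gap on each side. \textbf{Efficient side.} The plan covers only half of the constructions. For \cite{Fineman18,JambulapatiLS19} your BFS-order gadget is exactly \cref{lem:intree-outtree-certified}. There the output is already certified, so you do not need to charge anything to the work bound.

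But \cite{KoganP23} and \cite{bals2025greedy} are not pivot/search algorithms. They solve a min-cost flow on a split-vertex gadget graph, decompose it into paths $P_1,\dots,P_\ell$, read off chains, and add transitive-closure-spanner edges on each chain. Consecutive chain vertices are joined only by segments of flow paths. There is no search whose volume could pay for certificate edges, and the total flow-path length can exceed $m$ by a large factor. The real task, certifying the chain edges cheaply, is therefore untouched.

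The paper handles the two algorithms separately:
\begin{itemize}
\item For \cite{KoganP23}, the modified gadget (unit-capacity edges of cost $-2$ plus infinite-capacity edges of cost $1$) forces total flow-path length $O(n)$. Applying \cref{lem:certified-shortcut-subpaths} to every $P_i$ then gives a certified superset of size $\tilde{O}(n)$.
\item For \cite{bals2025greedy}, the paths exist only implicitly inside the treap-based extraction. One defines \emph{important} vertices, where index $i$ is the treap root or its subtree/parent changes. Their number is bounded by the $O(m\log n)$ treap work. Consecutive important vertices of index $i$ are certified inductively through the treap parent $j<i$, in priority order. This is why that bound is $\tilde{O}(m)$ rather than $\tilde{O}(n)$.
\end{itemize}

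\textbf{Inefficient side.} Your key lemma is vacuous as stated: the edge $(s,t)\in H'$ is itself such a path. Requiring the path to lie in $G$ makes it false. For $G=a\to b\to c\to d$ and the certified $H'=\{(b,d),(a,d)\}$, we have $c\notin N^+_{G\cup H'}(a)$.

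You correctly fall back on unique, nearly edge-disjoint certificate paths, which is the argument of \cref{lem:shortcut-size}. However, you give no graph family in which the constructions are forced to shortcut such pairs, and your suggestion does not work. Two random vertices of a grid or Hesse-type graph are typically joined by many paths, so certificates can be shared. Unique paths exist only for a sparse set of designated pairs that random samples rarely hit.

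The missing idea is a graph whose critical pairs every construction must shortcut. The paper builds this in \cref{lem:rp} as an obstacle product: a $3$-layer outer graph from \cite{HuangP21} with the alternation-product graph of \cite{LuWWX22} as inner graph. Its first and last layers have only $n^{3/7+\Theta(\epsilon)}\le n^{1/2}$ vertices. Yet they support $\Theta(n^{1+2\epsilon}/D)$ unique critical paths of length $D=n^{1/7-\Theta(\epsilon)}$, any two sharing at most one edge.

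The paper then attaches gadgets to those layers that force each construction onto the critical pairs:
\begin{itemize}
\item For \cite{UllmanY91}: $\Omega(\sqrt{n})$ pendant vertices per endpoint, so sampling hits every endpoint w.h.p.\ and an edge is added across every critical pair.
\item For the path sampling of \cite{KoganP22} and its greedy version in \cite{bals2025greedy}: disjoint copies of the graph threaded by auxiliary paths.
\item For the potential-greedy of \cite{BermanRR10}, which is deterministic so no sampling argument applies: pendant paths of length $\Omega(\sqrt{n})$, so the greedy must join far auxiliary vertices of both endpoints.
\end{itemize}
Only with these gadgets does the count $|\cP|\cdot\Omega(D)=\Omega(n^{1+2\epsilon})=\Omega(mn^{\epsilon})$ go through.
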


\subsection{Prior Work}
\label{sec:related}

\paragraph{Shortcuts.}

The study of shortcuts dates back to \cite{UllmanY91}, which provides a simple argument, based on vertex sampling, showing the existence of $\cO(n^{1/2})$-shortcuts of size $\tilde{\cO}(n)$.
This folklore proof remains essentially the best known upper bound for decades,
until a major breakthrough due to \cite{KoganP22} significantly improves the upper bound by constructing $\cO(n^{1/3})$-shortcuts of size $\tilde{\cO}(n)$.
Their result crucially relies on sampling of both vertices and paths.

The use of random sampling is first avoided using greedy approaches in \cite{BermanRR10}, which eliminates the logarithmic factors in the guarantees of \cite{UllmanY91}.
Very recently, building upon this, \cite{bals2025greedy} presents a similar improvement on the logarithmic factors over \cite{KoganP22} also using greedy approaches.
In fact, they further give a much more sophisticated analysis of \cite{BermanRR10} showing that it actually achieves the same guarantee of \cite{KoganP22} up to logarithmic factors.

Substantial effort has been made to speed up the computation of shortcuts.
Specifically, improving upon \cite{KoganP22fast}, \cite{KoganP23,bals2025greedy} demonstrate almost-linear-time algorithms for computing $\cO(n^{1/2})$-shortcuts using flow algorithms.

In another concurrent line of work, \cite{Fineman18} adapts the vertex-based approach of \cite{UllmanY91} and designs a work-efficient parallel algorithm for computing shortcuts in $\tilde{\cO}(n^{2/3})$ depth.
The depth and diameter bound is later improved to $n^{1/2+o(1)}$ by \cite{JambulapatiLS19}.
Unfortunately, it is unclear how to extend these frameworks to efficiently implement the path-based approach of \cite{KoganP22}.
In fact, assuming standard fine-grained complexity assumptions, any straightforward implementation would require strictly superlinear time even in the sequential setting \cite{KoganP23}.

On the lower bound side, \cite{Hesse03} constructs a family of layered DAGs, the diameter of which cannot be reduced below $\Omega(n^{1/17})$ by any $\cO(m)$-size shortcuts.
In particular, it refutes a conjecture by \cite{Thorup92} asserting the existence of linear-size shortcuts with polylogarithmic diameters for general graphs.
Later on, \cite{HuangP21} and subsequently \cite{LuWWX22} strengthen the diameter lower bound to $\Omega(n^{1/8})$ via better alternation products.
All these constructions work with grid points in low dimensional Euclidean spaces and rely on a classical result by \cite{barany1998convex} which guarantees the existence of large convex sets.
Currently, the state-of-the-art $\Omega(n^{2/9})$ lower bound is proved by \cite{HoppenworthXX25} using new constructions without the use of such convex sets.

Regarding $\cO(n)$-size shortcuts, the best known diameter lower bound is $\tilde{\Omega}(n^{1/4})$ due to \cite{BodwinH23}, building upon \cite{HuangP21}.
\cite{WilliamsXX24} reproduces (and slightly improves by eliminating the logarithmic factors) the same result using constructions that later inspired \cite{HoppenworthXX25}.

\paragraph{Hopsets.}

As hopsets are also shortcuts, all lower bounds for shortcuts naturally extend to hopsets.
Additionally, \cite{HoppenworthXX25} shows that no $\cO(m)$-size hopsets can reduce hopbounds below $\Omega(n^{2/7})$, an improvement over the $\Omega(n^{2/9})$ lower bound for shortcuts.
The proof is by observing that uniqueness of shortest paths, rather than arbitrary paths, between critical pairs is sufficient.

On the other hand, edge sets constructed by shortcut algorithms are not necessarily hopsets.
In fact, the folklore sampling algorithm \cite{UllmanY91} and the original greedy approach \cite{BermanRR10} turn out to be essentially the only known shortcut algorithms, regardless of efficiency, that also produce hopsets.
The same guarantee of $\cO(n^{1/2})$ hopbound holds.

It is also worth noting that for hopsets, whether or not the given graph is weighted also makes a difference, unlike in the case of shortcuts.
While all aforementioned lower bound constructions are unweighted, \cite{BodwinH23} shows that \cite{UllmanY91,BermanRR10}, which also work on weighted graphs, are essentially optimal for weighted graphs.
Specifically, there exists a family of weighted graphs for which hopbounds cannot be reduced below $\tilde{\Omega}(n^{1/2})$ using any $\cO(n)$-size hopset.
Though not explicitly stated, via alternation products, it also appears to imply an $\tilde{\Omega}(n^{1/3})$ hopbound lower bound for $\cO(m)$-size hopsets.
Very recently, \cite{bals2025greedy} proves that the greedy approach of \cite{BermanRR10} is existentially optimal in terms of both $n$ and $m$ for hopsets in weighted graphs.

A summary of our new results and best prior bounds can be found in \cref{tab:lb,tab:ub}.

\begin{table}[tp]
    \centering
    \begin{tabular}{c|c|c|c}
        \hline
         & $|H|$ & $D$ & \\
        \hline
        \multirowcell{4}{Shortcut} & \multirowcell{2}{$\cO(n)$} & $\Omega(n^{1/4})$ & \cite{BodwinH23,WilliamsXX24} \\
        \cline{3-4}
         & & $\Omega(n^{1/3})$ & \textbf{Our Paper}$^+$ \\
        \cline{2-4}
         & \multirowcell{2}{$\cO(m)$} & $\Omega(n^{2/9})$ & \cite{HoppenworthXX25} \\
        \cline{3-4}
         & & $\Omega(n^{1/4})$ & \textbf{Our Paper}$^+$ \\
        \hline
        \multirowcell{5}{Hopset} & \multirowcell{2}{$\cO(n)$} & $\tilde{\Omega}(n^{1/2})$ & \cite{BodwinH23}$^*$ \\
        \cline{3-4}
         & & $\Omega(n^{1/2})$ & \textbf{Our Paper}$^+$ \\
        \cline{2-4}
         & \multirowcell{3}{$\cO(m)$} & $\tilde{\Omega}(n^{1/3})$ & \cite{BodwinH23}$^*$ \\
        \cline{3-4}
         & & $\Omega(n^{2/7})$ & \cite{HoppenworthXX25} \\
        \cline{3-4}
         & & $\Omega(n^{1/3})$ & \textbf{Our Paper}$^+$ \\
        \hline
    \end{tabular}
    \caption{Summary of our new results and best prior lower bounds ($^*$: weighted; $^+$: certified).}
    \label{tab:lb}
\end{table}

\begin{table}[tp]
    \centering
    \begin{tabular}{c|c|c|c}
        \hline
         & $|H|$ & $D$ & \\
        \hline
        \multirowcell{3}{Shortcut} & \multirowcell{4}{$\tilde{\cO}(n)$} & $n^{1/2+o(1)}$ & \cite{JambulapatiLS19}$^{+\diamond}$ \\
        \cline{3-4}
         & & $\cO(n^{1/2})$ & \cite{KoganP23,bals2025greedy}$^{+}$ \\
        \cline{3-4}
         & & $\cO(n^{1/3})$ & \cite{KoganP22,bals2025greedy}$^\dagger$ \\
        \cline{1-1}\cline{3-4}
        Hopset & & $\cO(n^{1/2})$ & \cite{UllmanY91,BermanRR10}$^\dagger$ \\
        \hline
    \end{tabular}
    \caption{Summary of best prior upper bounds ($^+$: certified; $^\dagger$: existential; $^\diamond$: parallel).}
    \label{tab:ub}
\end{table}

\subsection{Discussion}
\label{sec:discuss}

\paragraph{Comparison of constructions.}

We note that proofs of our improved diameter lower bounds for certified shortcuts can be concisely derived in a black-box way from the earliest construction, introduced in \cite{Hesse03} and later refined by \cite{HuangP21}, based on large convex sets, together with an improved alternation product due to \cite{LuWWX22}.

At a very high level, most known lower bound constructions for shortcuts are layered graphs where each layer consists of integral grid points in constant (either $2$ or $3$ to be exact) dimensions\footnote{Higher dimensions are studied in only a few prior works (e.g., \cite{Hesse03,WilliamsXX24}) when considering other related settings such as $\cO(n^{2-\epsilon})$-size shortcuts.}.
The general goal is to pack as many long paths as possible into the graph such that each path is the unique path between its endpoints and does not overlap too much with other paths\footnote{Some known constructions (e.g. \cite{WilliamsXX24,HoppenworthXX25}) allow long overlaps between a small number of critical paths. Nevertheless, such approach does not change the main ideas discussed here.}.
Each such path is called a critical path and is defined by its starting vertex and a critical direction, which specifies the exact edge to traverse in each layer.

These constructions are typically parameterized by two numbers $D$ and $r$.
The depth $D$ corresponds to the number of layers in the graph, which is equivalent to the number of edges in each critical path and thus should be maximized.
Meanwhile, $r$ typically controls the range of the critical directions.
In general, the goal is to pack a superlinear number of critical paths so that no linear-size shortcut can reduce the length of every critical path.
Due to the layered nature of the graphs, the number of starting vertices is smaller than the total number of vertices by a factor of $D$.
This has to be compensated by the number of critical directions, which is polynomial in $r$.
As a result, it demands that $r$ has a polynomial dependency on $D$.
To some extent, all later improvements can be viewed as optimizing this relation between $r$ and $D$ via more complicated constructions.

The most notable benefit of working with certified shortcuts is that such optimization is no longer required.
Indeed, in order to reduce the length of a critical path from $D$ to $D/2$, any certified shortcut has to add at least $D/2$ new edges.
Consequently, we gain a factor of $\Theta(D)$ for free, and it suffices to set $r$ to be any large enough constant.

It also seems plausible that our lower bounds are the best possible with current techniques.
Each layer needs to be at least $2$-dimensional to allow for any nontrivial set of critical directions, and an additional dimension is required for layering.
This suggests the demand of significantly new techniques to obtain diameter lower bounds better than $\Omega(n^{1/3})$ for $\cO(n)$-size shortcuts.
Regarding $\cO(m)$-size shortcuts, all known constructions crucially rely on alternation products to sparsify the graph, which unavoidably bring in another dimension.
As a result, $\Omega(n^{1/4})$ also appears to be a tough barrier to overcome with current techniques.
In fact, without the notion of certifiability, even achieving these lower bounds of $\Omega(n^{1/4})$ and $\Omega(n^{1/3})$ for the two settings seems out of reach due to $r$'s polynomial dependency on $D$.

\paragraph{Parallel constructions.}

We have seen that affirmative results on the existence of certified shortcuts could be valuable guidance in algorithm design, as \cref{lem:equi-seq-sc} means that any (almost-)linear-size certified shortcut has an efficient implementation (although only a nondeterministic one).
Recall that one major application of shortcuts is in parallel directed reachability.
One may wonder if the requirement of parallel constructions could impose more restrictive structural constraints and potentially lead to even stronger impossibility results.
This turns out not to be the case.
In \cref{sec:low-depth-certificates}, we show that certified shortcuts can always be constructed by parallel $2$-hop certified shortcutting procedures in logarithmic depth, while suffering only a logarithmic factor increase in the shortcut size.

%% file: prelim.tex
\section{Preliminaries}
\label{sec:prelim}

\paragraph{Notation.}

For $n \in \BN$, we use $[n]$ to denote the set of integers $\set{1,\dots,n}$.
$\langle \cdot, \cdot \rangle$ denotes inner products.
Bold letters are reserved for vectors.

For $r \ge 0$, let $\cB(r)$ be the set of all $2$-dimensional integral points within Euclidean distance $r$ to the origin, and $\cV(r)$ be the set of all vertices on the convex hull of $\cB(r)$ and with positive coordinates.
We use the following classical result by \cite{barany1998convex}.

\begin{lemma}[\cite{barany1998convex}]
\label{lem:ball-size}
It holds that $|\cV(r)|=\Theta(r^{2/3})$.
\end{lemma}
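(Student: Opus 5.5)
The lemma asserts two bounds, and I would prove them separately. The upper bound $|\cV(r)| = O(r^{2/3})$ is the classical fact that a convex lattice polygon inside a disk has few vertices. The lower bound $|\cV(r)| = \Omega(r^{2/3})$ is the genuinely nontrivial half, since it needs to show that the hull of $\cB(r)$ really bends at many lattice points. Restricting to positive coordinates only costs a constant factor: by the symmetry of $\cB(r)$ under the dihedral group of the square lattice, each of the four open quadrants contains the same number of hull vertices, and only $O(1)$ vertices lie on the axes.

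For the upper bound, I would list the vertices of $\mathrm{conv}(\cB(r))$ in cyclic order. The edge vectors are then nonzero integer vectors with pairwise distinct directions, because the angles are strictly increasing around a convex polygon. Their total length is the perimeter, which is at most $2\pi r$. On the other hand, the number of integer vectors of length at most $L$ is $O(L^2)$. Hence any $k$ integer vectors with distinct directions have total length at least $c\,k^{3/2}$: take the $k$ shortest, and note that at least $k/2$ of them have length $\Omega(\sqrt{k})$. Therefore $c\,k^{3/2}\le 2\pi r$, which gives $k = O(r^{2/3})$.

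For the lower bound, I would show that every edge of $\mathrm{conv}(\cB(r))$ has length $O(r^{1/3})$. Since the polygon contains the disk of radius $r-\sqrt 2$ (every unit lattice square meeting that disk has all its corners in $\cB(r)$), its perimeter is $\Omega(r)$, and so it has $\Omega(r^{2/3})$ edges. To bound one edge $e$ of length $\ell$, consider the circular cap beyond the line through $e$. Its interior contains no lattice points, because all lattice points of the disk lie in the hull. The supporting line is at distance at least $r-\sqrt2$ from the origin, so the cap is at least as large as a cap with chord roughly $\ell$ and depth $h \approx \ell^2/(8r)$, up to an additive $O(1)$ correction that I would handle carefully. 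Thus it contains a triangle of base $\approx\ell$, height $\approx h$, and area $\Theta(\ell^3/r)$. By the two-dimensional flatness theorem, a lattice-free convex body has lattice width $O(1)$. So there is a nonzero integer vector $w$ with $|\langle w, x-y\rangle| = O(1)$ for all $x,y$ in the cap. This forces $|w| = O(1/h)$ (width across the cap) and $|w|\cdot \ell \cdot \sin\theta = O(1)$, where $\theta$ is the angle between $w$ and the normal of $e$. Combined with the fact that $e$ itself joins two lattice points, so its direction is an integer vector $u$ with $\langle w,u\rangle$ an integer, I expect a case analysis to yield $\ell^3/r = O(1)$, i.e.\ $\ell = O(r^{1/3})$. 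The cleanest case is when $\langle w,u\rangle = 0$. Then the lattice lines orthogonal to $w$ are parallel to $e$ and spaced $1/|w| = \Omega(h)$ apart, so the cap must fit between two consecutive such lines, which is the desired constraint.

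The main obstacle is this last step: making the flatness argument rigorous, including the $O(1)$ slack from the line lying at distance between $r-\sqrt2$ and $r$, and the case $\langle w,u\rangle\neq 0$. If it becomes messy, my fallback is a direct construction. For each primitive vector $v$ in the first quadrant with $|v|\le c\,r^{1/3}$, there are $\Theta(r^{2/3})$ of them. Let $p_v$ be a lattice point of $\cB(r)$ maximizing $\langle v^\perp, \cdot\rangle$. I would show that distinct such $v$ give distinct hull vertices, by checking that consecutive directions differ in angle by more than the angular range over which a single vertex can remain extreme. Alternatively, one can simply invoke the Balog--B\'ar\'any result as cited in the paper.
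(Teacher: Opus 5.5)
Your upper bound is fine: it is the standard counting argument (distinct integer edge vectors of total length at most $2\pi r$), and the reduction to positive coordinates by symmetry is harmless. The paper itself gives no proof of this statement; it just cites B\'ar\'any--Larman.

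The gap is in your primary route to the lower bound. The claim that every edge of $\mathrm{conv}(\cB(r))$ has length $O(r^{1/3})$ is false. Take $r = N+\tfrac12$. Every point $(x,N)$ with $|x|\le\sqrt N$ lies in the disk, since $x^2+N^2\le N^2+N+\tfrac14$. No lattice point has $y\ge N+1>r$. So the top edge of the hull is horizontal, of length $2\lfloor\sqrt N\rfloor=\Theta(\sqrt r)$. This is exactly your ``cleanest case'' $u=(1,0)$, $w=(0,1)$: fitting the cap between consecutive lattice lines only gives $h<1$, i.e.\ $\ell=O(\sqrt r)$. No case analysis can produce $\ell^3/r=O(1)$, so dividing the perimeter by a maximal edge length yields only $\Omega(\sqrt r)$ vertices.

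Your fallback is the right idea and can be completed, but not by a uniform angle comparison. The exterior angle at a hull vertex can be $\Theta(r^{-1/2})$, e.g.\ at the ends of the edge above. Meanwhile, consecutive primitive directions of norm at most $c\,r^{1/3}$ can be only $\Theta(r^{-2/3})$ apart. What is missing is a bound that depends on $|v|$:

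\begin{itemize}
\item Suppose a lattice point $p\in\cB(r)$ maximizes $\langle v^\perp,\cdot\rangle$. Then the next lattice line $\langle v^\perp,x\rangle=\langle v^\perp,p\rangle+1$ contains no point of $\cB(r)$, while its lattice points are spaced $|v|$ apart. So its chord in the disk is shorter than $|v|$.
\item Hence the line through $p$ parallel to $v$ lies at distance more than $r-1/|v|-|v|^2/(4r)$ from the origin. Since $|p|\le r$, the angle between $p$ and $v^\perp$ is $O((r|v|)^{-1/2})$ whenever $|v|\le r^{1/3}$.
\item If non-parallel primitive $v,v'$ had a common maximizer, then $1/(|v||v'|)\le\angle(v,v')=O((r|v|)^{-1/2}+(r|v'|)^{-1/2})$. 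This forces $\max(|v|,|v'|)=\Omega(r^{1/3})$.
\end{itemize}

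So for $|v|,|v'|\le c\,r^{1/3}$ with $c$ small, the faces of the hull maximizing $\langle v^\perp,\cdot\rangle$ are pairwise disjoint. Pick a vertex of each face, not an arbitrary maximizing lattice point, which may lie in the interior of an edge. This yields $\Omega(r^{2/3})$ distinct vertices.
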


\paragraph{Graphs.}

Given a (possibly weighted) graph $G=(V,E)$, $n=|V|$ and $m=|E|$ are the number of vertices and edges respectively.
The \emph{transitive closure} of $G$ is the set of vertex pairs $(s,t)$ where $s$ can reach $t$.
For any path $P$, the \emph{length} of $P$ is the total weight of its edges, and we write $|P|$ for the total number of its edges.
For any two vertices $s,t \in V$, the \emph{distance} from $s$ to $t$, denoted $\dist_G(s,t)$, is the minimum length of any $s$-$t$ path.
We write $\dist_G(s,t)=\infty$ if $s$ cannot reach $t$.
The subscript $G$ is omitted when it is clear from context.

A \emph{chain} of a graph is a sequence of vertices $(v_1,v_2,\dots)$ such that $v_i$ can reach $v_{i+1}$ for all $i \ge 1$.
An \emph{$\ell$-chain cover} is a collection of at most $\ell$ vertex-disjoint chains such that any path contains at most $\cO(n/\ell)$ vertices that are not in any chain.

\paragraph{Shortcuts and hopsets.}

A \emph{shortcut} with \emph{diameter} $D$, or $D$-shortcut for short, is a set $H$ of additional (unweighted) edges, such that for any two vertices $s,t \in V$,
\begin{enumerate}
    \item If $t$ is not reachable from $s$ in $G$, then $t$ remains unreachable from $s$ in $G \cup H$;
    \item If $t$ is reachable from $s$ in $G$, then $t$ can be reached from $s$ using at most $D$ edges in $G \cup H$.
\end{enumerate}

Similarly, an (exact) \emph{hopset} with \emph{hopbound} $D$, or $D$-hopset, is a set $H$ of additional weighted edges, such that for any two vertices $s,t \in V$,
\begin{enumerate}
    \item It holds that $\dist_{G \cup H}(s,t) = \dist_G(s,t)$.
    \item If $\dist_G(s,t) < \infty$, then there exists an $s$-$t$ path in $G \cup H$ that uses at most $D$ edges and has length $\dist_G(s,t)$.
\end{enumerate}

Throughout, we often use the notion of \emph{critical paths}.
Critical paths for shortcuts are a set $\cP$ of paths such that each path $P \in \cP$ is the unique path between its endpoints.
The two endpoints of each path are called a \emph{critical pair}.
For hopsets, critical paths are only required to be the unique shortest path between their endpoints.

\paragraph{Certification.}

Given any shortcut $H$ for graph $G=(V,E)$, we say it is \emph{certified} if for any edge $(u,v) \in H$, there exist two edges $(u,w),(w,v) \in E \cup H$ for some vertex $w \in V$.
A hopset is said to be certified if it additionally satisfies that the weight of $(u,v)$ is equal to the total weight of $(u,w),(w,v)$.

\revised{%
\begin{lemma}
    \label{lem:intree-outtree-certified}

    Let $G$ be a graph, $T$ be an in-tree rooted at a vertex $v$, and $H$ the shortcut containing all edges $(u, v)$ for $u \in T$. Then, $H$ is certified.
    
    Symmetrically, let $G$ be a graph, $T$ be an out-tree rooted at a vertex $v$, and $H$ the shortcut containing all edges $(v, u)$ for $u \in T$. Then, $H$ is certified.
\end{lemma}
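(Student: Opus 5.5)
We verify the certification condition edge by edge, using the tree structure to name the certifying middle vertex. Take the in-tree case first. Here $T$ is an in-tree rooted at $v$, so every vertex $u \in T$ with $u \neq v$ has a unique parent $p(u)$, the next vertex on its tree path to $v$. Since the tree edges lie in $G$, we have $(u,p(u)) \in E$. Each edge of $H$ has the form $(u,v)$ with $u \in T$; a loop $(v,v)$ is not a meaningful shortcut edge and is excluded.

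\textbf{Certifying middle vertex.} For an edge $(u,v) \in H$ we choose $w = p(u)$ and split into two cases.
\begin{itemize}
\item If $p(u) = v$, then $(u,v)$ is itself a tree edge in $E$. Such an edge needs no certification, since it already belongs to $G$; equivalently one may take $H$ to consist only of the non-tree pairs.
\item If $p(u) \neq v$, then $(u,p(u)) \in E$, and $p(u)$ is a vertex of $T$ different from $v$. Hence $(p(u),v) \in H$, or $(p(u),v)\in E$ when $p(u)$ is a child of $v$.
\end{itemize}
In the second case both $(u,w)$ and $(w,v)$ lie in $E \cup H$, which is exactly the certification condition of \cref{def:cert-sc}. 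The edges of $H$ are valid shortcut edges because each $u \in T$ reaches $v$ along its tree path.

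\textbf{Out-tree case.} This is symmetric: reverse all edge directions. For $(v,u) \in H$ with $u$ not a child of $v$, take $w$ to be the parent of $u$ in the out-tree. Then $(v,w) \in H \cup E$ and $(w,u) \in E$.

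\textbf{Main subtlety.} The argument is routine except for the degenerate edges, namely tree edges that coincide with edges of $H$ and the root itself. I would handle these by noting that they are already in $E$. Alternatively, one can argue by induction on the depth of $u$ in $T$, which also gives an explicit $2$-hop insertion order in the spirit of \cref{lem:equi-seq-sc}: insert the edges in order of increasing depth.
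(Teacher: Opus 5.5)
Your proof is correct and is the paper's argument: certify each $(u,v)\in H$ through the parent $p(u)$ of $u$ in $T$, using $(u,p(u))\in E$ and $(p(u),v)\in E\cup H$, with the out-tree case symmetric. Your explicit handling of the children of the root is extra care the paper's one-line proof skips; there, $p=v$ tacitly relies on the loop $(v,v)$ or on edges already in $E$ needing no certificate. Your convention is consistent with the paper's view of $H$ as a set of additional edges.
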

\begin{proof}
    For the in-tree case, pick some $(u, v) \in H$, and let $p$ be the parent of $u$ in $T$. Then, $G$ contains the edge $(u, p)$, and $p \in T$, thus $H$ contains the edge $(p, v)$. These two edges together certify $(u, v)$. The out-tree case is symmetric.
\end{proof}

\begin{lemma}
    \label{lem:certified-shortcut-subpaths}

    Let $P = (u_1, \dots, u_l)$ be a path. Then, there is a certified shortcut $H$ of size $O(l \log l)$, such that for any $1 \leq i < j \leq l$, either $(u_i, u_j) \in P$ or there is some $i < k < j$ such that $H \cup P$ contains the edges $(u_i, u_k)$ and $(u_k, u_j)$. 
\end{lemma}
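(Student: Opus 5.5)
We use a divide-and-conquer construction in the style of a 2-hop cover on a path, building the certified parts from \cref{lem:intree-outtree-certified}. Define $H$ recursively on a subpath $Q=(u_a,\dots,u_b)$. Pick the midpoint index $c=\lfloor (a+b)/2\rfloor$. Add the edges $(u_i,u_c)$ for all $a\le i<c-1$ and the edges $(u_c,u_j)$ for all $c+1<j\le b$. Then recurse on $(u_a,\dots,u_{c})$ and $(u_c,\dots,u_b)$; including $u_c$ in both halves is harmless. Let $H$ be the union over all recursion nodes.

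\emph{Size.} Each level of the recursion adds $O(l)$ edges in total, and there are $O(\log l)$ levels. Hence $|H|=O(l\log l)$.

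\emph{Certification.} At a recursion node with midpoint $c$, the subpath $u_a\to\cdots\to u_c$ is an in-tree rooted at $u_c$, with the parent of $u_i$ being $u_{i+1}$. The edges $(u_i,u_c)$ are exactly those added by \cref{lem:intree-outtree-certified}, except that the trivial edge $(u_{c-1},u_c)$ is already in $P$. The certificate from that lemma is concrete: $(u_i,u_c)$ is certified by $(u_i,u_{i+1})\in P$ together with $(u_{i+1},u_c)$, where the latter is either in $H$ or is the path edge $(u_{c-1},u_c)$. So each added edge is certified within $P\cup H$. The out-tree side $u_c\to\cdots\to u_b$ is symmetric. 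Certification is a local condition on each edge, so the union over all nodes remains certified.

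\emph{Covering property.} Fix $i<j$ with $(u_i,u_j)\notin P$, so $j\ge i+2$. Descend the recursion until reaching the first node whose midpoint $c$ satisfies $i\le c\le j$. Such a node exists: at each node, either the pair lies entirely on one side of $c$ and we recurse into that side, or it straddles $c$. Since $j\ge i+2$, the process cannot bottom out before the pair straddles some midpoint.

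\emph{The main obstacle} is the boundary case $c\in\{i,j\}$, where the midpoint $u_c$ is not a strictly intermediate vertex $u_k$ with $i<k<j$.
\begin{itemize}
\item If $i<c<j$, take $k=c$. Then $(u_i,u_c)$ is either in $H$ or equals $(u_{c-1},u_c)\in P$, and likewise $(u_c,u_j)\in H\cup P$.
\item If $c=i$, so that $u_i$ is the midpoint, then $(u_i,u_j)$ itself lies in $H$ and has a certificate $(u_i,u_{j-1}),(u_{j-1},u_j)$, with $(u_i,u_{j-1})$ in $H$ or $P$. Taking $k=j-1$ gives the required intermediate vertex, since $i<j-1<j$. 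The case $c=j$ is symmetric, with $k=i+1$.
\end{itemize}
In every case we obtain $k$ with $i<k<j$ and $(u_i,u_k),(u_k,u_j)\in H\cup P$, which proves \cref{lem:certified-shortcut-subpaths}.
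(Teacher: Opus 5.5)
Your proposal is correct and is essentially the paper's argument: midpoint divide-and-conquer with in-star and out-star edges at the midpoint, certified via \cref{lem:intree-outtree-certified}, for $O(l\log l)$ edges in total; your explicit treatment of the case $c\in\{i,j\}$ through the certificate of $(u_i,u_j)$ is more careful than the paper, whose case split ($i,j<k$; $i,j>k$; $i<k<j$) does not explicitly cover pairs with an endpoint at the midpoint. One small fix: since you put $u_c$ into both halves, you must declare subpaths with at most two vertices to be leaves, because for $b=a+1$ you get $c=a$ and the right half $(u_c,\dots,u_b)$ is $Q$ itself, so the recursion as written would not terminate.
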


\begin{proof}
    We will use a standard divide-and-conquer approach, showing that a shortcut $H$ of size $l \lceil \log l \rceil$ exists through induction on $l$. The base case $l = 1$ is trivial. 
    
    Suppose $l > 1$, and let $k = \lfloor (l + 1) / 2 \rfloor$ be the middle point of the path. Let $H_{< k}$ be the (possibly empty) shortcut constructed by applying the induction assumption to the (possibly empty) subpath $(u_1, \dots, u_{k - 1})$, $H_{> k}$ the corresponding shortcut constructed by applying the induction assumption to the subpath $(u_{k + 1}, \dots, u_l)$, and let $H_k$ be the shortcut containing all edges $(u_i, u_k)$ for $i < k - 1$ and $(u_k, u_j)$ for $j > k + 1$.

    We let $H = H_{< k} \cup H_k \cup H_{> k}$. This shortcut is certified, since all three individual shortcuts are certified, $H_{< k}$ and $H_{> k}$ by induction, and $H_k$ by \Cref{lem:intree-outtree-certified}. The shortcut has size at most $(l - 3) + (l - 1) (\lceil \log l \rceil - 1) \leq l \lceil \log l \rceil$.
    
    It remains to check the desired property. If $i, j < k$, $H_{< k} \cup P$ contains the desired edges by induction, and the case $i, j > k$ is symmetric with $H_{> k}$. If $i < k < j$, the property holds by construction of $H_k$.
\end{proof}

\lemtwosteps*

\begin{proof}
    It suffices to consider just one path $P = (u_1, \dots, u_l)$, in which case $t = l + |H|$. By \Cref{lem:equi-seq-sc}, it suffices to show that a certified shortcut $H' \supseteq H$ of size $\tilde{O}(l + |H|)$ exists.
    
    Let $H''$ be the certified shortcut constructed by \Cref{lem:certified-shortcut-subpaths} on $P$. We let $H' = H'' \cup H$. The shortcut has size $O(l \log l + |H|)$ as desired, and it is certified, since for any $(u_i, u_j) \in H$, $H'' \cup P$ either contains $(u_i, u_j)$ or the contains the edges $(u_i, u_k)$ and $(u_k, u_j)$ for some $k$.
\end{proof}

We remark that the $O(\log n)$ factor in \Cref{lem:certified-shortcut-subpaths} and \Cref{lem:lem-two-steps} can be improved to $O(\log^* n)$ or even further using \cite{Raskhodnikova10}.
}

As a measure of how close it is to being certified, the \emph{certification complexity} of a shortcut (resp. hopset) $H$ is defined to be the smallest size of any certified shortcut (resp. hopset) $H' \supseteq H$.

\begin{lemma}
\label{lem:shortcut-size}
Let $G$ be a DAG and $\cP$ be a set of critical paths for shortcuts, each with $D$ edges, such that any two paths intersect on at most a single edge.
Any certified $D/2$-shortcut $H$ has size at least $|\cP| \cdot D/2$.
\end{lemma}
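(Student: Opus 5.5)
The plan is to fix one critical path $P=(u_0,u_1,\dots,u_D)\in\cP$ from $s=u_0$ to $t=u_D$. I will show that $H$ contains at least $D/2$ edges that are \emph{assigned} to $P$, and that no edge of $H$ is assigned to two different critical paths. Summing over $\cP$ then gives $|H|\ge|\cP|\cdot D/2$.

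\textbf{Step 1 (everything stays on $P$).} Take any $s$-$t$ path $Q$ in $G\cup H$, and any edge $(x,y)\in G\cup H$ with $s\leadsto x$ and $y\leadsto t$ in $G\cup H$. Since $H$ lies in the transitive closure, these reachabilities also hold in $G$. So there is an $s$-$t$ walk in $G$ through $x$ and $y$, and uniqueness of $P$ forces $x=u_i$, $y=u_j$ with $i<j$. If $(x,y)\in E$, uniqueness also forces $j=i+1$. Define the \emph{span} of such an edge $(u_i,u_j)$ to be $j-i$. Call an $H$-edge with both endpoints on $P$ and span at least $2$ \emph{assigned} to $P$. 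If an $H$-edge $(x,y)$ were assigned to two paths $P,P'$, then the unique $x$-$y$ path in $G$ would be a common subpath of $P$ and $P'$ with at least $2$ edges. This contradicts the hypothesis that the paths share at most one edge. So assignments are unambiguous.

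\textbf{Step 2 (certification forces a laminar family).} Claim: if $(u_i,u_j)\in G\cup H$ has span $\sigma$, then $H$ contains at least $\sigma-1$ distinct edges assigned to $P$ whose endpoint intervals lie inside $[i,j]$. I prove this by induction on $\sigma$. A $G$-edge has span $1$, so the claim is trivial for it. For an $H$-edge of span $\sigma\ge 2$, certification gives $w$ with $(u_i,w),(w,u_j)\in G\cup H$. Then $s\leadsto u_i\leadsto w\leadsto u_j\leadsto t$, so by Step 1 $w=u_k$ with $i<k<j$. The edge $(u_i,u_j)$ itself is assigned to $P$. Induction on the two sub-edges gives $(k-i-1)+(j-k-1)$ further assigned edges, with intervals in $[i,k]$ and $[k,j]$ respectively. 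Edges in these two families are distinct because their intervals overlap in at most the point $k$, while an assigned edge's interval has length at least $2$. Both families also differ from $(u_i,u_j)$, whose interval is strictly larger. The total is $\sigma-1$. Equivalently, we get a binary tree of nested intervals whose leaves have span $1$, and all its internal nodes are distinct assigned edges.

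\textbf{Step 3 (counting).} Since $H$ is a $D/2$-shortcut, there is an $s$-$t$ path $Q$ in $G\cup H$ with $h\le D/2$ edges. By Step 1 its edges are $(u_{i_0},u_{i_1}),\dots$ with $0=i_0<i_1<\dots<i_h=D$, and their spans $\sigma_1,\dots,\sigma_h$ sum to $D$. Apply Step 2 to each edge. The resulting families live in intervals $[i_{r-1},i_r]$ that overlap only at endpoints, so they are pairwise disjoint. This yields at least $\sum_r(\sigma_r-1)=D-h\ge D/2$ distinct edges of $H$ assigned to $P$. Combined with the uniqueness of assignment from Step 1, summing over $P\in\cP$ proves the lemma.

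The main subtlety is the distinctness bookkeeping. Edges of $H$ with span $1$, which run parallel to a $G$-edge, must not be counted, because a single edge may be shared by two critical paths. This is why only span-$\ge 2$ edges are counted, and why the bound per edge of $Q$ is $\sigma-1$ rather than $\sigma$.
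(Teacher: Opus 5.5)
Your proof is correct and takes essentially the paper's approach: the paper repeatedly replaces each non-$G$ edge of a short $s$-$t$ path by its two certifying edges until the path becomes $P$, and counts $|P|-|P'|\ge D/2$ expansions; this is exactly your induction on span and your count $\sum_r(\sigma_r-1)=D-h$. Your nested-interval argument also spells out why the expanded edges are distinct, both within one path and (via span $\ge 2$) across different critical paths, which the paper only asserts.
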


\begin{proof}
Fix a path $P \in \cP$ with endpoints $s,t$.
Since $H$ is an $D/2$-shortcut, there exists an $s$-$t$ path $P'$ in $G \cup H$ with $|P'| \le D/2$.
Consider the following iterative procedure starting with $i=0$ and $P_0=P'$.
Whenever $P_i$ contains an edge $e=(u,v) \not\in E$, obtain $P_{i+1}$ from $P_i$ by replacing $e$ with its two certifying edges $e'=(u,w)$ and $e''=(w,v)$, and continue the procedure by increasing $i$.
Note that adding $H$ does not change the transitive closure of $G$ and that $P$ is the unique $s$-$t$ path in $G$, so all $P_i$ can only contain vertices on $P$.
As $G$ is a DAG and $|P_i|$ increases by $1$ in each iteration, the above procedure must stop at iteration $i^*=|P|-|P'| \ge D/2$.
This means $H$ contains at least $D/2$ new edges with both their endpoints on $P$.
Moreover, all these edges are distinct for each critical path because any two critical paths intersect on at most a single edge.
This concludes the proof.
\end{proof}

The above argument can be extended to hopsets.
We omit the proof as it is almost identical.

\begin{lemma}
\label{lem:hopset-size}
Let $G$ be an undirected graph and $\cP$ be a set of critical paths for hopsets, each with $D$ edges, such that any two paths intersect on at most a single edge.
Any certified $D/2$-hopset $H$ has size at least $|\cP| \cdot D/2$.
\end{lemma}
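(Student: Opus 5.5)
I will mirror the proof of \cref{lem:shortcut-size}, replacing reachability by shortest-path uniqueness and using the weight condition in the definition of certified hopsets.

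Fix a critical path $P \in \cP$ with endpoints $s,t$. Since $H$ is a $D/2$-hopset, $G \cup H$ contains an $s$-$t$ path $P'$ with $|P'| \le D/2$ edges and length $\dist_G(s,t)$. Starting from $P_0 = P'$, I will repeatedly replace an edge $(u,v) \in H \setminus E$ of the current walk $P_i$ by its two certifying edges $(u,w),(w,v) \in E \cup H$. By the weight condition in the definition of a certified hopset, every $P_i$ still has length exactly $\dist_G(s,t)$. Its number of edges grows by one in each step.

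The key step is to show that every $P_i$ only visits vertices of $P$ and traverses them in path order. Expanding every hopset edge of $P_i$ recursively until only edges of $E$ remain produces an $s$-$t$ walk in $G$ of length $\dist_G(s,t)$. Since $P$ is the unique shortest $s$-$t$ path, and the graph is undirected with positive weights, a shortest walk is a simple path, so this walk must equal $P$. Hence all vertices of $P_i$ lie on $P$, in order along $P$.

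This order property also guarantees termination. The index gaps along $P$ of the edges in $P_i$ are positive and sum to $D$, so $|P_i| \le D$. Therefore the procedure stops, and it stops exactly when $P_i$ consists only of edges of $E$, i.e.\ when $P_i = P$. That happens after $D - |P'| \ge D/2$ replacement steps.

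Each step removes a hopset edge that does not reappear later, since edge spans along $P$ strictly decrease under replacement. So the replaced edges are at least $D/2$ distinct edges of $H$, all with both endpoints on $P$.

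Finally, I will argue that distinct critical paths get disjoint sets of such edges. An edge $(u,v) \in H$ attributed to both $P$ and $Q$ has $u,v$ on both paths. If $u,v$ are consecutive on $P$, the edge's weight equals $\dist(u,v)$, so it duplicates the single edge $uv$. Otherwise $P$ and $Q$ share two vertices, and by uniqueness of shortest paths they share the whole $u$-$v$ subpath, which then has at least two edges. This contradicts the assumption that the paths intersect on at most a single edge. I expect the main obstacle to be handling this boundary case of a hopset edge parallel to a single shared edge. I would handle it by noting that such an edge is never produced by certification as a strict shortcut along $P$, since the replaced edges have span at least two. Summing over all paths then gives $|H| \ge |\cP| \cdot D/2$.
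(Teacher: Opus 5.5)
Your proposal is correct and follows the paper's route. The paper omits this proof as an almost identical copy of \cref{lem:shortcut-size}, which is the same procedure of replacing each hopset edge by its certifying pair; here the weight condition and uniqueness of shortest paths stand in for transitive-closure preservation and path uniqueness. Your extra steps (order along $P$, distinctness of replaced edges within one path, span at least two for the cross-path argument) fill in details the paper leaves implicit, and the one point you leave unstated, termination of the recursive expansion, follows because certifying edges have strictly smaller positive weight.
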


\paragraph{Prior constructions.}

Our lower bounds will build upon the following constructions. 

\begin{lemma}[\cite{HuangP21}]
\label{lem:hp}
For $d,D,r > 0$, there exists a family of $D$-layered DAGs such that:
\begin{enumerate}
    \item \label{item:hp-N} It has $n=\Theta(D^{d+1} r^d)$ vertices and $m=\Theta(n\Delta)$ edges, where $\Delta=\Theta(r^{d(d-1)/(d+1)})$ is the maximum degree.
    \item \label{item:hp-P} It has a set $\cP$ of $\Theta(n\Delta/D)$ critical paths such that:
        \begin{enumerate}
            \item \label{item:hp-P-length} Each path is from the first layer to the last layer.
            \item \label{item:hp-P-unique} Each path is the unique path between its endpoints.
            \item \label{item:hp-P-overlap} Any two paths are edge-disjoint.
        \end{enumerate}
\end{enumerate}
\end{lemma}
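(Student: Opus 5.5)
This lemma is cited from \cite{HuangP21}, so the plan is to reconstruct the standard Hesse/Huang--Pettie construction in $d+1$ dimensions and check the four listed properties.

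\textbf{Construction.} Take $D+1$ layers $L_0,\dots,L_D$. Each layer is a copy of the integer grid $[cDr]^d$ for a suitable constant $c$, so that $n=\Theta(D\cdot (Dr)^d)=\Theta(D^{d+1}r^d)$. Let $W\subseteq\mathbb{Z}^d$ be the vertex set of the convex hull of the integer points in a $d$-dimensional ball of radius $r$, restricted to the positive orthant. The $d$-dimensional analogue of \cref{lem:ball-size} (B\'ar\'any--Larman) gives
\[
|W|=\Theta\bigl(r^{d(d-1)/(d+1)}\bigr)=\Delta .
\]
For every $x\in L_i$ and every $\bw\in W$ with $x+\bw$ still in the grid, add the edge $(x,i)\to(x+\bw,i+1)$. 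Every vertex then has out-degree at most $\Delta$, and a constant fraction of vertices (those away from the boundary) have out-degree exactly $\Delta$. Hence $m=\Theta(n\Delta)$, which is item~\ref{item:hp-N}.

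\textbf{Critical paths.} For a start point $s\in L_0$ in a region of side length $\Theta(Dr)$ and a direction $\bw\in W$, define the path that repeatedly takes the edge $\bw$ for $D$ steps. The grid side $cDr$ is chosen so that each such path stays inside the grid. This gives $\Theta((Dr)^d\cdot\Delta)=\Theta(n\Delta/D)$ paths, each going from the first layer to the last, as item~\ref{item:hp-P-length} requires.

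\textbf{Edge-disjointness.} An edge determines its direction $\bw$, its layer, and its position. Since a critical path moves by the same $\bw$ in every step, the edge therefore determines the whole path. So distinct critical paths are edge-disjoint, which is item~\ref{item:hp-P-overlap}.

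\textbf{Uniqueness (main step).} Suppose some path from $s$ to $s+D\bw$ uses steps $\bw_1,\dots,\bw_D\in W$. Then $\sum_j \bw_j=D\bw$. Because $\bw$ is a vertex of $\operatorname{conv}(W)$, there is a linear functional $\mathbf{a}$ that $\bw$ uniquely maximizes over $W$. Applying $\mathbf{a}$ gives
\[
\sum_j\langle \mathbf{a},\bw_j\rangle=D\langle \mathbf{a},\bw\rangle,
\]
while each term satisfies $\langle \mathbf{a},\bw_j\rangle\le\langle \mathbf{a},\bw\rangle$, with equality only when $\bw_j=\bw$. Hence every $\bw_j$ equals $\bw$, so the critical path is the unique path between its endpoints, which is item~\ref{item:hp-P-unique}.

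\textbf{Main obstacle.} The only substantive external input is the $d$-dimensional lattice convex hull count $|W|=\Theta(r^{d(d-1)/(d+1)})$. I would cite it from B\'ar\'any--Larman rather than reprove it. The remaining work is careful constant bookkeeping for the boundary regions, so that both $n$ and the number of critical paths are of the claimed order.
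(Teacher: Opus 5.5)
Your proposal is correct. It is the standard Hesse/Huang--Pettie construction that the paper invokes by citation: $d$-dimensional grid layers, edges along lattice convex-hull vertices, straight-line critical paths, and uniqueness via a functional uniquely maximized at $\bw$. The paper gives no proof of its own, but its appendix proof of \cref{lem:hs} runs the same grid/strict-convexity argument in an alternating variant. The only step that needs an extra line is the positive-orthant count $|W|=\Theta(r^{d(d-1)/(d+1)})$. It follows from B\'ar\'any--Larman by reflection symmetry, since hull vertices on coordinate hyperplanes number only $O(r^{(d-1)(d-2)/d})$.
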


\begin{lemma}[\cite{LuWWX22}]
\label{lem:lvwx}
For $d,D,r > 0$, there exists a family of $dD$-layered DAGs such that:
\begin{enumerate}
    \item \label{item:lvwx-N} It has $n=\Theta(d D^{d+2} r^{d+1})$ vertices and $m=\Theta(n\Delta)$ edges, where $\Delta=\Theta(r^{2/3})$ is the maximum degree.
    \item \label{item:lvwx-P} It has a set $\cP$ of $\Theta(n\Delta^d/(dD))$ critical paths such that:
        \begin{enumerate}
            \item \label{item:lvwx-P-length} Each path is from the first layer to the last layer.
            \item \label{item:lvwx-P-unique} Each path is the unique path between its endpoints.
            \item \label{item:lvwx-P-overlap} The intersection between any two paths is a subpath with at most $d-1$ edges.
        \end{enumerate}
\end{enumerate}
\end{lemma}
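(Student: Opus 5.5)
Since this lemma is imported from \cite{LuWWX22}, the plan is to re-derive it from an explicit alternation-product construction, with counting, the overlap bound and path uniqueness checked separately. Each of the $dD+1$ layers will be a copy of the integer grid $[cDr]^{d+1}$ for a suitable constant $c$. This gives $n=\Theta(dD\cdot (Dr)^{d+1})=\Theta(dD^{d+2}r^{d+1})$, as required.

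Between layer $j$ and layer $j+1$ we use the \emph{phase} $i\equiv j \pmod d$. In phase $i$, each vertex $\mathbf{x}$ is joined to $\mathbf{x}+\iota_i(\bv)$ for every $\bv\in\cV(r)$, where $\iota_i$ embeds a $2$-dimensional vector into the coordinate plane spanned by coordinates $i$ and $d+1$. By \cref{lem:ball-size} the maximum degree is $\Delta=\Theta(r^{2/3})$ and $m=\Theta(n\Delta)$.

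A critical path is specified by a start vertex $\mathbf{s}$ in the first layer and a direction tuple $(\bv_1,\dots,\bv_d)\in\cV(r)^d$. The path always takes the edge $\iota_i(\bv_i)$ in phase $i$. Every coordinate moves by at most $O(Dr)$ in total, so any $\mathbf{s}$ in a central subgrid of constant fraction keeps the path inside the grid. This gives $\Theta((Dr)^{d+1})\cdot\Theta(\Delta^d)=\Theta(n\Delta^d/(dD))$ critical paths, each going from the first to the last layer.

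\textbf{Overlap bound.} Suppose two critical paths share a subpath of $d$ consecutive edges. That subpath passes through every phase exactly once, so the two direction tuples coincide. The paths also share a vertex, and a path is determined by its direction tuple together with any one of its vertices. Hence the two paths are identical, so distinct critical paths intersect in a subpath of at most $d-1$ edges. Two paths meeting in more than one component would require different tuples agreeing on the same edges at different layers, which I would rule out by the same determinism argument.

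\textbf{Uniqueness (the main obstacle).} Take any path $Q$ from $\mathbf{s}$ to the endpoint $\mathbf{t}$ of the critical path, and suppose that in phase $i$ it uses vectors $\mathbf{w}_{i,1},\dots,\mathbf{w}_{i,D}\in\cV(r)$. Each of coordinates $1,\dots,d$ is changed only in its own phase, so for every $i$ the first coordinates satisfy $\sum_k \mathbf{w}_{i,k}^{(1)}=D\bv_i^{(1)}$. The shared coordinate $d+1$ only gives one aggregate constraint, $\sum_{i,k}\mathbf{w}_{i,k}^{(2)}=D\sum_i \bv_i^{(2)}$.

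To exploit strict convexity I will use the linear functional
\[
\Phi(Q)=\sum_i\sum_k \bigl\langle \mathbf{n}_i,\mathbf{w}_{i,k}\bigr\rangle ,
\]
where $\mathbf{n}_i$ is a normal of $\mathrm{conv}(\cB(r))$ at $\bv_i$ for which $\bv_i$ is the unique maximizer over $\cV(r)$. The difficulty is that the second components of the $\mathbf{n}_i$ must be equal across all $i$, since that coordinate is shared, and only then can the aggregate constraint be used. I would first try to normalize each normal to have second component $1$; this requires $\bv_i$ to avoid the horizontal direction, which we can arrange by restricting $\cV(r)$ to a constant-fraction arc.

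With that normalization, $\Phi$ is determined by the endpoints, so $\Phi(Q)=\Phi(P)$. On the other hand, each term of $\Phi(Q)$ is at most the corresponding maximum $\langle \mathbf{n}_i,\bv_i\rangle$, with equality only if $\mathbf{w}_{i,k}=\bv_i$. Together these force $\mathbf{w}_{i,k}=\bv_i$ for all $i,k$, so $Q=P$.

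If the normalization fails, the fallback is to pass to sufficiently well-separated directions. Alternatively, one can weight the phases differently so that the shared-coordinate contributions match, accepting constant-factor losses in $\Delta$.
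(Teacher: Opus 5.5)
Your argument is correct in substance, but it uses a different construction from the one behind this lemma. The paper takes the lemma as a black box from \cite{LuWWX22}. The analogous construction it does write out (proof of \cref{lem:hs}) uses a \emph{chain} pattern: phase $i$ moves coordinates $i$ and $i+1$, so each coordinate is touched by at most two phases and drifts by at most $2Dr$. Uniqueness is then proved by an induction that pushes strict-convexity inequalities from coordinate $d+1$ back to coordinate $1$. The overlap bound follows from ``uniqueness makes any intersection contiguous'' plus ``$d$ consecutive edges determine all direction vectors''.

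Your \emph{star} pattern, where all phases share coordinate $d+1$, is also valid, and it buys a one-shot uniqueness proof via a single linear functional. The normalization you worry about is automatic. Choose $\mathbf{n}_i$ so that $\bv_i$ is the unique maximizer over all of $\cB(r)$, not just over $\cV(r)$. Reflecting $\bv_i$ across either coordinate axis keeps it in $\cB(r)$, which forces both components of $\mathbf{n}_i$ to be positive. So no arc restriction or fallback is needed.

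The price of the star pattern is that the shared coordinate drifts by $D\sum_i \bv_i^{(2)}$, up to $dDr$. Your claim that every coordinate moves by $O(Dr)$ therefore fails unless $d=O(1)$. That coordinate needs range $\Theta(dDr)$, which inflates $n$ by a factor $d$ over the stated bound. This is irrelevant for $d=2$, the only case the paper uses, and no worse than the $c^{d}$ factors the $\Theta$'s already absorb. You can still avoid it entirely by running your functional argument on the chain pattern, rescaling the normals so that $\mathbf{n}_i^{(2)}=\mathbf{n}_{i+1}^{(1)}$ for $i<d$.

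One step needs repair. Your ``determinism'' remark does not cover two critical paths that meet at separated vertices $\mathbf{a},\mathbf{b}$ without sharing $d$ consecutive edges. Use the uniqueness you proved instead: if their $\mathbf{a}$--$\mathbf{b}$ portions differed, splicing one into the other would give a second path between the endpoints of a critical path. Hence the intersection is a single subpath, and your $d$-edge argument then caps it at $d-1$ edges.
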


%% file: lb.tex
\section{Lower Bounds for Certified Shortcuts and Hopsets}
\label{sec:lb}

\subsection{Shortcuts}

In this section, we obtain lower bounds for certified shortcuts, hence proving \cref{thm:lb-sc-m,thm:lb-sc-n}.

\revised{%
\begin{lemma}\label{lem:edge-size-lowerbound}
Let $C>0$ be a sufficiently small universal constant.
For any $\epsilon \in [0,2/9]$, there exists a family of DAGs $G$ with $n$ vertices and $m=\Theta(n^{1+\epsilon})$ edges, such that any certified shortcut $H$ of size $m n^\epsilon$ has diameter at least $C \cdot n^{1/4-9\epsilon/8}$.
\end{lemma}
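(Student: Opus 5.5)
We instantiate \cref{lem:lvwx} with $d=2$ and choose $r$ and $D$ as functions of $n$ and $\epsilon$. Then we apply the counting argument of \cref{lem:shortcut-size} to the resulting critical paths. With $d=2$ the construction is a $2D$-layered DAG with $n=\Theta(D^4r^3)$ vertices, maximum degree $\Delta=\Theta(r^{2/3})$ and $m=\Theta(n\Delta)$ edges. It carries a set $\cP$ of $\Theta(n\Delta^2/D)$ critical paths, each running from the first to the last layer and hence having $2D-1=\Theta(D)$ edges. Each critical path is the unique path between its endpoints, and any two of them intersect in a subpath with at most $d-1=1$ edge. This is exactly the hypothesis of \cref{lem:shortcut-size}, with path length $L=\Theta(D)$.

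\textbf{Choice of parameters.} We want $m=\Theta(n^{1+\epsilon})$, i.e.\ $\Delta=\Theta(n^\epsilon)$, so we set $r=\Theta(n^{3\epsilon/2})$. The vertex count then forces $D^4=\Theta(n/r^3)=\Theta(n^{1-9\epsilon/2})$, i.e.\ $D=\Theta(n^{1/4-9\epsilon/8})$. The hypothesis $\epsilon\le 2/9$ is exactly what makes this exponent nonnegative, so $D\ge 1$ is a legitimate choice. Concretely, we fix a large constant $K$ and set $r=\lceil K n^{3\epsilon/2}\rceil$. We then take $D$ as large as allowed, rounding to integers, so that the vertex count is $\Theta(n)$; padding with isolated vertices gives exactly $n$ vertices if needed. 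The constant $K$ is chosen so that $\Delta\ge c_0K^{2/3}n^\epsilon$ for the implicit constant $c_0$ of \cref{lem:lvwx}.

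\textbf{Lower bound.} Suppose a certified shortcut $H$ has diameter less than $C n^{1/4-9\epsilon/8}$. With $C$ sufficiently small, this is at most $L/2$, so $H$ is in particular a certified $L/2$-shortcut. By \cref{lem:shortcut-size},
\[
|H|\ \ge\ |\cP|\cdot L/2\ =\ \Theta\!\left(\frac{n\Delta^2}{D}\right)\cdot\Theta(D)\ =\ \Theta(n\Delta^2)\ =\ \Theta(m\Delta).
\]
Since $\Delta\ge c_0K^{2/3}n^\epsilon$, choosing $K$ large enough makes this strictly larger than $m n^\epsilon$. This contradicts $|H|\le mn^\epsilon$, so any certified shortcut of size $mn^\epsilon$ has diameter at least $Cn^{1/4-9\epsilon/8}$.

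\textbf{Main obstacle.} The substance of the proof is entirely in \cref{lem:shortcut-size}. The only delicate point is bookkeeping of constants, because the target size $mn^\epsilon$ carries no slack constant. Two choices must be made in the right order. First, $K$ is chosen large so that the gain $\Delta/n^\epsilon$ beats all hidden constants in $|\cP|$, $L$ and $m$. This only changes $m$ by a constant factor, which the $\Theta$ in the statement allows. Second, $C$ is chosen small, depending on $K$, so that $Cn^{1/4-9\epsilon/8}\le L/2$ despite the rounding of $D$. We must also handle the regime where $D$ is a small constant, e.g.\ $\epsilon$ near $2/9$. There the claimed bound is $O(1)$, and it holds trivially for small $C$ as long as some critical pair is nonadjacent.
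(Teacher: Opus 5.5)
Your proposal is correct and follows the paper's proof: instantiate \cref{lem:lvwx} with $d=2$, $r=\Theta(n^{3\epsilon/2})$ and $D=\Theta(n^{1/4-9\epsilon/8})$, then apply \cref{lem:shortcut-size} to get $|H|\ge\Theta(n\Delta^2)=\Theta(m\Delta)$. Your large constant $K$ in $r$ is just an explicit version of the paper's ``adjusting $D$ by a constant factor'' (at fixed $n$, shrinking $D$ forces $r$, and hence $\Delta/n^\epsilon$, up by a constant factor), so your extra bookkeeping only spells out that step.
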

}

\revised{%
\begin{proof}
Let $G$ be the graph promised by \cref{lem:lvwx} for $d=2$ and $D=\Theta(n^{1/4-9\epsilon/8})$.
Thus we have $r=\Theta(n^{3\epsilon/2})$ and $\Delta=\Theta(n^\epsilon)$.
By \cref{item:lvwx-N,item:lvwx-P}, we have $m = \Theta(n^{1+\epsilon})$ edges and $|\cP| = \Theta(m n^\epsilon / D)$ critical paths, each of which has $2D$ edges and is the unique shortest path between its endpoints.
Additionally, any two of these critical paths share at most one edge.
By \cref{lem:shortcut-size}, any certified shortcut $H$ of size $|\cP| \cdot D = \cO(m n^\epsilon)$ has diameter at least $D$.
Adjusting $D$ by a constant factor concludes the proof.
\end{proof}
}

\revised{%
\begin{lemma}\label{lem:node-size-lowerbound}
Let $C>0$ be a sufficiently small universal constant.
For any $\epsilon \in [0,1/3]$, there exists a family of DAGs $G$ with $n$ vertices, such that any certified shortcut $H$ of size $n^{1+\epsilon}$ has diameter at least $C \cdot n^{1/3 - \epsilon}$.
\end{lemma}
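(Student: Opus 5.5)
The plan mirrors the proof of \cref{lem:edge-size-lowerbound}. The only change is that we use the construction of \cref{lem:hp} (Huang--Pettie) with $d=2$ instead of the alternation-product construction. We do not need to sparsify the graph here, since only the number of added edges is compared against $n$.

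\textbf{Step 1 (parameters).} Set $d=2$ in \cref{lem:hp}. By \cref{item:hp-N} this gives a $D$-layered DAG with $n=\Theta(D^3 r^2)$ vertices and maximum degree $\Delta=\Theta(r^{2/3})$. By \cref{item:hp-P} it has a set $\cP$ of $\Theta(n\Delta/D)$ critical paths. Each path runs from the first to the last layer, so it has $\Theta(D)$ edges. Each is the unique path between its endpoints, and any two are edge-disjoint. We want $\Delta \approx n^{\epsilon}$, so we choose $r=\Theta(n^{3\epsilon/2})$. Then $D^3=\Theta(n/r^2)=\Theta(n^{1-3\epsilon})$, that is, $D=\Theta(n^{1/3-\epsilon})$. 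The range $\epsilon\in[0,1/3]$ is exactly what keeps $D\ge 1$ and $r\ge 1$, possibly after adjusting constants.

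\textbf{Step 2 (apply the counting lemma).} Edge-disjoint paths in particular intersect on at most a single edge, so \cref{lem:shortcut-size} applies. It says any certified shortcut of diameter at most half the critical path length has size at least
\[
|\cP|\cdot \Theta(D)=\Theta(n\Delta/D)\cdot\Theta(D)=\Theta(n\Delta)=\Theta(n^{1+\epsilon}).
\]

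\textbf{Step 3 (constants).} The main care needed is in the constants. We choose the hidden constant in $r$, and hence in $\Delta$, large enough that the lower bound $|\cP|\cdot D/2$ strictly exceeds $n^{1+\epsilon}$. We also need the resulting $D$ to still be at least $C\,n^{1/3-\epsilon}$ for a small universal constant $C$. Both can be satisfied at once because enlarging $r$ by a constant factor shrinks $D$ only by a constant factor. Hence any certified shortcut of size $n^{1+\epsilon}$ has diameter greater than $D/2\ge C\,n^{1/3-\epsilon}$.

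There is one small technicality. The construction gives $n=\Theta(D^3r^2)$ only for admissible integer parameters, so one picks the nearest admissible $r,D$ and absorbs the resulting discrepancy into $C$. I expect no real obstacle beyond this bookkeeping.
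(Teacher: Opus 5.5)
Your proposal is correct and is essentially the paper's proof. You instantiate \cref{lem:hp} with $d=2$, $r=\Theta(n^{3\epsilon/2})$ and $D=\Theta(n^{1/3-\epsilon})$, so that $|\cP|\cdot D=\Theta(n\Delta)=\Theta(n^{1+\epsilon})$, and then conclude via \cref{lem:shortcut-size}; your Step 3 (enlarging $r$ by a constant factor so that $|\cP|\cdot D/2$ strictly exceeds $n^{1+\epsilon}$) is a more careful, explicit version of the paper's ``adjusting $D$ by a constant factor.''
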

}

\revised{%
\begin{proof}
Let $G$ be the graph promised by \cref{lem:hp} for $d=2$ and $D=\Theta(n^{1/3-\epsilon})$.
Thus we have $r=\Theta(n^{3\epsilon/2})$ and $\Delta=\Theta(n^\epsilon)$.
By \cref{item:hp-P}, there are $|\cP| = \Theta(n^{1+\epsilon}/D)$ critical paths, each of which has $D$ edges and is the unique shortest path between its endpoints.
Additionally, all these critical paths are edge disjoint.
By \cref{lem:shortcut-size}, any certified shortcut $H$ of size $|\cP| \cdot D/2 = \cO(n^{1+\epsilon}) $ has diameter at least $D/2$.
Adjusting $D$ by a constant factor concludes the proof.
\end{proof}
}

\subsection{Hopsets}

In this section, we obtain lower bounds for certified hopsets, hence proving \cref{thm:lb-hs-m,thm:lb-hs-n}.

\revised{%
\begin{lemma}
\label{lem:lb-hs-n}
Let $C>0$ be a sufficiently small universal constant.
For any $\epsilon \in [0,1/3]$, there exists a family of unweighted undirected graphs $G$ with $n$ vertices, such that any certified hopset $H$ of size $n^{1+\epsilon}$ has hopbound at least $C \cdot n^{1/2-3\epsilon/2}$.
\end{lemma}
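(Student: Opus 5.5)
The plan is to combine \cref{lem:hopset-size} with a direct construction: an undirected \emph{unlayered} grid graph whose edge vectors are the convex-hull directions $\cV(r)$. Working directly with hop counts in undirected graphs removes the need for layers. This saves the extra dimension that \cref{lem:hp} spends on layering, and that saving is what moves the exponent from $1/3$ to $1/2$.

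\textbf{Construction.} Let $N = Dr$ and take the vertex set $[N]^2$, so $n = \Theta(D^2 r^2)$. For every point $\mathbf{x}$ and every $\bv \in \cV(r)$, add an undirected unit edge $\{\mathbf{x}, \mathbf{x}+\bv\}$ whenever both endpoints lie in the grid. By \cref{lem:ball-size} the number of edges is $\Theta(n r^{2/3})$. For each direction $\bv \in \cV(r)$ and each lattice line of direction $\bv$ that meets the box $[N/4, N/2]^2$ (say), pick one starting point $\mathbf{s}$ on that line inside the small box. The critical path is then $\mathbf{s}, \mathbf{s}+\bv, \dots, \mathbf{s}+D\bv$; the constants are chosen so that this stays inside $[N]^2$, which holds since $|\bv| \le r$. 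There are $\Omega(Nr/|\bv|\cdot \ldots)$, i.e.\ $\Theta(N^2/(D\cdot\text{const}))$, usable such lines per direction. A line of direction $\bv$ contains about $N/|\bv|$ grid points, and lines of direction $\bv$ cover all $N^2$ points, so roughly $N|\bv|\ge N r/2$ lines meet the box. This gives $|\cP| = \Theta(n r^{2/3}/D)$ in total.

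\textbf{Key steps.}
\begin{enumerate}
\item \emph{Unique shortest paths.} Since $\bv$ is a vertex of the convex hull of $\cB(r)$, there is a linear functional $\phi$ such that $\phi(\bv) = 1$ and $\phi(\mathbf{u}) < 1$ for every other $\mathbf{u} \in \cB(r)$. Every edge vector (including negated ones) lies in $\cB(r)$ because $\cB(r)$ is symmetric. If a path of $k$ edges realizes displacement $D\bv$, then $D = \phi(D\bv) \le k$. Equality forces every step to equal $\bv$. Hence the straight path is the unique shortest path between its endpoints, and it has exactly $D$ edges.
\item \emph{Edge overlaps.} Two critical paths of the same direction lie on distinct parallel lines by construction, so they are disjoint. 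Paths of different directions use edges with different direction vectors, so they cannot share an edge. Thus the paths are edge-disjoint, which is stronger than what \cref{lem:hopset-size} requires.
\item \emph{Parameters.} Set $r = n^{3\epsilon/2}$, so that $|\cV(r)| = \Theta(n^{\epsilon})$, and $D = \Theta(\sqrt{n}/r) = \Theta(n^{1/2 - 3\epsilon/2})$. Then $|\cP|\cdot D/2 = \Theta(n^{1+\epsilon})$. By \cref{lem:hopset-size}, any certified $D/2$-hopset has size at least $|\cP| \cdot D/2$. Choosing the hidden constant in $D$ appropriately makes this exceed $n^{1+\epsilon}$, so any certified hopset of size $n^{1+\epsilon}$ has hopbound at least $C\, n^{1/2-3\epsilon/2}$.
\end{enumerate}
The restriction $\epsilon \le 1/3$ keeps $D \ge 1$.

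\textbf{Main obstacle.} The delicate part is the counting of critical paths. For each direction we must take only one path per lattice line, to avoid long overlaps between same-direction paths, and we must still retain $\Theta(N^2/D)$ paths per direction. The plan is to use the fact that lattice lines of a primitive direction $\bv$ with $|\bv| = \Theta(r)$ each contain about $N/r$ points of the grid, so about $N r$ lines cross the central box. For this I would first check that the vertices of $\cV(r)$ are primitive vectors of length $\Theta(r)$, which holds for hull vertices with positive coordinates up to a constant-fraction restriction. The shortest-path uniqueness via the supporting functional is routine once the convex-position property is invoked.
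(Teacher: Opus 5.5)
Your proposal is correct and is essentially the paper's proof: the paper applies \cref{lem:hs} with $d=1$, which is exactly your unlayered grid (on $[4Dr]^2$) with edge vectors $\cV(r)$. It gets uniqueness of the straight shortest paths from strict convexity, feeds $r=\Theta(n^{3\epsilon/2})$, $D=\Theta(n^{1/2-3\epsilon/2})$ into \cref{lem:hopset-size}, and differs only in selecting the edge-disjoint straight paths greedily with a charging argument rather than one per line through a central box. Two small repairs: with $N=Dr$ the endpoint $\mathbf{s}+D\bv$ leaves the grid, so take $N=4Dr$ with start points in $[Dr,2Dr]^2$; and you do not need primitivity of hull vertices (which fails, e.g., for $(6,8)$ when $r=10$), since one path per orbit $\mathbf{s}+\mathbb{Z}\bv$ already gives vertex-disjoint paths, $\Omega(n/D)$ of them per direction, using only $|\bv|\ge r-\sqrt{2}$.
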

}

\revised{%
\begin{lemma}
\label{lem:lb-hs-m}
Let $C>0$ be a sufficiently small universal constant.
For any $\epsilon \in [0,2/9]$, there exists a family of unweighted undirected graphs $G$ with $n$ vertices and $m=\Theta(n^{1+\epsilon})$ edges, such that any certified hopset $H$ of size $m n^\epsilon$ has hopbound at least $C \cdot n^{1/3-3\epsilon/2}$.
\end{lemma}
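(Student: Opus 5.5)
The plan mirrors the proof of \cref{lem:edge-size-lowerbound}. I would build an unweighted undirected graph with a family $\cP$ of critical paths for hopsets. Each path should have $\Theta(D)$ edges with $D = \Theta(n^{1/3-3\epsilon/2})$, any two paths should share at most one edge, and we want $|\cP| \cdot D = \Omega(m n^\epsilon)$. \cref{lem:hopset-size} then says that any certified hopset with hopbound below $D/2$ has size at least $|\cP|\cdot D/2 = \Omega(m n^\epsilon)$. Adjusting $D$ by a constant factor gives the claim with a small universal $C$.

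\emph{Base construction.} Hopsets only need \emph{unique shortest} paths, not unique paths. So instead of an extra layering dimension I would use the gauge norm of the convex polygon spanned by $\cV(r)$ and its reflections. Take grid points, and join two points by an edge when their difference is $\pm\bv$ for some $\bv \in \cV(r)$. A path of $k$ hops from $\mathbf{p}$ to $\mathbf{p}+k\bv$ sums $k$ vectors from the polygon to $k\bv$. Since $\bv$ is a vertex of the convex hull, it is the unique point of the polygon maximizing $\langle \cdot,\bu\rangle$ for a suitable normal $\bu$. Hence every hop must equal $\bv$, so the straight path is the unique shortest path, and it is unique even as a sequence. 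Used alone on a $2$-dimensional grid of side $Dr$, this is the natural route to \cref{lem:lb-hs-n}: it has degree $\Theta(r^{2/3})$ by \cref{lem:ball-size} and paths of $D$ hops. Lines of different directions meet in at most one vertex. For parallel paths on the same line, I would choose start points from a sparse lattice so that they are edge-disjoint.

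\emph{Sparsification by alternation.} For the $m$-version I would imitate the alternation product of \cref{lem:lvwx} with $d=2$, again in undirected form. Each critical path is indexed by a start point and a pair $(\bv,\bw)\in\cV(r)^2$, and alternates a $\bv$-step and a $\bw$-step. This keeps the degree at $\Delta=\Theta(r^{2/3})$, so $m=\Theta(n\Delta)$, while giving $\Theta(n\Delta^2/D)$ critical paths. Two alternating paths with different pairs share at most one edge, exactly as in item~\ref{item:lvwx-P-overlap}. Uniqueness of shortest paths should again follow from the extreme-point argument, applied separately to the odd steps and to the even steps. To enforce alternation I would either split vertices into two parity classes or use one extra coordinate as a parity/progress tag; in the undirected setting such a coordinate counts hops but need not force monotonicity.

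\emph{Parameters.} We need $\Delta = n^\epsilon$, i.e.\ $r=\Theta(n^{3\epsilon/2})$. We also need $n=\Theta((Dr)^3)$, so that $D = n^{1/3}/r = n^{1/3-3\epsilon/2}$. Then $|\cP| \cdot D = \Theta(n\Delta^2) = \Theta(m n^\epsilon)$, as required. The range $\epsilon\in[0,2/9]$ keeps $D\ge 1$ up to constants.

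The main obstacle is making the alternation product fit into total size $(Dr)^3$ rather than the $D\cdot(Dr)^2$ of a naive layered version. A naive layered version only yields $D=n^{1/3-\epsilon}$-type bounds. At the same time the shortest-path uniqueness must survive in the undirected graph. I would first try the most direct arrangement: a $3$-dimensional grid of side $\Theta(Dr)$ in which odd steps move by $\bv$ inside one coordinate plane and even steps move by $\bw$ inside another. The key check is a linear-functional argument for a suitable normal $\bu$, showing that no mixed sequence of hops reaches the same endpoint in the same number of hops. The remaining work is verifying the one-edge-overlap property under this arrangement.
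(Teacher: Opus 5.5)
Your plan is essentially the paper's proof. The paper invokes \cref{lem:hs} with $d=2$, whose construction is exactly your two-parity-class grid $[2]\times[4Dr]^3$ with odd steps $(y,z,0)$ and even steps $(0,y,z)$, and then plugs $\Delta=n^\epsilon$, $r=n^{3\epsilon/2}$, $D=n^{1/3-3\epsilon/2}$ into \cref{lem:hopset-size} as you do. The checks you left open are the ones carried out in the proof of \cref{lem:hs}: uniqueness comes from a linear functional with strictly positive coordinates, so backward traversals score negatively and each forward step is uniquely maximized by its intended direction, and the one-edge overlap holds because any two consecutive edges determine both direction vectors.
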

}

The proofs of \cref{lem:lb-hs-n,lem:lb-hs-m} are based on graph constructions in \cref{lem:hs}, whose proof is deferred to \cref{sec:proof-hs}.

\begin{restatable}[]{lemma}{lemhs}
\label{lem:hs}
For $d \in \set{1,2}$ and $D,r>0$, there exists a family of unweighted undirected graphs\footnote{The requirement $d \in \set{1,2}$ is due to technical reasons in undirected settings. If we are only concerned with DAGs, the same lemma holds as is for any $d>0$. See \cref{sec:proof-hs} for more details.} such that:
\begin{enumerate}
    \item \label{item:hs-N} It has $n=\Theta(dD^{d+1}r^{d+1})$ vertices and $m=\Theta(n\Delta)$ edges, where $\Delta=\Theta(r^{2/3})$ is the maximum degree.
    \item \label{item:hs-P} It has a set $\cP$ of $\Theta(n\Delta^d/(dD))$ critical paths such that:
        \begin{enumerate}
            \item \label{item:hs-P-length} Each path has $dD$ edges.
            \item \label{item:hs-P-unique} Each path is the unique shortest path between its endpoints.
            \item \label{item:hs-P-overlap} The intersection between any two paths is a subpath with at most $d-1$ edges.
        \end{enumerate}
\end{enumerate}
\end{restatable}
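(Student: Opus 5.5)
The plan is to build the graph as an explicit layered graph in the style of \cite{Hesse03,HuangP21}, with every edge joining consecutive layers, and then to use the layering to reduce the undirected statement to the directed one. Concretely, I would take $dD+1$ layers $L_0,\dots,L_{dD}$. Each layer is a copy of a box of integral grid points in the plane. A vertex $\mathbf{x}\in L_i$ is joined to $\mathbf{x}+\mathbf{w}\in L_{i+1}$ for every $\mathbf{w}$ in the direction set $W_{i \bmod d}$ of that layer transition. For $d=1$ I take $W_0=\cV(r)$. For $d=2$ I use an alternation of two direction sets $W_0,W_1$, each a copy of $\cV(r)$, placed in angular sectors that are far apart. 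By \cref{lem:ball-size}, the degree is then $\Delta=\Theta(r^{2/3})$.

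A critical path is determined by a start vertex in $L_0$ and a choice of one direction $\mathbf{w}_j\in W_j$ for each $j<d$; it then uses $\mathbf{w}_{i\bmod d}$ in the step from $L_i$ to $L_{i+1}$. I restrict to start vertices from which the whole path stays inside the box. The side lengths of the box are then chosen so that the counts come out as $n=\Theta(dD^{d+1}r^{d+1})$ and $|\cP|=\Theta(n\Delta^d/(dD))$. This is a routine calculation: a constant fraction of the first layer are valid start vertices, and there are $\Delta^d$ direction choices.

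The key structural step is the following. In an undirected layered graph every edge changes the layer index by exactly $\pm1$, so any path from $L_0$ to $L_{dD}$ has at least $dD$ edges, with equality iff it is monotone. Hence the shortest paths between the endpoints of a critical path are exactly the monotone paths between them, and it suffices to prove that the monotone path is unique, which is the DAG statement.

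For $d=1$, a monotone path uses vectors $\mathbf{a}_1,\dots,\mathbf{a}_D\in\cV(r)$ with $\sum_i\mathbf{a}_i=D\mathbf{w}$. Taking inner products with $\mathbf{w}$, and using that $\mathbf{w}$ is a vertex of the convex hull of $\cB(r)$, gives $\langle\mathbf{a}_i,\mathbf{w}\rangle\le\langle\mathbf{w},\mathbf{w}\rangle$ with equality only when $\mathbf{a}_i=\mathbf{w}$. (More precisely, I would use a supporting normal at $\mathbf{w}$ rather than $\mathbf{w}$ itself.) It follows that all $\mathbf{a}_i=\mathbf{w}$.

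For $d=2$ I need to decouple the two sums $\sum\mathbf{a}_i$ over $W_0$ and $\sum\mathbf{b}_i$ over $W_1$. I would first try to choose $W_1$ as a scaled copy $M\cdot\cV(r)$ with $M$ a large constant, or as a rotated copy, so that the total $D(\mathbf{w}_0+\mathbf{w}_1)$ determines both partial sums. If that fails, I fall back to a single linear functional that is maximized uniquely at $\mathbf{w}_0$ on $W_0$ and at $\mathbf{w}_1$ on $W_1$. This decoupling is the step I expect to be the main obstacle, and it is presumably why the lemma is restricted to $d\le 2$. Routing the alternation through separate coordinates would cost an extra factor in $n$. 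A decoupling that fails would let back-and-forth or mixed-direction monotone paths tie in length.

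Finally, for the overlap property, two distinct critical paths that share $d$ consecutive edges share a vertex and all $d$ direction choices. They therefore coincide, so any common part is a single subpath of at most $d-1$ edges, and I would check that it is indeed a contiguous subpath. Each critical path has $dD$ edges by construction, which completes the list of required properties.
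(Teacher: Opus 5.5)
Your proposal has two genuine gaps. Both come from building the graph as explicit planar layers.

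\textbf{Layering breaks the vertex count for $d=1$.} Your reduction ``shortest $=$ monotone'' via the layer index is valid, but the layering it needs costs a factor $D$. A path in direction $\mathbf{w}\in\cV(r)$ has displacement $D\mathbf{w}$ of length about $Dr$, and these displacements spread over the whole quarter circle. So for $\Theta(\Delta^d)$ direction choices to fit at a constant fraction of start vertices, each of your $dD+1$ boxes needs side $\Omega(Dr)$, and hence $n=\Omega(dD^3r^2)$. For $d=1$ you cannot shrink this to the required $\Theta(D^2r^2)$. The loss is not cosmetic: fed into \cref{lem:lb-hs-n}, it gives a hopbound lower bound of only $n^{1/3-\epsilon}$ instead of $n^{1/2-3\epsilon/2}$.

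The paper uses no layers. It takes $V=[d]\times[4Dr]^{d+1}$ with edges $(i,\mathbf{u})\to((i\bmod d)+1,\mathbf{u}+\be_i(\mathbf{v}))$, so for $d=1$ the graph is the single grid $[4Dr]^2$ with edges $\mathbf{u}\to\mathbf{u}+\mathbf{v}$. All $D+1$ positions along a path live in one copy of the grid rather than in $D+1$ separate boxes. Since no layer index exists there, undirected uniqueness is proved with a linear potential instead:
\begin{itemize}
\item for $d=1$, at most $D$ steps from $\cV(r)\cup(-\cV(r))\subseteq\cB(r)$ summing to $D\mathbf{v}$ must all equal $\mathbf{v}$, by strict convexity;
\item for $d=2$, the graph is bipartite between the two copies, and a single functional is uniquely maximized by the forward critical step at each parity and is negative on every reversed step.
\end{itemize}

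\textbf{For $d=2$ the planar decoupling is impossible.} Suppose $\mathbf{w}_0+\mathbf{w}_1$ is not a vertex of $\mathrm{conv}(W_0)+\mathrm{conv}(W_1)$. Then it is a convex combination of other sums $\mathbf{w}_0'+\mathbf{w}_1'$, with rational weights of common denominator $N=O(r^2)$ (barycentric coordinates in a lattice triangle). Once $D\ge N$, trading $N$ step pairs of the critical path for these pairs yields a second monotone path with the same endpoints, at least for start vertices away from the boundary. A Minkowski sum of two convex polygons has at most $|W_0|+|W_1|=2\Delta$ vertices. So at most $2\Delta$ of the $\Delta^2$ direction pairs can give critical paths, and $|\cP|=\Theta(n\Delta^2/(2D))$ fails. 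Equivalently, the two normal fans have at most $2\Delta$ common cells, which also rules out your single-functional fallback.

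Scaling $W_1$ by a constant or rotating it changes nothing. Scaling by $M=\Omega(Dr)$ does decouple the sums, but it inflates each layer to $\Omega(D^4r^4)$ points. The third coordinate you wanted to avoid is exactly what the factor $r^{d+1}$ in $n$ pays for. The paper works in $[4Dr]^3$ with $\be_1(\mathbf{v})=(y,z,0)$ and $\be_2(\mathbf{v})=(0,y,z)$. The shared middle coordinate lets a supporting normal $(\alpha,1)$ at $\mathbf{v}_1$ and a normal $(1,\beta)$ at $\mathbf{v}_2$ be glued into the single functional $(\alpha,1,\beta)$, which exposes every one of the $\Delta^2$ pairs.
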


We now use the graph construction of \cref{lem:hs} to prove the desired lower bounds.

\begin{proof}[Proof of \cref{lem:lb-hs-n}]
Let $G$ be the graph promised by \cref{lem:hs} for $d=1$ and $D=\Theta(n^{1/2-3\epsilon/2})$.
Thus we have $r=\Theta(n^{3\epsilon/2})$ and $\Delta=\Theta(n^\epsilon)$.
By \cref{item:hs-P}, there are $|\cP| = \Theta(n^{1+\epsilon}/D)$ critical paths, each of which has $D$ edges and is the unique shortest path between its endpoints.
Additionally, all these critical paths are edge disjoint.
By \cref{lem:hopset-size}, any certified hopset $H$ of size $|\cP| \cdot D/2 = \cO(n^{1+\epsilon}) $ has hopbound at least $D/2$.
Adjusting $D$ by a constant factor concludes the proof.
\end{proof}

\begin{proof}[Proof of \cref{lem:lb-hs-m}]
Let $G$ be the graph promised by \cref{lem:hs} for $d=2$ and $D=\Theta(n^{1/3-3\epsilon/2})$.
Thus we have $r=\Theta(n^{3\epsilon/2})$ and $\Delta=\Theta(n^\epsilon)$.
By \cref{item:hs-N,item:hs-P}, we have $m = \Theta(n^{1+\epsilon})$ edges and $|\cP| = \Theta(m n^\epsilon / D)$ critical paths, each of which has $2D$ edges and is the unique shortest path between its endpoints.
Additionally, any two of these critical paths share at most one edge.
By \cref{lem:hopset-size}, any certified hopset $H$ of size $|\cP| \cdot D = \cO(m n^\epsilon)$ has hopbound at least $D$.
Adjusting $D$ by a constant factor concludes the proof.
\end{proof}

%% file: existing.tex
\section{Certifiability and Non-Certifiability of Existing Approaches}
\label{sec:existing}

In this section, we give further evidence showing that our definition of certifiability indeed captures the notion of constructiveness, instead of just existence, of shortcuts and hopsets, as claimed in \cref{thm:existing}.
Specifically, we observe in \cref{sec:existing-cert} that shortcuts constructed by known efficient algorithms are either certified or at least have low certification complexity.
On the other hand, in \cref{sec:existing-noncert}, we prove unconditional lower bounds suggesting that known existential proofs may not have efficient implementations to construct certified shortcuts or hopsets due to large certification complexity.

In the proofs, we will very briefly summarize the high-level ideas of each approach that are relevant.
For more details, we refer readers to the corresponding papers.

\subsection{Known Efficient Algorithms are Certifiable}
\label{sec:existing-cert}

As discussed in the prior work, there are two known approaches to designing efficient algorithms for constructing shortcuts:
\begin{itemize}
    \item The pivot-based approach of \cite{Fineman18} obtaining diameter $\tilde{O}(n^{2/3})$, later refined by \cite{JambulapatiLS19} obtaining diameter $n^{1/2 + o(1)}$. Notably, these are parallel algorithms, with work $\tilde{O}(m)$ and respective depths
    matching their diameter bounds.
    \item The chain cover-based approach of \cite{KoganP22}, with two different, almost-linear time algorithms from \cite{KoganP23} and \cite{bals2025greedy} (Section 6). These algorithms are sequential, but are deterministic and obtain diameter $O(\sqrt{n})$.
\end{itemize}

In \Cref{sec:pivot-alg}, we will observe that the shortcuts produced by the pivot-based algorithms are in fact always certified. This follows immediately from \cref{lem:intree-outtree-certified}, as the algorithms only add edges in the form of shortcuts for subtrees, connecting all vertices in the tree to the root.

\begin{restatable*}{observation}{pivotbasedcertified} \label{obs:pivot-based-certified}
    The shortcuts produced by both the sequential and parallel versions of the algorithms of \cite{Fineman18} and \cite{JambulapatiLS19} are certified, and thus in particular have certification complexity equal to their size of $\tilde{O}(n)$.
\end{restatable*}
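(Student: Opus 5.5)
The plan is to recall how the pivot-based algorithms of \cite{Fineman18} and \cite{JambulapatiLS19} add edges, and to show that every batch of added edges has the form covered by \cref{lem:intree-outtree-certified}.

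In both algorithms, at each recursion level and in each subproblem (a vertex-induced subgraph $G'$ of $G \cup H_{\text{so far}}$), a set of pivots $x$ is sampled. For each pivot $x$ the algorithm computes its ancestor set $R^-(x)$ and descendant set $R^+(x)$ within $G'$, and then adds the edges $(u,x)$ for $u \in R^-(x)$ and $(x,v)$ for $v \in R^+(x)$. The parallel version computes these sets by depth-bounded BFS, or recursively via the already-added shortcuts as in \cite{JambulapatiLS19}. In every case $R^-(x)$ is the vertex set of an in-tree rooted at $x$ in the graph $G \cup H_{\text{prev}}$. This in-tree is the BFS tree, and its tree edges are edges of $G$ or shortcut edges that were added earlier. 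Symmetrically, $R^+(x)$ spans an out-tree rooted at $x$. Variants that only shortcut to vertices within distance $k$ (the $k$-limited searches in \cite{JambulapatiLS19}) still produce such trees, because the truncated BFS tree is still a tree rooted at $x$.

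Next I would prove that the union of the edge batches is certified. I would apply a version of \cref{lem:intree-outtree-certified} with $G$ replaced by the current graph $G \cup H_{\text{prev}}$. That lemma gives, for each new edge $(u,x)$, a parent edge $(u,p)$ in $G\cup H_{\text{prev}}$ and a new edge $(p,x)$ in the same batch, which together certify $(u,x)$. Certification is monotone under adding further edges, since the certifying edges stay present in $E \cup H$. An induction over the order of insertion therefore shows that the final $H$ is certified. In fact this argument exhibits a $2$-hop certified shortcutting procedure for $H$, so \cref{lem:equi-seq-sc} confirms certification directly.

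It follows that $\mathcal{C}(H,G)=|H|$. The size bound $\tilde{O}(n)$ is quoted directly from the analyses in \cite{Fineman18,JambulapatiLS19}.

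The main obstacle is being faithful to the exact algorithms. In particular, I need to check that the parallel implementation of \cite{JambulapatiLS19}, with its recursion on sub-DAGs and its use of previously added shortcuts, really only adds star edges whose leaf sets are spanned by rooted trees in the current graph. It must not, for example, add edges derived from reachability computed in a way that is not witnessed by a tree. I would handle this by noting that any reachability set computed by a search rooted at $x$ induces such a search tree, whatever the search order. So I only need to confirm that every added edge is incident to the pivot of the search that discovered its other endpoint.
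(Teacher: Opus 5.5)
Your proposal is correct and follows essentially the same route as the paper. In both, every batch of added edges is a star between a pivot $x$ and a reachability set, possibly distance-truncated in the parallel versions. That set is closed under taking shortest paths to or from $x$, so it is spanned by an in- or out-tree rooted at $x$, and \cref{lem:intree-outtree-certified} then gives certification and $\mathcal{C}(H,G)=|H|$ directly. Your extension allowing tree edges from earlier shortcut batches (applying the lemma in $G \cup H_{\text{prev}}$) is a harmless generalization beyond what the paper's sketch uses.
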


In \Cref{sec:chain-alg}, we consider the two algorithms of \cite{KoganP23} and \cite{bals2025greedy}. From the former algorithm's analysis, one can easily observe the existence of a certified extension of asymptotically the same size as the shortcut. For the latter algorithm, bounding the certification complexity is more involved, requiring a closer look into the data structures it uses, specifically balanced binary trees. We will bound the certification complexity by the work done to maintain these binary trees, and thus only obtain a certification complexity of $\tilde{O}(m)$, rather than $\tilde{O}(n)$.

\begin{restatable*}{observation}{diamdomibasedcertified} \label{obs:diam-domi-based-certified}
    The $O(\sqrt{n})$-diameter, $\tilde{O}(n)$-size shortcut produced by the almost-linear-time algorithm of \cite{KoganP23} has certification complexity $\tilde{O}(n)$.
\end{restatable*}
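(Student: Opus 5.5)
I would first recall how the almost-linear-time algorithm of \cite{KoganP23} produces its $O(\sqrt{n})$-diameter shortcut $H$. At a high level it computes a chain cover of $G$ with about $\sqrt{n}$ chains, where any path has only $O(\sqrt{n})$ vertices outside the chains. It then adds two kinds of edges. The first kind are edges along each chain, so that consecutive chain vertices are joined. The second kind are, for every vertex $v$ and every chain $C$, an edge from $v$ to the earliest vertex of $C$ reachable from $v$, and symmetrically an edge into $v$ from the latest vertex of $C$ reaching $v$. The analysis in \cite{KoganP23} computes these reachability pointers by a DP over a topological order. The plan is to read a certificate directly off this DP: every pointer is obtained by combining one existing edge with one pointer that was already computed, and this is exactly the certifying pattern of Definition \ref{def:cert-sc}.

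The key steps, in order, are as follows.

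\textbf{Step 1.} Handle the chain edges $(c_i,c_{i+1})$. Each such pair is realized by some path in $G$, but it need not be a single edge. I would certify these edges via the pointer structure rather than via paths. Since $c_{i+1}$ is the earliest vertex of $C$ reachable from $c_i$ after $c_i$, the edge $(c_i,c_{i+1})$ is itself an out-pointer. So chain edges reduce to the pointer case.

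\textbf{Step 2.} Handle a pointer edge $(v, f_C(v))$, where $f_C(v)$ is the earliest vertex of $C$ reachable from $v$. If $v\in C$, or if some out-edge $(v,x)\in E$ has $x=f_C(v)$, the edge is either already in $G$ or trivial. Otherwise $f_C(v)=f_C(x)$ for the out-neighbor $x$ that attains the minimum. Then $(v,x)\in E$ and $(x,f_C(x))\in H'$ certify $(v,f_C(v))$. This forces $H'$ to contain pointers for all intermediate $x$. The in-pointers are symmetric.

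\textbf{Step 3.} Bound the size. Taking $H'$ to be all pointers, for all vertices and all $\sqrt{n}$ chains, gives $O(n\sqrt{n})$ edges, which is too many. Instead I would use the sparsification that \cite{KoganP23} already relies on. Their shortcut only stores pointers for $\tilde O(n)$ (vertex, chain) pairs, namely those that are needed, such as pointers from chain vertices, or pointers deduplicated along a chain because $f_C$ is monotone along chains. The certifying vertex $x$ then has to be chosen inside this stored family. Concretely, for a stored pointer $(v,f_C(v))$, I would certify it by $(v,v')$ and $(v',f_C(v))$. Here $v'$ is either the successor of $v$ on $v$'s own chain, in which case $f_C(v')=f_C(v)$ by monotonicity, or the out-neighbor realizing the minimum. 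The size argument is charging: each stored pointer creates at most one new certifying pointer, and these new pointers are themselves of stored type. This yields $|H'|=O(|H|)=\tilde O(n)$.

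The main obstacle is Step 3. One must ensure that the certifying intermediate pointers stay within an $\tilde O(n)$-size family closed under this certification rule, rather than expanding to all $n\sqrt{n}$ pairs. If direct closure fails, my fallback is to use the $O(\sqrt n)$ off-chain vertices per path. I would certify each needed pointer through a short chain walk, using $O(\log n)$-size certified chain shortcuts from Lemma \ref{lem:certified-shortcut-subpaths} applied to each chain. This costs only $\tilde O(n)$ in total and lets any jump along a chain be certified in two hops.
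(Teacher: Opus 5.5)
\textbf{There is a genuine gap.} It begins with your description of the algorithm. The almost-linear-time $O(\sqrt{n})$ construction of \cite{KoganP23} adds no vertex-to-chain pointer edges. Those belong to the $O(n^{1/3})$ construction of \cite{KoganP22}, whose pointer part is conditionally hard to compute in almost-linear time.

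What the algorithm actually does is the following. It computes a diameter-dominating path set from a min-cost flow on a modified gadget graph, in which each $v^{in}$--$v^{out}$ pair gets a capacity-$1$, cost-$(-2)$ edge and a capacity-$\infty$, cost-$(+1)$ edge. It shortcuts each resulting chain with $\tilde{O}(|C_i|)$ 2-hop edges, and it repeats this $O(\log n)$ times.

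The only obstacle to certifying such a shortcut is that consecutive chain vertices are joined by paths, not edges. So everything hinges on the total length of those connecting paths, and this is the idea you are missing: the cost structure bounds exactly that length. The negative-cost edges have total capacity-times-cost $-2n$, so at most $2n$ units of flow can cross the cost-$(+1)$ edges. Hence the flow paths $P_1,\dots,P_\ell$ have total length $O(n)$. Take $H'$ to be the union of the certified shortcuts of \cref{lem:certified-shortcut-subpaths} for all $P_i$. This costs $O(n\log n)$ edges per round, and it certifies every chain shortcut edge because $C_i$ is a subsequence of $P_i$. Later rounds may use earlier shortcut edges, but those are already certified, so summing over the $O(\log n)$ rounds gives $\tilde{O}(n)$.

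Your Step 3 does not substitute for this. Certifying $(v,f_C(v))$ through an out-neighbor $x$ introduces $(x,f_C(x))$, and nothing makes this a ``stored'' pointer. It therefore needs its own certificate through its own out-neighbor, and the recursion walks an entire $G$-path down to $f_C(v)$. The same happens if you route through $v$'s chain successor, since that chain edge must itself be certified along a $G$-path. The claimed closure of the stored family is unjustified, and the real cost is the total length of these paths, which you never bound.

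Nothing in your argument uses a property specific to the chains of \cite{KoganP23}, and no such generic argument can work. \cref{lem:noncert-cc} exhibits $\Theta(\sqrt{n})$-chain covers for which certifying just the edges between consecutive chain vertices already forces $\Omega(mn^{\epsilon})$ edges. Your fallback is circular for the same reason. \cref{lem:certified-shortcut-subpaths} applies to a path, and applying it to a chain presupposes that the edges $(c_i,c_{i+1})$ are already present and certified, which is precisely what must be shown.
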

\begin{restatable*}{lemma}{coverbasedcertified} \label{lem:cover-based-certified}
    The $O(\sqrt{n})$-diameter, $\tilde{O}(n)$-size shortcut produced by the almost-linear-time algorithm of \cite{bals2025greedy} has certification complexity $\tilde{O}(m)$.
\end{restatable*}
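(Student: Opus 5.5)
We prove \cref{lem:cover-based-certified} by charging: every shortcut edge the algorithm of \cite{bals2025greedy} adds must be witnessed by a certified shortcut assembled from pieces whose total size is bounded by the work of the algorithm's balanced-tree data structures. Recall the chain-cover approach. The algorithm computes an $\ell$-chain cover with $\ell=\Theta(\sqrt n)$, processes vertices in (reverse) topological order, and for each vertex $v$ and each chain $C$ tracks the earliest vertex of $C$ reachable from $v$. It then adds shortcut edges from $v$ to (selected) such earliest chain vertices. In the almost-linear-time implementation, this per-vertex information is stored in balanced binary search trees (treaps) keyed by chain index. The tree for $v$ is formed by merging the trees of its out-neighbours, taking the elementwise minimum, in time $\tilde O(1)$ per edge amortized.

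The plan is to give each treap node a graph vertex in the certificate, or more precisely to interpret each treap operation as adding $O(1)$ certified edges. The key invariant is as follows. For every treap $T_v$ maintained for vertex $v$ and every entry $(C,x)$ stored in it, we want $G\cup H'$ to contain an edge $(v,x)$ or a $2$-hop certifiable route to it. We establish this invariant in three steps.

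\begin{enumerate}
\item \emph{Chain edges.} First, add to $H'$ certified shortcuts along each chain. Consecutive chain vertices are only reachable, not adjacent, so we first certify each chain link by a path and then apply \cref{lem:certified-shortcut-subpaths}. Since chains are vertex-disjoint, this costs $\tilde O(n)$ in total, provided the reachability paths between consecutive chain vertices can themselves be charged. We charge these to the chain-cover construction, which in \cite{bals2025greedy} comes from a flow computation and whose paths have total size $\tilde O(m)$.
\item \emph{Merges.} When $v$ inherits an entry $(C,x)$ from an out-neighbour $u$, we have $(v,u)\in E$, and by induction $(u,x)$ is in $G\cup H'$. Adding $(v,x)$ is then certified via $u$. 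Each merge creates only as many new entries as the number of treap nodes touched, which is bounded by the work.
\item \emph{Shortcut edges.} The actual shortcut edges $(v,x)$ added by the algorithm are among these entries, so $H\subseteq H'$ follows.
\end{enumerate}

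The main obstacle is that treaps share structure persistently, or are merged destructively with elementwise minima, so an entry for $v$ is often not explicitly created but inherited wholesale by pointer. Materializing $(v,x)$ for every inherited entry would cost $\Theta(n\ell)$. To handle this, I would certify only at the granularity of treap nodes. We introduce, conceptually, one certificate edge per (vertex, treap node created or modified while processing that vertex), and exploit subtree sharing: an untouched subtree shared from $u$ is handled by certifying only the edge to its root entry, with $v$ reaching the deeper entries through $u$'s already-certified edges. This guarantees only that $v$ reaches $x$ within a few hops rather than by one certified edge, so we must make sure the edges the algorithm actually outputs are always freshly touched entries. The minimum comparisons force the relevant path to be touched, which gives this. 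Counting touched nodes bounds $|H'|$ by the total treap work, which is $\tilde O(m)$, yielding certification complexity $\tilde O(m)$.
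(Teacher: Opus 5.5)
\textbf{There is a genuine gap, and it sits in your step 1, which carries the whole difficulty of the lemma. Steps 2--3 concern a different algorithm.}

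The $O(\sqrt n)$-diameter shortcut of \cite{bals2025greedy} consists only of diameter-$2$ certified shortcuts of the chains $C_1,\dots,C_\ell$ of an $\ell$-chain cover. Edges from a vertex to its earliest reachable vertex on each chain are the $H_2$ part of \cite{KoganP22}. No near-linear algorithm computes that part, and by \cref{lem:noncert-kp} it has large certification complexity. Your assumed $\tilde O(1)$ amortized cost per edge for elementwise-minimum merges is also unjustified. In the actual algorithm, the treap at $v$ stores the indices of the flow-decomposition paths $P_1,\dots,P_\ell$ passing through $v$. It is manipulated by \textsc{split}/\textsc{join} precisely so that the chains can be read off without walking the paths.

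So the only thing to prove is that the chain links can be certified cheaply. Your justification, that the paths $P_i$ have total size $\tilde O(m)$, is false. Their total length is $\sum_e f(e)$, which can be about $\ell$ times the length of a longest path. For example, take $\ell$ sources feeding into a long path that fans out to $\ell$ sinks: every min-cost flow path must traverse the long path, giving total length $\Theta(n^{3/2})$ with $m=O(n)$. This is exactly why naive extraction is too slow. Contrast \cite{KoganP23}, whose modified gadget forces total length $O(n)$ and makes that case easy. Certifying each link by its own subpath therefore costs superlinearly.

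\textbf{The missing idea is to certify a path's links through the shortcut of the path it travels alongside, as revealed by the treap structure.} Assume \textsc{member} returns the root and that priorities increase with the path index.
\begin{itemize}
\item Call $v\in P_i$ \emph{important} for $i$ if $i$ is the root of the treap at $v$, or if the subtree or parent of $i$ differs between the treap at $v$ and the treap at the predecessor or successor of $v$ on $P_i$.
\item Every such change involves a node visited by a treap operation, so the important chains $C'_i\supseteq C_i$ have total size $O(m\log n)$, bounded by the treap work.
\item Build certified shortcuts $H'_i$ of the $C'_i$ in increasing order of $i$ using \cref{lem:certified-shortcut-subpaths}. $C'_1$ is a genuine path, since index $1$ is always a root.
\item For $i>1$, suppose consecutive important vertices $u,v$ of $C'_i$ are not adjacent on $P_i$. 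Then $i$ has the same treap parent $j<i$ at $u$ and at $v$, and $u,v$ are important for $j$ as well.
\item By induction, $G\cup\bigcup_{j'\le j}H'_{j'}$ already contains the edge $(u,v)$ or a $2$-hop $u$--$v$ path. This certifies $(u,v)$ and reduces $C'_i$ to the path case.
\end{itemize}
This recursive charging along the treap hierarchy, not any bound on total path length, is what yields $\tilde O(m)$.
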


\subsubsection{Pivot-Based Algorithms}\label{sec:pivot-alg}

In the sequential version of the pivot-based approach of \cite{Fineman18} (given in \Cref{alg:fineman}), a random \textit{pivot} vertex $x$ is selected. Then, shortcuts are added from every vertex that can reach $x$ to $x$, and from $x$ to every vertex that can be reached from $x$. With these shortcuts in place, the algorithm then recursively constructs shortcuts for three disjoint subsets of vertices: those that cannot reach $x$ or be reached from $x$, those that can only reach $x$, and those that can only be reached from $x$.

Note that it is highly nontrivial that adding shortcuts in this way reduces the lengths of shortest paths between vertices that appear in different subsets. We do not replicate the analysis from Fineman's paper \cite{Fineman18}, as a full understanding of the algorithm is not necessary to observe that any shortcut constructed by the algorithm is already certified by \Cref{lem:intree-outtree-certified}.

\begin{algorithm}[tp]
    \caption{The sequential version of Fineman's shortcutting algorithm \cite{Fineman18}.}
    \label{alg:fineman}
    \begin{algorithmic}[1]
        \Function{Fineman}{$G=(V, E)$}
            \If{$V = \emptyset$} \Return $\emptyset$ \EndIf
            \State Select a pivot $x \in V$ uniformly at random
            \State Let $R^+$ denote the set of vertices reachable from $x$ in $G$
            \State Let $R^-$ denote the set of vertices that can reach $x$ in $G$
            \State \tikz[baseline=(a.base),inner sep=0.01cm]\node[fill=black!15](a){
                Let $S := \{(x, v) \mid v \in R^+\}\ \cup\ \{(v, x) \mid v \in R^-\}$
            };
            \State $V_B := R^+ \cap R^-$
            \State $V_S := R^+ \setminus V_B$
            \State $V_P := R^- \setminus V_B$
            \State $V_R := V \setminus (V_B\ \cup\ V_S\ \cup\ V_P)$ 
            \State \Return $S\ \cup\ \Call{Fineman}{G[V_B]}\ \cup\ \Call{Fineman}{G[V_P]}\ \cup\ \Call{Fineman}{G[V_R]}$
        \EndFunction
    \end{algorithmic}
\end{algorithm}

The sequential version of the shortcut algorithm of \cite{JambulapatiLS19} is given in \Cref{alg:janbulapati}. The core idea behind their approach is to sample multiple pivots at once, add shortcut edges to and from each of the sampled pivots, and then to recurse into disjoint induced subgraphs determined by reachability from the whole pivot set. The analysis of the algorithm, including why sampling multiple pivots at once results in better diameter guarantees, is once again quite complicated. Thankfully, we can easily observe that the changes to the algorithm do not change that the shortcut produced is certified, by exactly the same argument as for \cite{Fineman18}.

For the parallel versions of the two algorithms, the sets $R^+$ and $R^-$ are limited to only contain vertices that can reach the pivot or be reached from the pivot by paths of at most a certain length (as $R^+$ and $R^-$ can then be constructed with depth proportional to this distance limit). This change requires the rest of the algorithm to get significantly more complex, but again does not change that the produced shortcut is certified: if a vertex is contained in $R^-$, so is every shortest path from it to the pivot.

As claimed prior, we can thus observe the following:

\pivotbasedcertified

\begin{algorithm}[tp]
    \caption{The sequential version of Jambulapati-Liu-Sidford's shortcutting algorithm \cite{JambulapatiLS19}.}
    \label{alg:janbulapati}
    \begin{algorithmic}[1]
        \Function{JLS}{$G=(V, E)$, $k$, $r$}
            \State Let $p_r := \min \left\{1, \frac{20 k^{r + 1} \log n}{n}\right\}$
            \State Let $X$ be a subset of vertices containing each vertex independently with probability $p_r$
            \State $S \gets \emptyset$
            \For{$x \in X$}
                \State Let $R^+$ denote the set of vertices reachable from $x$ in $G$
                \State Let $R^-$ denote the set of vertices that can reach $x$ in $G$
                \State \tikz[baseline=(a.base),inner sep=0.01cm]\node[fill=black!15](a){
                    $S \gets S \cup \{(x, v) \mid v \in R^+\}\ \cup\ \{(v, x) \mid v \in R^-\}$
                };
                \State Add label $(x, +)$ to all vertices of $R^+ \setminus R^-$
                \State Add label $(x, -)$ to all vertices of $R^- \setminus R^+$
                \State Add label \textsc{DONE} to all vertices of $R^+ \cap R^-$
            \EndFor
            \State Let $W$ be the subset of vertices that do not have the label \textsc{DONE}
            \State Let $V_1, \dots, V_\ell$ be a partition of $W$ by unique label combination
            \For{$i \in [\ell]$}
                \State $S \gets S \cup \Call{JLS}{G[V_i], k, r + 1}$
            \EndFor
            \State \Return $S$
        \EndFunction
    \end{algorithmic}
\end{algorithm}

\subsubsection{Chain Cover-Based Algorithms} \label{sec:chain-alg}

Recall from the preliminaries the definitions of a \textit{chain} and an \textit{$\ell$-chain cover}: on a graph $G$, a chain $C = (v_1, \dots, v_k)$ is a sequence of vertices where $v_i$ can reach $v_{i + 1}$ for each $i \in [k - 1]$, and an $\ell$-chain cover is a collection of at most $\ell$ vertex-disjoint chains, such that any path on $G$ contains at most $O(n / \ell)$ vertices that are not part of any chain. By \cite{Raskhodnikova10}, we can always add $\tilde{O}(k)$ edges to reduce the diameter of any chain $C$ to $2$.
Meanwhile, it is not hard to verify that the edges produced by the construction of \cite{Raskhodnikova10} are certified, given that consecutive vertices on the chain are already connected by edges.

Let $C_1, \dots, C_\ell$ be an $\ell$-chain cover, and let $H_1$ be a shortcut for each of the chains. Then, in the graph $G \cup H_1$, any shortest path contains at most three vertices from each chain: if it contained more, one could replace the subpath from the first to the last with a path of length $2$, reducing the length of the shortest path, a contradiction. Additionally, any simple path (including any shortest path) contains at most $O(n / \ell)$ vertices that are not part of any chain. Thus, the diameter of the graph $G \cup H_1$ is at most $O(\ell + n / \ell)$, which in particular is $O(\sqrt{n})$ for $\ell = \Theta(\sqrt{n})$.

The $O(n^{1/3})$-diameter shortcut $H = H_1 \cup H_2$ of \cite{KoganP22} includes a second part $H_2$, with shortcut edges from randomly sampled vertices to a randomly sampled subset of chains. Unfortunately, computing the shortcut $H_2$ is conditionally hard in almost-linear time \cite{KoganP23}.\footnote{To give further evidence supporting this hardness, we show later in \Cref{sec:existing-noncert-sample} that the shortcut $H_2$ indeed has certification complexity $m^{1 + \Omega(1)}$. See \cref{lem:noncert-kp} for more details.}
As a result, known almost-linear-time algorithms cannot deal with this part, and consequently only obtain shortcuts of diameter $O(\sqrt{n})$. 

\paragraph{Computing a chain cover.} As discussed above, to construct an $O(\sqrt{n})$-diameter shortcut, it suffices to compute an $\ell$-chain cover for $\ell=O(\sqrt{n})$. \cite{KoganP22fast} gives a method to extract such a chain cover from a min-cost $\ell$-value flow on a gadget graph $\tilde{G}$, constructed as follows:
\begin{itemize}
    \item Add a source $s$ and a sink $t$.
    \item For every vertex $v \in G$,
    \begin{itemize}
        \item Add two vertices $v^{in}$ and $v^{out}$.
        \item Add two parallel edges from $v^{in}$ to $v^{out}$, one with capacity $1$ and cost $-1$, and the other with capacity $\infty$ and cost $0$.
        \item Add edges from $s$ to $v^{in}$ and $v^{out}$ to $t$ with capacity $\infty$ and cost $0$.
    \end{itemize}
    \item Finally, for every edge $e = (v, u) \in E$, add an edge from $v^{out}$ to $u^{in}$ of capacity $\infty$ and cost $0$.
\end{itemize}
Let $P_1, \dots, P_\ell$ be an arbitrary integral flow decomposition of a min-cost $\ell$-value flow on $\tilde{G}$. For each path $P_i$, we can define a corresponding chain $C_i$, by including in the chain the vertices $v$ for which the path $P_i$ takes the unit-capacity edge from $v^{in}$ to $v^{out}$ with cost $-1$. These chains form an $\ell$-chain cover: suppose that there existed some path $P'$ in the graph containing strictly more than $n / \ell$ vertices not in any chain. Then, the min-cost flow would not be optimal: the cost contribution of the worst path $P_i$ is at least $-n / \ell$, as the total cost of all paths is at least $-n$, while adding $P'$ to the flow would cost strictly less than $-n / \ell$.

The graph $\tilde{G}$ has asymptotically equal size with $G$, thus minimum cost flow can be solved exactly on it in time $m^{1 + o(1)}$ \cite{Brand0PKLGSS23flow}.
However, extracting the chains $C_1, \dots, C_\ell$ naively takes time linear in the total length of the paths $P_1, \dots, P_\ell$, which could be up to $n$ times the length of the longest path in the graph. The algorithms of \cite{KoganP23} and \cite{bals2025greedy} avoid this problem in two distinct ways.

\paragraph{The approach of \cite{KoganP23}.} \cite{KoganP23} introduces a weaker version of an $\ell$-chain cover that they call a \textit{diameter-dominating path set}. A set of $\ell$ chains $C_1, \dots, C_\ell$ is a diameter-dominating path set of $G$, if every path on $G$ of length at most $D = \frac{4 n}{\ell}$ contains at most $D / 2$ vertices not in any chain.

While diameter-dominating path sets have a weaker covering property, shortcutting one is sufficient to reduce the graph's diameter by a constant fraction. Indeed, suppose that we select $\ell$ so that $G$ has diameter at most $D$, and let $H$ be a shortcut for a diameter-dominating path set $C_1, \dots, C_\ell$. Due to the diameter bound, any shortest path on $G$ has length at most $D$, and thus contains at most $D / 2$ vertices not on any of the chains $C_i$. As we can shorten this path by traversing through edges in $H$ so that the path contains at most three vertices from each chain, we obtain that the diameter of the graph $G \cup H$ is at most $\frac{1}{2} D + O(\ell) = \frac{1}{2} D + O(n / D)$, which is smaller than $D$ by a constant factor for large enough $D = \Omega(\sqrt{n})$. Thus, one can iteratively construct and shortcut a diameter-dominating path set $O(\log n)$ times to obtain an $O(\sqrt{n})$-shortcut.

The advantage of considering diameter-dominating path sets rather than chain covers is the ability to modify the gadget graph $\tilde{G}$ so that the total length of the paths $P_i$ in any flow decomposition of any optimal min-cost flow is small, in particular $O(n)$.

Indeed, to obtain a diameter-dominating path set, \cite{KoganP23} slightly changes the construction of the gadget graph $\tilde{G}$, replacing the edges between each pair $v^{in}$ and $v^{out}$ with an edge of capacity $1$ and cost $-2$, and an edge of capacity $\infty$ and \emph{positive} cost $1$. Since the total capacity-times-cost of negative-cost edges on this modified $\tilde{G}$ is $-2n$, at most $2n$ flow can go over the $1$-cost edges, thus in particular the total length of the paths $P_i$ in the flow decomposition must be at most $O(n)$. That $C_1, \dots, C_\ell$ form a diameter-dominating path set can be proven like the earlier proof that the chains in the unmodified $\tilde{G}$ form an $\ell$-cover, by contradiction on the optimality of the flow.

Given the analysis of the algorithm, it is now clear how to obtain a certified extension $H' \supset H$ of small size. We can simply let $H'$ be a shortcut of all of the paths $P_i$ which determine the chains $C_i$. Since the total length of these paths is $O(n)$, the size of this $H'$ is $\tilde{O}(n)$.

As claimed prior, we can thus observe the following:

\diamdomibasedcertified

\paragraph{The approach of \cite{bals2025greedy}.} \cite{bals2025greedy} observes that the technique of \cite{Caceres23} for constructing minimal chain covers can also be used to extract the chains $C_1, \dots, C_\ell$ directly from a flow on $\tilde{G}$ in $O(m \log n)$ time.

First, map the min-cost $\ell$-flow on $\tilde{G}$ onto a flow on $G' = G \cup \{s, t\}$ by contracting each pair $v^{in}, v^{out}$ onto the single vertex $v$. Let $P_1, \dots, P_\ell$ be the paths of some flow decomposition. Now, each vertex $v$ can be included in the chain $C_i$ of an arbitrary path $P_i$ that goes through the vertex.
The approach is to construct the paths $P_1, \dots, P_\ell$ implicitly, by maintaining the indices of paths that go through each vertex $v$ in a balanced binary tree data structure, supporting each of the following three operations efficiently (i.e. in time $O(\log n)$):
\begin{itemize}
    \item \textsc{join}, taking two trees and merging them into a single tree.
    \item \textsc{split}, taking a tree and a desired size $k$, and splitting the tree (of size $k'$) into two trees of sizes $k$ and $k' - k$.
    \item \textsc{member}, returning an arbitrary member of the tree.
\end{itemize}

Constructing the chains proceeds as follows: initially, create a tree containing all path indices $1, \dots, \ell$ for the source $s$. Then, iterate over vertices in topological order, starting from $s$ \footnote{When constructing shortcuts, it is without loss of generality to assume the graph is a DAG, as we can always shortcut rooted spanning trees in each strongly connected component and contract them. By \cref{lem:intree-outtree-certified}, the added edges during this process are also certified.}. For each vertex $v$ considered, suppose that by induction we have already constructed a tree containing the paths that visit $v$. We then do the following:
\begin{enumerate}
    \item If $v \not\in \{s, t\}$, assign the vertex to the chain $C_i$ of an arbitrary member $i$ of the tree, or to no chain if the tree is empty.
    \item Split the tree into trees of sizes $f_1, \dots, f_{\mathrm{deg}_{G'}^+(v)}$, where $f_1, \dots, f_{\mathrm{deg}_{G'}^+(v)}$ are the flow values on the outgoing edges from $v$.
    \item For each outgoing edge $(v, u)$, merge the corresponding tree (of size equal to the edge's flow value) produced in step 2. into the tree of the other endpoint $u$.
\end{enumerate}
Correctness is immediate, as a number of paths equal to the flow value go over each edge. Extracting chains $C_1, \dots, C_\ell$ this way takes $O(m \log n)$ time, as the number of tree operations required is $O(m)$.

We will now prove \Cref{lem:cover-based-certified}. For this proof, we will consider using a treap \cite{AragonS89} as the balanced binary tree data structure\footnote{Note that treaps are randomized, and using them would not lead to a deterministic shortcut algorithm. \cite{Caceres23} used a different, more complex deterministic structure instead. For simplicity of analysis, we consider treaps, as they allow for a simpler construction of the certified extension. It is possible to extend our analysis to other data structures used to support the algorithm.}. Since the proof uses the internal structure of treaps, we include a short overview of the data structure in \Cref{sec:treaps} for completeness.

\coverbasedcertified

\begin{proof}
    Suppose that \textsc{member} always returns the root of the treap, that the total work done over the treap-operations is $O(m \log n)$ (which holds with high probability), and that the sampled priorities are increasing in index, so that path index $1$ has the lowest priority, $2$ the second lowest, and so on (which can be assumed to hold without loss of generality).
    
    For an implicit path $P_i$, we define a vertex $v \in P_i$ to be \textit{important} if either
    \begin{enumerate}
        \item $i$ was the root vertex of the treap at $v$.
        \item Either (any vertex in) the subtree of $i$ or the parent of $i$ is different between the treap at $v$ and the treap at the vertex \emph{preceding} $v$ on $P_i$.
        \item Either (any vertex in) the subtree of $i$ or the parent of $i$ is different between the treap at $v$ and the treap at the vertex \emph{following} $v$ on $P_i$.
        \end{enumerate}
    We define the \textit{important chain} $C'_i$ of the path to be the chain containing its important vertices.
    
    First, notice that the total size of the important chains $C'_i$ is $O(m \log n)$: the total number of times vertices are important due to being the root vertex of a treap is exactly $n$, and as the treap operations take $O(m \log n)$ total work, and each operation is performed recursively down from the root, the total number of times that vertices see their subtrees change is $O(m \log n)$.

    We will construct a certified shortcut $H' = \bigcup_i H'_i$ of the chains $C'_i$ of size $\tilde{O}(m)$. Since $C_i \subseteq C'_i$ (as \textsc{member} returns the root, and a vertex is important for the root of the treap at that vertex), this is sufficient.

    Specifically, we will construct a certified shortcut $H'_i$ of size $\tilde{O}(|C'_i|)$ on $G \cup \bigcup_{j < i} H'_j$ for every $i$ in increasing order. First, for $i = 1$, since the chain $C'_i$ is a path in the graph (as the minimum-priority element is always the root of the treap it is in), we can use \Cref{lem:certified-shortcut-subpaths} to construct a certified shortcut of size $\tilde{O}(|C'_1|)$ for it that reduces the diameter of the chain to $2$. Now, suppose that $i > 1$. In this case, the chain $C'_i$ is not necessarily a path in the graph. We will first show that each of the edges $(u, v)$ between two consecutive vertices $u, v$ on the chain is in fact already certified in $G \cup \bigcup_{j < i} H'_j$ (i.e. that there exist edges $(u, p)$ and $(p, v)$ in this partially constructed shortcut graph). This suffices, as it reduces the situation back to the path case.

    Now, let $u$ and $v$ be consecutive vertices on the important chain $C'_i$. If they are consecutive vertices on $P_i$, we are done, as the edge $(u, v)$ exists in the graph. Otherwise, we know that $i$ must not be the root vertex of the treaps of either $u$ or $v$, and the parent vertex $j$ of $i$ is the same in both the treap at $u$ and the treap at $v$, because having different parent vertices implies that either $u,v$ are consecutive on $P_i$ or $u,v$ are not consecutive on $C'_i$. Since $j$ has smaller priority than $i$ as $i$'s parent, we have $j < i$. Additionally, it must be that $u$ and $v$ are also important vertices for $j$, since if the subtree or parent of $i$ changes, the subtree or parent of $i$'s parent $j$ must change as well. Thus, by induction, there exists a path of length $2$ from $u \in C'_j$ to $v \in C'_j$ in $G \cup \bigcup_{j' \le j} H'_{j'}$. The edges $(u, p)$ and $(p, v)$ of this path satisfy our requirement.
\end{proof}

\subsection{Known Existential Proofs are Non-Certifiable}
\label{sec:existing-noncert}

All existential proofs mainly fall into two categories.
The folklore argument \cite{UllmanY91} and its improvement \cite{KoganP22} heavily rely on sampling of vertices and paths, whereas \cite{BermanRR10,bals2025greedy} utilize greedy approaches.
In the rest of this section, we show that all of them construct shortcuts and hopsets with large certification complexity.
By the equivalence in \cref{lem:equi-seq-sc}, having superlinear certification complexity immediately implies that there are no efficient implementations (as certified shortcutting procedures) that can generate all edges of the almost-linear-size shortcuts or hopsets guaranteed by known existential proofs.

In addition, we show in \cref{sec:existing-noncert-cc} that the new chain cover-based approach of \cite{bals2025greedy} (Section 6) is highly sensitive to the choice of the chain cover extracted from the flow instance.
Not all chain covers are constructive.
More specifically, if the chain cover could be adversarially chosen, then even certifying the chain edges alone would lead to superlinear certification complexity.
On the contrary, we presented in \cref{sec:existing-cert} the existence of chain covers that can be efficiently extracted from the flow instance and have almost-linear certification complexity.
Altogether, it indicates that efficient implementations of such approach crucially utilizes certain nontrivial structures of the extracted chain covers.
We also emphasize that unlike sampling-based approaches, no conditional lower bound based on fine-grained complexity is known for these new chain cover-based algorithms.

\subsubsection{Sampling-Based Approaches}
\label{sec:existing-noncert-sample}

The critical and seemingly most time consuming step in implementing any sampling-based approach \cite{UllmanY91,KoganP22} is to check the reachability (or even to compute the shortest paths in the case of hopsets) between many pairs of vertices or vertex-path pairs.
In fact, \cite{KoganP23} show that such approach requires substantially superlinear time under standard computational assumptions in fine-grained complexity.

Combining the notion of certifiability with some ideas of \cite{KoganP23}, we are actually able to prove unconditional lower bounds on the certification complexity of these sampling-based approaches. 

\begin{lemma}
\label{lem:noncert-uy-sc}
For $\epsilon \in [0,1/25)$, there exists a family of DAGs with $n$ vertices and $m=\Theta(n^{1+\epsilon})$ edges, such that shortcuts constructed by \cite{UllmanY91} have certification complexity $\Omega(m n^\epsilon)$ with high probability.
\end{lemma}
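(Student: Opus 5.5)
The plan is to reduce the statement to the unfolding argument of \cref{lem:shortcut-size}. The fact I will use is this: if a certified $H'$ contains an edge $(u,v)$ and the unique $u$-$v$ path in $G$ has $L$ edges, then unfolding $(u,v)$ through its certifying pairs only ever visits vertices of that path. It therefore forces at least $L-1$ distinct non-$G$ edges of $H'$, all with both endpoints on that path. I would prove this as a standalone claim by the same iterative replacement as in \cref{lem:shortcut-size}. Each step adds one edge to the current path, and the process ends at the $G$-path, so it takes exactly $L-1$ steps, and each step exposes a distinct edge of $H'$.

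The \cite{UllmanY91} construction samples each vertex with probability $p=\tilde\Theta(n^{-1/2})$ and adds an edge between every reachable pair of sampled vertices. I would apply this to a graph built from \cref{lem:lvwx} with $d=2$, since there any two critical paths share at most one edge. Consider a critical path $P$ with $2D$ edges. If $P$ contains two sampled vertices $x,y$ whose subpath between them has $\Omega(D)$ edges, the shortcut contains $(x,y)$. Since subpaths of unique paths are unique, the claim then gives $\Omega(D)$ edges of $H'$ with both endpoints on $P$. An edge with both endpoints on two different critical paths must join two vertices lying on their common part, which is a single edge of $E$, so such an edge is not a new edge. Hence these forced edges are distinct across critical paths, and summing gives a certification complexity of $\Omega(|\cP|\cdot D)=\Omega(n_0\Delta^2)$, where $n_0$ is the number of vertices of the \cref{lem:lvwx} graph. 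Taking $\Delta=\Theta(n^{\epsilon})$ and $m=\Theta(n\Delta)$, this is $\Omega(m n^{\epsilon})$, provided a constant fraction of the critical paths are hit in this way.

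The main obstacle is that last proviso. A critical path must contain $\Omega(1)$ sampled vertices in each half, which requires $Dp=\Omega(\log n)$, i.e. $D\ge \tilde\Omega(n^{1/2})$. However, \cref{lem:lvwx} with $d=2$ only gives $D\approx n_0^{1/4}$. To fix this, I would decouple the sampling rate from the gadget size. I would take a small \cref{lem:lvwx} gadget $G_0$ on $n_0\ll n$ vertices, and subdivide each of its edges into a directed path of $k$ new vertices, so that the critical paths have $\Theta(kD)$ edges and remain unique. I would then pad the vertex count to $n$ with dummy vertices, and add dummy edges so that $m=\Theta(n^{1+\epsilon})$. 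The forced edges now number $\Omega(|\cP|\cdot kD)$, because the unfolding lives on the subdivided path, and distinctness is unchanged since overlaps are still a single subdivided original edge.

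The parameters have to satisfy three conditions:
\begin{itemize}
\item $kD\ge n^{1/2}\log n$, so that the sampling hits the critical paths;
\item the total size $n_0k\Delta$ of the subdivided gadget is at most $n$;
\item $|\cP|\,kD\ge m n^{\epsilon}$.
\end{itemize}
Solving these with $\Delta=r^{2/3}$ and $n_0=\Theta(D^4r^3)$ is the calculation I expect to produce the restriction $\epsilon<1/25$. If the exponents do not close, I would try replacing subdivision by layering several copies, or using a larger $d$. Finally, a Chernoff bound together with a union bound over the critical paths gives the high-probability statement.
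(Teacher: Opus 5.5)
Your unfolding claim and your diagnosis of the obstacle are both correct: the critical paths of \cref{lem:lvwx} are too short for the $\tilde\Theta(n^{-1/2})$ sampling rate. The fix fails, however, at the sentence saying distinctness is unchanged after subdivision.

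\textbf{Where the count breaks.} After subdividing, two critical paths through the same original edge $e=(a,b)$ share the whole path $Q_e$, which has $k+1$ edges. On average each original edge lies on $\Theta(\Delta)$ critical paths. Forced new edges with both endpoints on $Q_e$ are therefore charged to all of those paths, and the unfolding only guarantees $\Omega(D)$ private edges per path (those whose split vertex is an original vertex), not $\Omega(kD)$.

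This is a real loss, not slack in the counting. For every original edge $e=(a,b)$, add $(a,w)$ and $(w,b)$ for all $w\in Q_e$. By \cref{lem:intree-outtree-certified} these edges are certified, they cost $O(n)$ in total, and they make every $Q_e$ act like a single edge. Afterwards, each charged edge $(x_P,y_P)$ is certified by $O(D)$ further edges $(x_P,a_l)$ that march along the original vertices $a_l$ of $P$. So everything your argument charges can be certified with $O(n+|\cP|D)$ edges.

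Your own constraints make this small. From $\Theta(n_0\Delta k)\le n$, $kD\ge n^{1/2}$ and $n_0=\Theta(D^4r^3)$ you get $D^3r^{11/3}=O(n^{1/2})$, hence $\Delta D=O(n^{1/6})$. Then $|\cP|D=\Theta(n_0\Delta^2)=O(n\Delta/k)=O(n^{1/2}\Delta D)=O(n^{2/3})$, far below $mn^{\epsilon}=\Theta(n^{1+2\epsilon})$.

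\textbf{Why no other gadget helps, and the missing idea.} Layering copies or taking a larger $d$ cannot rescue the plan, because the plan itself is capped. Suppose $\cP$ consists of unique paths with $L\ge C\sqrt n$ edges, any two sharing at most one edge. \cref{lem:cover-based-certified} gives a certified shortcut of size $\tilde O(m)$ and diameter $O(\sqrt n)\le L/2$, so \cref{lem:shortcut-size} forces $|\cP|\cdot L=\tilde O(m)$. Charging private edges to critical paths long enough to be hit directly by the sampling can therefore never reach $mn^{\epsilon}$.

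The missing idea is to keep the critical paths short and make their \emph{endpoints} hit instead. The paper uses the obstacle-product graph of \cref{lem:rp}. It has $n^{1/7-\Theta(\epsilon)}$ layers, first and last layers of at most $n^{1/2}$ vertices, and $\Theta(n^{1+2\epsilon}/D)$ unique critical paths, any two sharing at most one edge. It then hangs $\Omega(n^{1/2})$ private in-neighbours on each first-layer vertex $s$ and private out-neighbours on each last-layer vertex $t$.

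With high probability every $s$ and $t$ gets a sampled pendant $u$ and $v$. For every critical path, the shortcut then contains $(u,v)$, whose unique path $u\to s\leadsto t\to v$ has $\Theta(D)$ edges. Two such extended paths share at most three edges, all in $E$, which gives $\Omega(|\cP|D)=\Omega(n^{1+2\epsilon})=\Omega(mn^{\epsilon})$ private forced edges. These paths are shorter than $\sqrt n$, so the cap above does not apply to them.

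The obstacle product is needed because the end layers of \cref{lem:lvwx} itself have far more than $n^{1/2}$ vertices. That is too many for each to receive $n^{1/2}$ pendants.
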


\begin{lemma}
\label{lem:noncert-uy-hs}
For $\epsilon \in [0,1/25)$, there exists a family of unweighted undirected graphs with $n$ vertices and $m=\Theta(n^{1+\epsilon})$ edges, such that hopsets constructed by \cite{UllmanY91} have certification complexity $\Omega(m n^\epsilon)$ with high probability.
\end{lemma}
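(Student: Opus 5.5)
The plan is to adapt the proof of \cref{lem:noncert-uy-sc} to the hopset setting, using \cref{lem:hs} in place of \cref{lem:lvwx} and the unique-shortest-path argument behind \cref{lem:hopset-size}. Recall that \cite{UllmanY91} samples each vertex independently with probability $p=\Theta(\log n/\sqrt{n})$ and adds, for every pair of sampled vertices $s,t$ with $\dist_G(s,t)<\infty$, an edge $(s,t)$ of weight $\dist_G(s,t)$. I would lower bound $\mathcal{C}(H,G)$ by counting, for a suitable family $\cP$ of critical paths, the edges that any certified $H'\supseteq H$ is forced to place inside each $P\in\cP$.

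The first step is a structural claim. Let $P\in\cP$ be a unique shortest path, and let $s,t\in P$ be sampled vertices at hop distance $k$ along $P$. Then any certified hopset $H'\supseteq H$ contains at least $k-1$ edges with both endpoints on the $s$-$t$ subpath of $P$. The proof follows \cref{lem:shortcut-size}: repeatedly replace a non-graph edge by its two certifying edges. Since weights add exactly, each intermediate walk is still an $s$-$t$ walk of length $\dist_G(s,t)$. By uniqueness of shortest paths (subpaths of a unique shortest path are again unique), every intermediate vertex lies on the $s$-$t$ subpath of $P$. The process stops only at the subpath itself, after exactly $k-1$ expansions, and each expansion exposes a distinct new edge of $H'$. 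Because two critical paths intersect in a subpath of at most $d-1\le 1$ edges, any edge of $H'$ with both endpoints on two different critical paths joins adjacent vertices and is therefore already a graph edge. Hence the forced edges are disjoint across critical paths, and summing gives
\[
\mathcal{C}(H,G)\ \ge\ \sum_{P\in\cP}\big(\text{span of sampled vertices on }P\big)-O(|\cP|).
\]

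The second step is probabilistic. For a critical path of $\Theta(L)$ hops, the expected span of its sampled vertices is $\Omega(L)$ once $Lp\gtrsim 1$, and then this holds with constant probability per path. Concentration over $\cP$ follows from a bounded-differences argument, since each vertex lies on only $O(\Delta^2)$ critical paths; this gives the bound with high probability.

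The main obstacle is the choice of graph. \cref{lem:hs} directly yields critical paths of only $D\ll\sqrt{n}$ hops, so a typical critical path contains no sampled vertex at all and the bound collapses. I would first try to fix this by embedding a \cref{lem:hs}-type graph $G_0$ on $N\ll n$ vertices, with parameters $(d=2,D_0,r)$, into a larger graph of $n$ vertices. Each critical path of $G_0$ would be extended at both ends by long "tails", realized by shared unweighted layered paths so that the total vertex count stays $n$. The tails must be built so that sampled tail vertices are joined to the core only through unique shortest paths passing through the critical paths of $G_0$. Then a pair of samples on opposite tails forces $\Omega(D_0)$ certifying edges inside the core of each critical path it traverses, and the overlap property keeps these disjoint.

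Balancing the parameters requires $|\cP|\cdot D_0\cdot\Pr[\text{both tails sampled}]=\Omega(mn^\epsilon)$, together with $\Delta=\Theta(r^{2/3})=\Theta(n^\epsilon)$ and $m=\Theta(n^{1+\epsilon})$. I expect this polynomial balancing, under the constraints of the tail gadget, to be feasible exactly in the range $\epsilon<1/25$, mirroring \cref{lem:noncert-uy-sc}. Verifying that the gadget preserves unique shortest paths in an undirected graph is the delicate part; it is the reason for relying on $d\in\set{1,2}$ in \cref{lem:hs}.
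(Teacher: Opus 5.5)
Your first step is sound. Weight-exact certification keeps every expanded walk a shortest walk, so along a unique shortest path it forces $k-1$ distinct edges, exactly as in \cref{lem:hopset-size}. Attaching tails to a core is also the right instinct. The gap is the step you flag as the main obstacle. The tail gadget is never constructed, the threshold $\epsilon<1/25$ is borrowed rather than derived, and with a \cref{lem:hs} core the balancing cannot succeed.

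At sampling rate $\tilde\Theta(n^{-1/2})$, a core vertex needs $\tilde\Omega(n^{1/2})$ private tail vertices to get a sampled proxy. So only $\tilde O(n^{1/2})$ core vertices can be equipped, and a critical path contributes only if both of its endpoints are equipped. In a \cref{lem:hs} graph a critical path is determined by its first (or last) vertex and its direction pair $(\bv_1,\bv_2)$. Hence each vertex is an endpoint of at most $2\Delta^2=O(n^{2\epsilon})$ critical paths. Meanwhile $n\ge\Theta(D_0^3r^3)$ with $r=\Theta(n^{3\epsilon/2})$ gives $D_0=O(n^{1/3-3\epsilon/2})$. So at most $\tilde O(n^{1/2}\cdot n^{2\epsilon}\cdot D_0)=\tilde O(n^{5/6+\epsilon/2})$ certifying edges are forced, polynomially short of $mn^\epsilon=\Theta(n^{1+2\epsilon})$. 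Re-optimizing $r$ and $D_0$ subject to $|E(G_0)|\le m$ does not close this gap. Sharing tails among endpoints does not obviously help either: for a fixed far sample, all samples on a common portion of a tail leave through the same endpoint and certify the same core path.

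The missing idea is a core whose critical pairs are concentrated on very few, very high-degree endpoints. This is exactly what \cref{lem:rp} provides via an obstacle product. The outer graph is a $3$-layer Huang--Pettie graph of dimension $\sqrt{\log n}$, which on $N$ vertices has $N^{2-o(1)}$ edge-disjoint critical paths $(x,v,y)$. Each middle vertex $v$ is replaced by a copy of the \cref{lem:lvwx} graph. The resulting layered DAG has first and last layers of only $n^{3/7+\Theta(\epsilon)}\le n^{1/2}$ vertices. Yet it has $\Theta(n^{1+2\epsilon}/D)$ critical paths of $D=n^{1/7-\Theta(\epsilon)}$ hops between them, pairwise sharing at most one edge.

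As in \cref{lem:noncert-uy-sc}, one then attaches $\Omega(n^{1/2})$ pendant vertices to every first/last-layer vertex. This guarantees w.h.p.\ that \emph{every} critical path lies on the unique shortest path of some hopset edge and forces $\Omega(D)$ private certifying edges, for $\Omega(n^{1+2\epsilon})=\Omega(mn^\epsilon)$ in total. No span or bounded-differences argument is needed. Nor is the $d\in\set{1,2}$ analysis of \cref{lem:hs}: the paper simply drops the edge directions of this layered DAG. A shortest path between the first and last layers must be layer-monotone, so unique directed paths become unique shortest paths.
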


\begin{lemma}
\label{lem:noncert-kp}
For $\epsilon \in [0,2/75)$, there exists a family of DAGs with $n$ vertices and $m=\Theta(n^{1+\epsilon})$ edges, such that shortcuts constructed by \cite{KoganP22} have certification complexity $\Omega(m n^\epsilon)$ with high probability.
\end{lemma}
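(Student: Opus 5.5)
The plan is to lower-bound certification complexity by forcing the \emph{second part} $H_2$ of \cite{KoganP22}, the vertex-to-chain edges, to contain many edges $(x,t)$ that shortcut long unique paths. The key principle is the one behind \cref{lem:shortcut-size}. Suppose $(x,t)$ is in a certified $H'$ and the unique $x$-$t$ path $P$ in $G$ has $L$ edges. Then repeatedly replacing shortcut edges by their two certifying edges produces a binary decomposition of $(x,t)$ down to the edges of $P$. This forces at least $L-1$ distinct edges of $H'$ with both endpoints on $P$ (the internal nodes of a binary tree with $L$ leaves), and none of them is an edge of $G$. If the critical paths pairwise intersect in at most one edge, a pair of vertices at hop-distance at least $2$ lies on at most one critical path. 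Hence the forced edges for different critical paths are distinct, up to the $O(1)$ length-one pieces, and $\mathcal{C}(H,G) \ge \Omega(D)\cdot \#\{\text{critical paths shortcut end-to-end by } H\}$.

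\textbf{Step 1 (construction).} I take the graph $G_0$ of \cref{lem:lvwx} with $d=2$ and parameters $D,r$ to be tuned, so that $\Delta=\Theta(n^\epsilon)$ and $m=\Theta(n^{1+\epsilon})$. I then append a \emph{spine}: a directed path through the last-layer vertices in some fixed order, possibly padded by extra vertices so that it is forced into the chain cover. The reason the spine keeps the relevant pairs critical is as follows. For a first-layer vertex $x$, let $t$ be the first spine vertex reachable from $x$. Any $x$-$t$ path using a spine edge would pass through an earlier spine vertex reachable from $x$, which is impossible. So every $x$-$t$ path lies in $G_0$, and $x$-$t$ is a critical pair with a unique path of $2D$ edges whenever $t$ is the endpoint of a critical path from $x$. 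I would choose the spine order, for example by a lexicographic order on the critical directions, so that for a constant fraction of first-layer $x$ the first reachable spine vertex is indeed such an endpoint.

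\textbf{Step 2 (the algorithm adds these edges).} Next I argue that the spine (or a long piece of it) is one of the chains of any valid chain cover used by \cite{KoganP22}. The padding should make any cover that omits it violate the $O(n/\ell)$ bound on uncovered vertices along some path. The spine is then sampled with probability $q$, and each sampled vertex $x$, taken with probability $p$, receives the edge to the first reachable vertex of each sampled chain. Plugging in the sampling rates of \cite{KoganP22}, a Chernoff bound shows that with high probability $\Omega(p\cdot|\text{first layer}|)$ such edges appear per sampled spine-chain. This is multiplied by the number of spine copies if several spines are needed to reach the count; each copy would be built on a disjoint family of critical directions.

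\textbf{Step 3 (counting and tuning).} Each such edge costs $\Omega(D)$ certification edges, and these are disjoint across critical paths, giving $\mathcal{C}(H,G) = \Omega(D \cdot \#\text{edges})$. I then choose $D$ and $r$ so that this is $\Omega(m n^\epsilon)$. The constraints are that $p$ and $q$ are fixed polynomials in $n$ by \cite{KoganP22}, that the number of critical paths available is $\Theta(n\Delta^2/D)$, and that the padding must not blow up $n$. I expect the admissible window to be exactly $\epsilon<2/75$.

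The main obstacle is Step 2 combined with the order in Step 1. I need to make sure that the chain cover chosen by the (worst-case, or arbitrary) cover procedure really contains the spine in order. I also need the ``first reachable vertex'' to be the far endpoint of a critical path rather than a nearby vertex. If forcing this proves delicate, my fallback is to replace the single spine by many short spines, one per family of critical directions. This costs a constant fraction in the counting and is where I expect the exponent loss to come from.
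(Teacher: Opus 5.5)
\textbf{There is a genuine gap.} Your certification principle is sound; it is exactly \cref{lem:shortcut-size}. The problem is the counting in Steps 1--3. With the graph of \cref{lem:lvwx} and sources in its first layer, the number of critical paths that get shortcut end to end is polynomially too small, however the spine is ordered.

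\textbf{Source side.} An edge of $H_2$ is expensive only if its tail is a \emph{sampled} vertex near the start of a critical path. In $G_0$ the first layer has $\Theta(n/D)$ vertices, and each starts only $\Theta(\Delta^2)=\Theta(n^{2\epsilon})$ critical paths. Vertices are sampled at rate $\tilde{\Theta}(n^{-1/3})$, and $D\le n^{1/4}$. So only about a $\tilde{O}(n^{-1/3}D)=n^{-\Omega(1)}$ fraction of critical paths contains any sampled vertex at all. Each critical path contributes at most $\tilde{O}(D)$ forced edges, so your count is capped at $n^{-\Omega(1)}\cdot|\cP|D = mn^{\epsilon-\Omega(1)}$. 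Your own numbers show this too. One sampled spine yields about $pn/D$ edges of cost $D$, i.e.\ $\tilde{O}(n^{2/3})$. You would therefore need $\tilde{\Omega}(n^{1/3+2\epsilon})$ sampled spines, each giving a typical first-layer $x$ a \emph{fresh} critical endpoint, but $x$ has only $\Theta(n^{2\epsilon})$ of them.

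\textbf{Chain side.} This loses a further factor. Chains are vertex-disjoint, so a spine through the last layer puts each sink on at most one chain, which is sampled with probability $\tilde{\Theta}(n^{-1/3})$. A single spine also gives each source only one edge, to its first reachable sink. Extra spines cannot share last-layer vertices, and spines on disjoint copies of $G_0$ do not interact. Nothing in this construction would produce the window $2/75$.

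\textbf{Missing idea: amplification at both ends.} This needs a base graph with few sources and sinks, each carrying very many critical paths. That is what \cref{lem:rp} provides via an obstacle product: first and last layers of size at most $n^{1/2}$, carrying $\Theta(n^{1+2\epsilon}/D)$ critical paths. The paper takes $n''=\Theta(n^{1/3})$ disjoint copies of an $n'=\Theta(n^{2/3})$-vertex instance, and then:
\begin{itemize}
\item It attaches $\Omega(n^{1/3})$ auxiliary in-neighbours to every $s^{(i)}$, so w.h.p.\ every source copy has a sampled representative.
\item For every sink $t$ it builds $\Omega(n^{1/3})$ auxiliary paths $v_j^{(1)}\to\cdots\to v_j^{(n'')}$, with $v_j^{(i)}$ an out-neighbour of $t^{(i)}$. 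So w.h.p.\ every $t$ has a sampled path.
\item Disjointness of the copies makes $v^{(i)}$ the first vertex on that path reachable from copy $i$, via a unique path of length $\Theta(D')$.
\end{itemize}
Hence \emph{every} critical path in \emph{every} copy is shortcut end to end. This gives $n''\cdot\Theta((n')^{1+2\epsilon}/D')\cdot\Omega(D')=\Omega(n^{1+4\epsilon/3})$ against $m=\Theta(n^{1+2\epsilon/3})$. Rescaling $\epsilon\mapsto 3\epsilon/2$ turns the $1/25$ of \cref{lem:rp} into $2/75$.

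Finally, the paper does not try to force these paths into a chain cover. It simply treats the auxiliary paths as the paths sampled by \cite{KoganP22}, so your Step~2 concern is sidestepped rather than resolved there.
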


\cref{lem:rp} summarizes all necessary properties of the graph construction we use to prove all aforementioned results.
The proof of \cref{lem:rp} is deferred to \cref{sec:proof-rp}.

\begin{restatable}[]{lemma}{lemrp}
\label{lem:rp}
For $\epsilon \in [0,1/25)$, there exists a family of layered DAGs with $n$ vertices and $m=\Theta(n^{1+\epsilon})$ edges such that:
\begin{enumerate}
    \item \label{item:rp-D} It has $D=n^{1/7-\Theta(\epsilon)}$ layers.
    \item \label{item:rp-S} The first and the last layers each have $n^{3/7+\Theta(\epsilon)} \le n^{1/2}$ vertices.
    \item \label{item:rp-P} It has a set $\cP$ of $\Theta(n^{1+2\epsilon}/D)$ critical paths such that:
        \begin{enumerate}
            \item \label{item:rp-P-length} Each path is from the first layer to the last layer.
            \item \label{item:rp-P-unique} Each path is the unique path between its endpoints.
            \item \label{item:rp-P-overlap} The intersection between any two paths is at most one edge.
        \end{enumerate}
\end{enumerate}
\end{restatable}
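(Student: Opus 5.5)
The plan is to build the graph geometrically, as in the proofs of \cref{lem:hp,lem:lvwx}: layers are grids of integer points and critical paths are straight lines whose directions come from the convex set $\cV(r)$ of \cref{lem:ball-size}. I will use the $d=2$ alternation product of \cite{LuWWX22}, because that is what gives the $\Delta^2$ factor in the number of critical paths and guarantees that two paths share at most one edge. The one new requirement compared with \cref{lem:lvwx} is \cref{item:rp-S}: the first and last layers must be much smaller than the generic layer. Counting vertices explains the exponents. With $D=n^{1/7}$ layers, a generic layer has about $n^{6/7}$ vertices. A path is determined by its critical pair, so the first and last layers, each of size $n^{3/7}$, must jointly support about $n^{3/7}\cdot n^{3/7}=n^{6/7}$ critical pairs.

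\textbf{Step 1 (construction).} Take a source window $S$ of side $a$ in the grid, and let the admissible displacement vectors be $D\cdot w$, where $w$ ranges over the alternation of two copies of $\cV(r)$, one copy acting on the odd layers and one on the even layers. The first layer is $S$. Layer $t$ contains exactly the points $x+\sum_{s\le t} w_s$ that lie on at least one critical path, where $x\in S$ and the $w_s$ alternate according to a direction pair $(w^{(1)},w^{(2)})$. The last layer is the set of endpoints of critical paths. The edges are exactly the steps used by critical paths, so $m=\Theta(n\Delta)$ with $\Delta=\Theta(r^{2/3})$. To keep the last layer small, I choose $a$ and the scale of $r$ so that the endpoint map $(x,w^{(1)},w^{(2)})\mapsto x+D(w^{(1)}+w^{(2)})/2$ folds the product onto a window of size about $n^{3/7}$. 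This is where the freedom to pick $a$ independently of $Dr$ is used.

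\textbf{Step 2 (properties).} Uniqueness (\cref{item:rp-P-unique}) follows from the standard strict-convexity argument of \cite{Hesse03,HuangP21}. Any path from $x$ to $y$ decomposes as $D$ steps taken from the vertex set of a strictly convex polygon (separately for odd and for even layers). The sum of these steps is $D$ times a hull vertex, and this forces every step to equal that vertex, e.g.\ by taking the inner product with a normal vector $\bu$ that strictly separates the vertex from the rest of the hull. The overlap bound (\cref{item:rp-P-overlap}) comes from the alternation product, exactly as in \cref{lem:lvwx} with $d=2$: two distinct paths agreeing on two consecutive edges would agree on both direction components and on the start point, so they would coincide. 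The count of \cref{item:rp-P} is $|\cP|=a^2\Delta^2$ up to boundary losses; choosing $\Delta=n^{\epsilon}$ (so $r=n^{3\epsilon/2}$) and $D=n^{1/7-\Theta(\epsilon)}$ makes this $\Theta(n^{1+2\epsilon}/D)$.

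\textbf{Main obstacle.} The hardest step is the parameter balancing that keeps the first and last layers at $n^{3/7+\Theta(\epsilon)}$ while keeping the total vertex count at $n$ and the path count at $n^{1+2\epsilon}/D$. As computed above, the endpoint sets must essentially form a complete bipartite family of critical pairs. The naive choice, integral directions with the endpoint map injective, makes the last layer as large as $|\cP|$ and fails. I would first try letting the direction vectors be scaled-down (rational) vertices of a larger convex set, realised by alternating, Bresenham-like rounding across layers, so that distinct $(x,w)$ with the same $x$ reach distinct but nearby endpoints. Uniqueness must then be rechecked under rounding; the restriction $\epsilon<1/25$ suggests that this is where the polynomial losses in the exponents appear.
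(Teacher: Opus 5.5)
There is a genuine gap. It is a counting obstruction, so tuning $a$ or rounding the directions cannot fix it.

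\textbf{The obstruction.} By \cref{item:rp-P-overlap}, two critical paths leaving the same source $s$ cannot share their first two edges. Hence the number of critical paths starting at $s$ is at most the number of two-edge walks out of $s$, which is at most $\Delta_{\max}^2$, where $\Delta_{\max}$ is the largest out-degree in the first two layers.
\begin{itemize}
\item In your construction every layer uses one copy of $\cV(r)$ as its step set, of size $\Delta=n^{\epsilon}$. This is where your $m=\Theta(n\Delta)$ comes from.
\item So each source carries at most $n^{2\epsilon}$ paths, and $|\cP|\le |S|\cdot n^{2\epsilon}=n^{3/7+\Theta(\epsilon)}$. This is far below the required $n^{6/7+\Theta(\epsilon)}$.
\item Your own formula $|\cP|=a^2\Delta^2$ only reaches $n^{1+2\epsilon}/D$ when $a^2\approx n/D=n^{6/7+\Theta(\epsilon)}$. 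That contradicts $|S|=a^2=n^{3/7+\Theta(\epsilon)}$.
\item Folding the endpoint map or using Bresenham-rounded rational directions changes which sinks are hit. It does not change the number of distinct two-edge prefixes per source.
\item Enlarging the step set uniformly to $\Delta\ge n^{3/14}$ would force $m\ge n^{1+3/14}$, violating $m=\Theta(n^{1+\epsilon})$.
\end{itemize}
Incidentally, the restriction $\epsilon<1/25$ is not caused by rounding losses. In the paper it is exactly the range in which the end layers stay below $n^{1/2}$.

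\textbf{The missing idea.} You need a non-uniform, two-level structure: only the first and last layers have huge degree (about $n^{3/7}$), while the bulk of the graph has degree $n^{\Theta(\epsilon)}$. The paper obtains this with an obstacle product.
\begin{itemize}
\item The outer graph is the graph of \cref{lem:hp} with $D_\sO=3$ and $d_\sO=\sqrt{\log n_\sO}$. It is a three-layer graph with $n_\sO^{2-o(1)}$ edge-disjoint critical paths $(x,v,y)$, so nearly every first-layer/last-layer pair is a critical pair.
\item Each middle vertex $v$ is replaced by a copy $G^{(v)}$ of the $d=2$ graph of \cref{lem:lvwx} with $r_\sI=n_\sI^{\epsilon}$.
\item Each outer path through $v$ is routed along a distinct inner critical path from $s^{(v)}$ to $t^{(v)}$, attached by the single edges $(x,s^{(v)})$ and $(t^{(v)},y)$.
\end{itemize}
The required properties then follow directly.
\begin{itemize}
\item Uniqueness follows from uniqueness of $(x,v,y)$ in $G_\sO$, together with the fact that these two edges are the only ones from $x$ into $G^{(v)}$ and from $G^{(v)}$ to $y$.
\item The one-edge overlap follows from edge-disjointness in the outer graph plus \cref{item:lvwx-P-overlap}.
\item The high-degree outer edges number only $n_\sO^{2-o(1)}=O(n)$, so they do not affect $m=\Theta(n\Delta_\sI)$.
\item Setting $\Delta_\sO=|\cP_\sI|$ gives $n_\sO=n^{3/7+\Theta(\epsilon)}$ and $D=\Theta(D_\sI)=n^{1/7-\Theta(\epsilon)}$.
\end{itemize}
Your strict-convexity arguments are the right tools, but only for the inner graph. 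On their own they cannot produce small end layers.
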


We now prove all claimed lower bounds using \cref{lem:rp}.

\begin{figure}[tp]
    \centering
    \input{fig/uy}
    \caption{An illustration of the graph construction for \cref{lem:noncert-uy-sc} (red = sampled).}
    \label{fig:uy}
\end{figure}
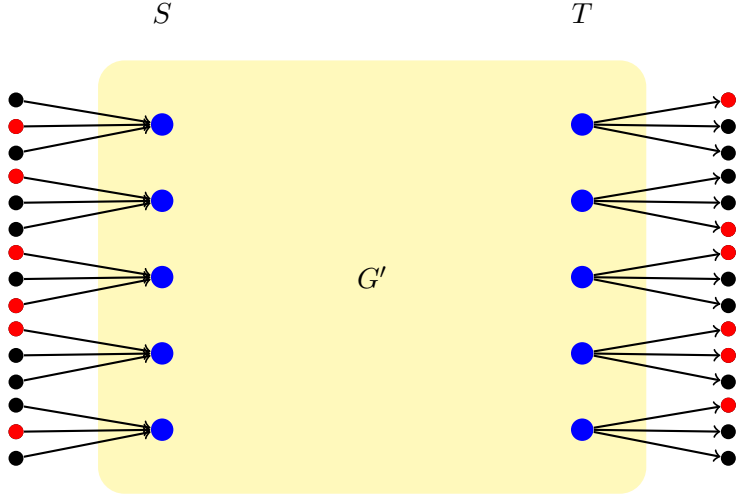

\begin{proof}[Proof of \cref{lem:noncert-uy-sc}]
Let $G'$ be the graph promised by \cref{lem:rp} with $n'=n/3$ vertices and $D'$ layers.
Let $S,T$ be the sets of vertices in the first and the last layers respectively.
So $|S|,|T| < \cO(n^{1/2})$ by \cref{item:rp-S}.
For each $s \in S$, connect $n'/|S| > \Omega(n^{1/2})$ additional auxiliary vertices to $s$.
Also, for each $t \in T$, connect $t$ to $n'/|T| > \Omega(n^{1/2})$ additional auxiliary vertices.
Let $G$ be this new graph, as illustrated in \cref{fig:uy}.
The total number of vertices in $G$ is thus $n=3n'$ and the total number of edges in $G$ is $m=\Theta((n')^{1+\epsilon}+n) = \Theta(n^{1+\epsilon})$.

Recall that \cite{UllmanY91} samples each vertex with probability $\tilde{\Theta}(n^{-1/2})$, and that the constructed shortcut $H$ adds a new edge from every sampled vertex to any other sampled and reachable vertex.
When executed on $G$, by Chernoff bounds and a union bound, \cite{UllmanY91} samples at least one associated auxiliary vertex for every $s \in S$ and every $t \in T$, at the same time with high probability.
It suffices to show \cite{UllmanY91} always constructs shortcuts $H$ with certification complexity $\Omega(m n^\epsilon)$ conditioned on such event happening.

To this end, consider any critical path $P'$ in $G'$.
Let $s \in S, t \in T$ be the two endpoints of $P'$, and let $u,v$ be the sampled auxiliary vertices associated to $s,t$ respectively.
By \cref{item:rp-D,item:rp-P-length,item:rp-P-unique} of \cref{lem:rp}, there is an extended path $P$ from $u$ to $v$ with $\Theta(D')$ edges, which is also the unique $u$-$v$ path.
As a result, in order to add the edge $(u,v)$ to $H$, any certified shortcut $H' \supseteq H$ must contain at least $\Omega(D')$ new edges to certify $(u,v)$.
Moreover, all these added edges can only be used for certifying $(u,v)$ and thus are disjoint from any other added edge for another critical path.
This is because \cref{item:rp-P-overlap} implies that $P$ shares at most three edges with any other extended path (i.e., $(u,s)$, $(t,v)$, and at most one of the edges on $P'$).
Overall, by \cref{item:rp-P}, any certified shortcut $H' \supseteq H$ requires at least $\Theta((n')^{1+2\epsilon}/D') \cdot \Omega(D') = \Omega(n^{1+2\epsilon}) = \Omega(m n^\epsilon)$ new edges, as claimed.
\end{proof}

\begin{proof}[Proof of \cref{lem:noncert-uy-hs}]
This is a direct corollary of the proof of \cref{lem:noncert-uy-sc}.
Concretely, the graph construction promised by \cref{lem:rp} is always a layered DAG.
So, by eliminating the direction of each edge, we immediately get a layered and unweighted undirected graph with exactly the same properties except that the guarantee of \cref{item:rp-P-unique} becomes the uniqueness of shortest paths, instead of arbitrary paths.
The lemma then follows from a similar argument to the one for \cref{lem:noncert-uy-sc}.
\end{proof}

\begin{figure}[tp]
    \centering
    \input{fig/brr}
    \caption{An illustration of the graph construction for \cref{rmk:uy,lem:noncert-brr-sc} (red = sampled).}
    \label{fig:brr}
\end{figure}
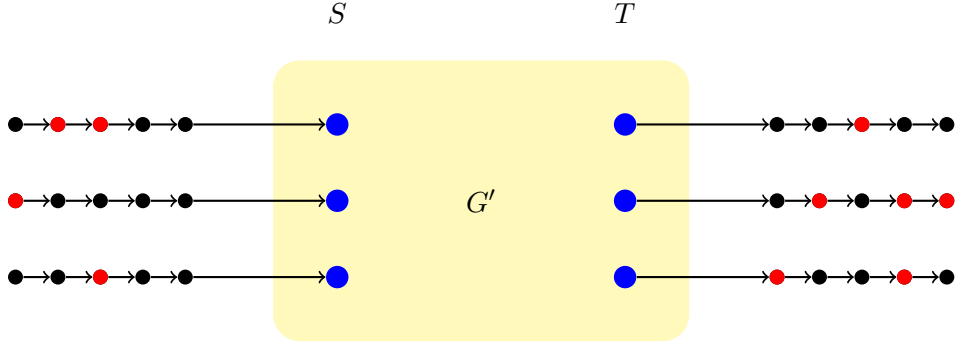

\begin{remark}[Variants of \cite{UllmanY91}]
\label{rmk:uy}
One may have noticed that graphs promised by \cref{lem:rp} only has $o(n^{1/2})$ layers.
It is tempting to modify the approach of \cite{UllmanY91} so that it connects a sampled vertex to another sampled and reachable vertex only if the distance between them is $\Omega(n^{1/2})$ in the original graph (although it is even unclear how to implement such checks efficiently).
Nevertheless, \cref{lem:noncert-uy-sc,lem:noncert-uy-hs} continue to hold against such modification, via a simple tweak in their proofs.
Indeed, for each vertex $s \in S$ in the first layer, we can connect $s$ and all its $\Omega(n^{1/2})$ auxiliary vertices into a path, instead of a star as in \cref{fig:uy}; see \cref{fig:brr}.
The same is done to $T$.
Then, with high probability for every $s \in S$ and every $t \in T$ simultaneously, it samples at least one associated auxiliary vertex that is $\Omega(n^{1/2})$ far away.
Thus, for any critical path $P'$ in $G'$ with endpoints $s,t$, the distance between the sampled auxiliary vertices associated to $s,t$ is always guaranteed to be $\Omega(n^{1/2})$.
Overall, the aforementioned modification does not break the lower bound.
\end{remark}

\begin{figure}[tp]
    \centering
    \input{fig/kp}
    \caption{An illustration of the graph construction for \cref{lem:noncert-kp} (red = sampled).}
    \label{fig:kp}
\end{figure}
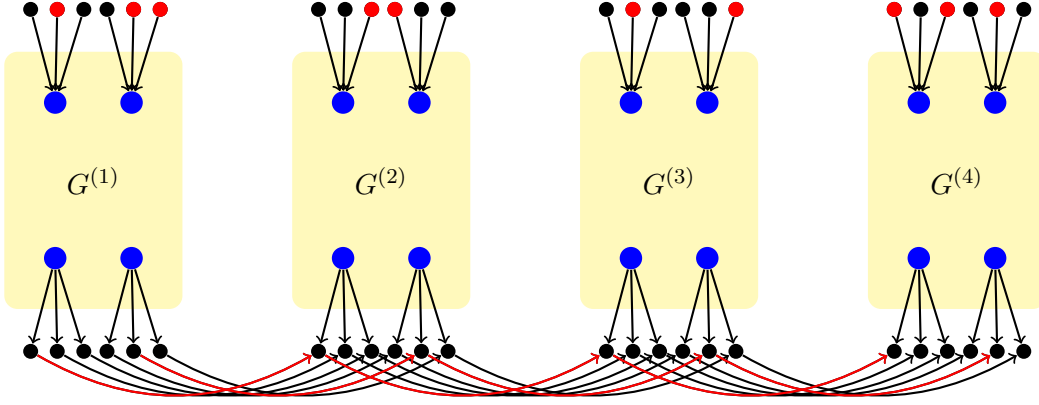

\begin{proof}[Proof of \cref{lem:noncert-kp}]
Let $G'$ be the graph promised by \cref{lem:rp} with $n'=\Theta(n^{2/3})$ vertices and $D'$ layers.
Let $S,T$ be the sets of vertices in the first and the last layers respectively.
So $|S|,|T| < \cO(n^{1/3})$ by \cref{item:rp-S}.
Let $G$ contain $n''=n/(3n')=\Theta(n^{1/3})$ disjoint copies of $G'$.
We write $v^{(i)}$ to denote the vertex corresponding to $v$ in the $i$-th copy $G^{(i)}$.
For $s \in S$ and $i \in [n'']$, connect $n/(3n''|S|) > \Omega(n^{1/3})$ additional auxiliary vertices to $s^{(i)}$.
Also, for $t \in T$ and $i \in [n'']$, connect $t^{(i)}$ to $n/(3n''|T|) > \Omega(n^{1/3})$ additional auxiliary vertices $v^{(i)}_j$ for $j \in [n/(3n''|T|)]$.
In the meantime, connect $v^{(1)}_j,\dots,v^{(n'')}_j$ into a path in this order.
Overall, $n/(3n''|T|) > \Omega(n^{1/3})$ auxiliary paths are associated to every $t$, and all $n/(3n'') = \Theta(n^{2/3})$ auxiliary paths are vertex-disjoint and have $\Theta(n^{1/3})$ edges each.
In total, $G$ has $n=3n'n''$ vertices and $m=\Theta((n')^{1+\epsilon} \cdot n'' + n) = \Theta(n^{1+2\epsilon/3})$ edges.
\cref{fig:kp} demonstrates the constructed graph.

Recall that \cite{KoganP22} samples each vertex and each path with probability $\tilde{\Theta}(n^{-1/3})$, and that the constructed shortcut $H$ adds a new edge from every sampled vertex to its first reachable vertex on every sampled path.
When executed on $G$, by Chernoff bounds and a union bound, \cite{KoganP22} samples at least one associated auxiliary vertex for every $s^{(i)}$ and samples at least one associated auxiliary path for every $t$, at the same time with high probability.
It suffices to show \cite{KoganP22} always constructs shortcuts $H$ with certification complexity $\Omega(m n^\epsilon)$ conditioned on such event happening.

To this end, consider any critical path $P'$ in $G'$.
Let $s \in S, t \in T$ be the two endpoints of $P'$.
Fix $i \in [n'']$.
Let $u$ be the sampled auxiliary vertex associated to $s^{(i)}$.
Also let $(v^{(1)},\dots,v^{(n'')})$ be the sampled auxiliary path associated to $t$.
It is not hard to verify that $v^{(i)}$ is the first vertex on this path reachable from $u$ (via $s^{(i)},t^{(i)}$).
Since all copies of $G'$ are disjoint, by \cref{item:rp-D,item:rp-P-length,item:rp-P-unique} of \cref{lem:rp}, there is an extended path $P$ from $u$ to $v^{(i)}$ with $\Theta(D')$ edges, which is also the unique $u$-$v^{(i)}$ path.
As a result, in order to add the edge $(u,v^{(i)})$ to $H$, any certified shortcut $H' \supseteq H$ must contain at least $\Omega(D')$ new edges to certify $(u,v^{(i)})$.
Moreover, all these added edges can only be used for certifying $(u,v^{(i)})$ and thus are disjoint from any other added edge for another critical path.
This is because \cref{item:rp-P-overlap} implies that $P$ shares at most three edges with any other extended path (i.e., $(u,s^{(i)})$, $(t^{(i)},v^{(i)})$, and at most one of the edges on $P'$).
Overall, by \cref{item:rp-P}, any certified shortcut $H' \supseteq H$ requires at least $\Theta((n')^{1+2\epsilon}/D') \cdot \Omega(D') \cdot n'' = \Omega(n^{1+4\epsilon/3}) = \Omega(m n^{2\epsilon/3})$ new edges.
Substituting $\epsilon$ with $3\epsilon/2$ concludes the proof.
\end{proof}

\begin{remark}[\bf Variants of \cite{KoganP22}]
\label{rmk:kp}
Similar to \cref{rmk:uy}, the proof of \cref{lem:noncert-kp} can be made to have $\Omega(n^{1/3})$ layers in the constructed graphs.
Meanwhile, it is known that the existential proof of \cite{KoganP22} can be adapted to sample only paths.
Specifically, after sampling each path with probability $\tilde{\Theta}(n^{-1/3})$, the constructed shortcut adds a new edge from every vertex on every sampled path to its first reachable vertex on every other sampled path.
Our proof is robust to such adaptation and \cref{lem:noncert-kp} continues to hold for this variant.
In fact, simply connect the auxiliary vertices associated to $S$ into auxiliary paths, in the same way as what we do for $T$ in the proof of \cref{lem:noncert-kp}.
Then, the lower bound follows from a similar argument.
\end{remark}

\subsubsection{Greedy Approaches}
\label{sec:existing-noncert-greedy}

In a nutshell, the greedy approach of \cite{BermanRR10} iteratively adds the edge that reduces the potential the most, where the potential is defined as the sum of all pairwise distances.
As such, \cite{BermanRR10} completely avoids the use of random sampling.
Meanwhile, it slightly improves upon the guarantee of \cite{UllmanY91} by eliminating logarithmic factors.
In \cref{lem:noncert-brr-sc,lem:noncert-brr-hs}, however, we prove the same lower bounds on its certification complexity as for \cite{UllmanY91} in \cref{lem:noncert-uy-sc,lem:noncert-uy-hs}.

Recently, \cite{bals2025greedy} (Section 3) obtains a greedy approach based on \cite{KoganP22} that constructs $\cO(n^{1/3})$-shortcuts of size $\cO(n)$ as well.
The same lower bound as for \cite{KoganP22} also holds for \cite{bals2025greedy}.

\begin{lemma}
\label{lem:noncert-brr-sc}
For $\epsilon \in [0,1/25)$, there exists a family of DAGs with $n$ vertices and $m=\Theta(n^{1+\epsilon})$ edges, such that shortcuts constructed by \cite{BermanRR10} have certification complexity $\Omega(m n^\epsilon)$ with high probability.
\end{lemma}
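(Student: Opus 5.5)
The plan mirrors the proof of \cref{lem:noncert-uy-sc}, using the gadget of \cref{fig:brr} (\cref{rmk:uy}). The task is to force the greedy algorithm of \cite{BermanRR10} to add a shortcut edge between an auxiliary vertex near each start vertex $s\in S$ and one near each end vertex $t\in T$, for every critical pair. Then \cref{lem:rp} gives the certification lower bound exactly as before. Since the algorithm is deterministic, ``with high probability'' is vacuous here (or refers to a random tie-breaking rule), and the real content is a structural claim about which edges greedy must add.

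\textbf{Construction.} Take $G'$ from \cref{lem:rp} on $n'=n/3$ vertices, with layers $S$ and $T$ of size at most $n^{1/2}$. Attach to each $s\in S$ an in-path of $\Theta(n^{1/2})$ auxiliary vertices ending at $s$. Attach to each $t\in T$ an out-path of $\Theta(n^{1/2})$ auxiliary vertices starting at $t$. The edge count stays $\Theta(n^{1+\epsilon})$.

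\textbf{Key claim.} After \cite{BermanRR10} terminates with its $O(n)$-size shortcut of diameter $O(n^{1/2})$, consider any critical pair $(s,t)$ with path $P'$. The shortcut must contain an edge $(u,v)$ with $u$ on the in-path of $s$ or in $G'$, $v$ on the out-path of $t$ or in $G'$, and with the unique $u$-$v$ path containing $\Omega(D')$ edges of $P'$.

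To see this, take $x$ at the far end of the in-path of $s$ and $y$ at the far end of the out-path of $t$. The unique $x$-$y$ path has length $\Theta(n^{1/2})+D'$, and $G\cup H$ must contain an $x$-$y$ path of $O(n^{1/2})$ hops. By itself this only shows the path is shortened by a constant factor overall, not along $P'$. So I would lengthen the auxiliary paths to $c\,n^{1/2}$ with $c$ large relative to the diameter constant. Alternatively, I would argue directly about the order of greedy insertions. Each insertion maximizes the drop in the sum of pairwise distances. An edge jumping from deep in the in-path of $s$ to deep in the out-path of $t$ shortens distances for $\Theta(n)$ pairs, whereas edges confined to $G'$ help fewer pairs. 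So greedy prefers edges that cover whole segments of $P'$.

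\textbf{Counting.} Given the key claim, certifying $(u,v)$ requires $\Omega(D')$ new edges with both endpoints on the unique $u$-$v$ path, by the argument of \cref{lem:shortcut-size}. By \cref{item:rp-P-overlap}, these edge sets are disjoint across critical paths, up to $O(1)$ shared edges. Summing over the $\Theta(n^{1+2\epsilon}/D')$ critical paths yields $\Omega(n^{1+2\epsilon})=\Omega(mn^\epsilon)$.

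\textbf{Main obstacle.} The key claim is the hard step, because greedy's choices are global and not random. My first attempt would rely only on the final diameter guarantee plus $|H|=O(n)$. Suppose fewer than a constant fraction of critical pairs have a ``long'' edge over their $P'$. Then the paths $P'$ for the remaining pairs must each be shortened by $\Omega(D')$ through edges internal to $G'\cup$ auxiliaries. Shortcutting one such $P'$ by $\Omega(D')$ hops requires $\Omega(1)$ edges lying essentially on it, and edges can be shared by few paths. I would aim to show this already yields certification cost $\Omega(D')$ per pair. In fact the certification argument of \cref{lem:shortcut-size} only needs the hop count along $P'$ to drop by $\Omega(D')$, regardless of which edges achieve it. So I would set the auxiliary lengths so that the required $x$-$y$ hop count forces an $\Omega(D')$ reduction on the $P'$ portion, for instance by making the auxiliary portion already hop-efficient. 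That would bypass greedy-specific reasoning entirely.
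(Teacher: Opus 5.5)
\paragraph{Gap: the key claim is never established.} Your construction and final counting step agree with the paper's. The trouble is the key claim, and the route you commit to in your last paragraph cannot work.

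Here $|P'| = D' = n^{1/7-\Theta(\epsilon)} = o(n^{1/2})$, so a diameter bound of order $n^{1/2}$ never forces any hop reduction along $P'$. Concretely, apply \cref{lem:certified-shortcut-subpaths} to each auxiliary in-path (including $s$) and each out-path (including $t$), and add nothing inside $G'$. This yields a certified shortcut of size $O(n\log n)$ under which every reachable pair is within $D'+O(1)$ hops. It is therefore a valid $O(n^{1/2})$-shortcut for your graph, whatever the constant, and however you lengthen or restructure the auxiliary paths. Yet its certification complexity is only $\tilde{O}(n)=o(mn^{\epsilon})$ for fixed $\epsilon>0$.

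So no argument that uses only the final diameter guarantee and $|H|=O(n)$ can prove the lemma. In particular, your assertion that the remaining paths $P'$ ``must each be shortened by $\Omega(D')$'' is false. The plan to bypass greedy-specific reasoning is a dead end: the statement is inherently about which edges the greedy rule selects.

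\paragraph{Missing ingredient: the greedy argument.} This is the argument you only gesture at, and it is what the paper's proof consists of. \cite{BermanRR10}, as described in \cite{bals2025greedy}, adds the edge that maximizes the drop in the sum of pairwise distances. Hence the edge it adds for an extended path must split that path into three subpaths, each of length $\Omega(n^{1/2})$. Because $P'$ has fewer than $n^{1/2}$ edges, both endpoints of that edge must be auxiliary vertices, one associated with $s$ and one with $t$. The edge therefore spans all of $P'$, and greedy must do this for every critical path. Only after this step does the counting of \cref{lem:noncert-uy-sc} apply.

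Your heuristic also compares against the wrong competitors. The edges that could preempt the cross-$P'$ edge are those inside the auxiliary paths, not those confined to $G'$. Ruling out that the auxiliary paths get shortcut to $o(n^{1/2})$ first is exactly the concern the paper discusses in \cref{rmk:brr}.
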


\begin{lemma}
\label{lem:noncert-brr-hs}
For $\epsilon \in [0,1/25)$, there exists a family of unweighted undirected graphs with $n$ vertices and $m=\Theta(n^{1+\epsilon})$ edges, such that hopsets constructed by \cite{BermanRR10} have certification complexity $\Omega(m n^\epsilon)$ with high probability.
\end{lemma}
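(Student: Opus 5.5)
The statement to prove is \cref{lem:noncert-brr-hs}. My plan is to reuse the graph of \cref{lem:rp} with the directions removed, exactly as in \cref{lem:noncert-uy-hs}, but attach \emph{paths} of auxiliary vertices rather than stars, as in \cref{rmk:uy} and \cref{fig:brr}.

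\emph{The construction.} Let $G'$ be given by \cref{lem:rp} with $n'=n/3$ vertices and $D'$ layers, and forget edge directions. The layered structure makes each critical path the unique \emph{shortest} path between its endpoints. For each first-layer vertex $s\in S$, attach a path of $n'/|S|=\Omega(n^{1/2})$ auxiliary vertices ending at $s$. Do the same for each last-layer vertex $t\in T$. Every such tail is a pendant path, so any shortest path from a tail vertex $u$ of $s$ to a tail vertex $v$ of $t$ consists of the tail segment to $s$, a shortest $s$-$t$ path in $G'$, and the tail segment to $v$. Hence it is unique whenever $(s,t)$ is a critical pair. The counts are $n=3n'$ and $m=\Theta(n^{1+\epsilon})$.

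\emph{Forcing greedy to add the far pairs.} I need that \cite{BermanRR10}, which repeatedly adds the hopset edge of largest potential decrease until the hopbound is $O(n^{1/2})$, must add an edge with one endpoint in the tail of $s$ and the other in the tail of $t$, for every critical pair $(s,t)$. My route is the following. Take the far halves of the two tails: $\Omega(n^{1/2})$ vertices each, at distance $\Omega(n^{1/2})$ from $s$ and $t$ respectively. Any hop-efficient path between a far vertex on the $s$-side and a far vertex on the $t$-side must use a hopset edge that jumps across the middle. That edge connects a vertex of the $s$-side (tail or $s$ itself, possibly after exiting into $G'$) to the $t$-side, and it is a weighted, distance-preserving edge.

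If the edge has an endpoint strictly inside the middle structure, the shortest path through it still has to pass through $s$ and $t$. To avoid this case analysis, I would make each tail long enough that the middle $G'$ is only $o(n^{1/2})$ hops, while each tail is a constant times larger than the final hopbound. Then some hopset edge must have both endpoints in the tails, on opposite sides. Reasoning about the final output, not the greedy order, gives what I need: for each critical pair some edge $(u,v)$ with $u$ in the $s$-tail and $v$ in the $t$-tail belongs to $H$. This argument is deterministic, so the ``with high probability'' qualifier is vacuous here (or covers tie-breaking).

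\emph{Counting.} This step repeats the end of \cref{lem:noncert-uy-sc}. The unique shortest $u$-$v$ path has $\Theta(D')$ edges. Certifying $(u,v)$ in a certified hopset, where weights must add exactly, forces, by the unfolding argument of \cref{lem:hopset-size}, $\Omega(D')$ edges with both endpoints on this path that are absent from $G$. Different critical pairs overlap in at most one $G'$ edge plus tails, so at most $O(1)$ of these forced edges are shared per pair. A subtlety is that the tail parts of different pairs overlap heavily, so I would count only those forced edges having an endpoint inside $G'$'s interior; there are still $\Omega(D')$ of them, since the unfolding must split the $G'$ portion. Summing gives $\Theta(n^{1+2\epsilon}/D')\cdot\Omega(D')=\Omega(mn^\epsilon)$.

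\emph{Main obstacle.} The hardest step is the second one: showing that the hopbound requirement forces an edge whose \emph{both} endpoints are in opposite tails. Choosing the tail lengths so that the hopbound is smaller than the tails is the point I would verify first. If the hopbound is $cn^{1/2}$ with $c$ unknown, I would lengthen the tails by a large constant factor, which leaves all asymptotics unchanged.
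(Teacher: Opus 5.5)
Your graph (the undirected version of the construction in \cref{rmk:uy}, with auxiliary paths instead of stars) and your final counting are what the paper intends, but the step that forces an edge between opposite tails has a genuine gap. You derive that edge from the hopbound of the final output, and the hopbound does not force it. Your claim that a hop-efficient shortest path from a far vertex $u$ on the tail of $s$ to a far vertex $v$ on the tail of $t$ must use a hopset edge jumping across the middle is false. As you note yourself, $G'$ has only $D'=o(n^{1/2})$ layers, so the middle can simply be walked edge by edge. For instance, add $(x,s)$ for every $x$ on the tail of $s$ and $(t,y)$ for every $y$ on the tail of $t$, with weights equal to distances. Every such pair then has a shortest path $u \to s \to \cdots \to t \to v$ with $D'+2$ hops and no tail-to-tail edge. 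These edges are even certified, exactly as in \cref{lem:intree-outtree-certified}, since weights add up along the tail. Lengthening the tails only forces more jumps \emph{inside} a tail, never across $G'$. So the edges $(u,v)$ your counting needs cannot be obtained from the hopbound; they must come from how \cite{BermanRR10} chooses its edges.

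The paper gets them by running the proof of \cref{lem:noncert-brr-sc} on the undirected graph, in the same way \cref{lem:noncert-uy-hs} follows from \cref{lem:noncert-uy-sc}. Each greedy step adds the edge that maximizes the decrease of the potential, the sum of pairwise distances. Consider an extended path whose two tails have length $\Omega(n^{1/2})$ and whose portion in $G'$ has fewer than $n^{1/2}$ edges. The maximizing edge must cut it into three pieces of length $\Omega(n^{1/2})$, so its endpoints lie in the two opposite tails, and greedy must do this for every critical path. For the original variant, \cref{rmk:brr} shows the tails cannot be shortcut to $o(n^{1/2})$ first.

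With that step replaced, your counting goes through after one small correction. An unfolding edge ending at the second or penultimate vertex of $P'$ may be shared with other critical paths that use the same first or last edge. Instead, count the $D'-1$ unfolding edges whose certifying midpoint is an interior vertex of $P'$. Each of these spans two consecutive edges of $P'$, so by \cref{item:rp-P-overlap} it cannot occur in the unfolding of any other critical pair.
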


\begin{lemma}
\label{lem:noncert-bbb}
For $\epsilon \in [0,2/75)$, there exists a family of DAGs with $n$ vertices and $m=\Theta(n^{1+\epsilon})$ edges, such that shortcuts constructed by the greedy approach of \cite{bals2025greedy} have certification complexity $\Omega(m n^\epsilon)$ with high probability.
\end{lemma}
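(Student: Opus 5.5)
The plan is to mirror the proof of \cref{lem:noncert-kp} on the same graph family, replacing the sampling argument by a structural argument about which edges the greedy procedure of \cite{bals2025greedy} (Section 3) is forced to add. We take $G'$ from \cref{lem:rp} with $n'=\Theta(n^{2/3})$ vertices and $D'$ layers, and let $S,T$ be its first and last layers. We then build $G$ from $n''=\Theta(n^{1/3})$ disjoint copies of $G'$. Each $s^{(i)}$ receives $\Omega(n^{1/3})$ auxiliary in-neighbours, and each $t\in T$ receives $\Omega(n^{1/3})$ auxiliary paths $(v^{(1)}_j,\dots,v^{(n'')}_j)$ threading through the copies, exactly as in \cref{fig:kp}. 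The counting at the end is then verbatim: $m=\Theta(n^{1+2\epsilon/3})$, and certifying one edge $(u,v^{(i)})$ per critical path per copy costs $\Omega(D')$ private edges. Private means the extended unique paths share at most three edges with each other, by \cref{item:rp-P-overlap}. This totals $\Omega(mn^{2\epsilon/3})$, and rescaling $\epsilon$ finishes the argument.

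The substantive step is to show that the greedy shortcut $H$ contains, for every critical path $P'$ of $G'$ and essentially every copy $i$, some edge from an auxiliary vertex $u$ of $s^{(i)}$, or from $s^{(i)}$'s side more generally, to a vertex in the auxiliary region of $t$. The key requirement is that this edge's unique certifying path passes through all of $P'^{(i)}$. I would recall the structure of that greedy algorithm: it picks a path (here necessarily an auxiliary path, or a path through a copy), and it adds edges from vertices to their first reachable vertex on the chosen path, choosing each time the vertex or path that greedily covers the most ``far'' pairs. The argument then splits in two. First, the auxiliary paths at $T$ are the only long vertex-disjoint paths of length $\Theta(n^{1/3})$ touching $\Omega(n^{2/3})$ vertices, so any greedy choice making progress must select auxiliary paths associated with a large fraction of the $t\in T$. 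Second, the auxiliary vertices hanging off each $s^{(i)}$ have in-degree zero, and they are far (distance $\ge D'$) from every such path, so the greedy potential forces edges from them, or from some vertex on their unique route, to the first reachable vertex $v^{(i)}$. Any edge landing on $v^{(i)}_j$ or on the corresponding auxiliary tail and starting at or before $s^{(i)}$ already spans the whole $P'^{(i)}$. This is what the certification-cost argument needs, via the same descent as in \cref{lem:shortcut-size}.

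The main obstacle is the second step. Greedy algorithms are adversarially adaptive, so I must rule out the possibility that greedy ``wastes'' its budget on edges internal to the copies that shortcut $P'^{(i)}$ only partially. I would try to use the stated size guarantee $|H|=O(n)$ together with the $O(n^{1/3})$ diameter guarantee. Reaching $v^{(i)}$ from $u$ within $O(n^{1/3})\ll$ hop distance requires shortcut edges inside $P'^{(i)}$ anyway. Any certified extension must then pay $\Omega(D')$ per critical path, unless the edges used are shared. They cannot be shared across critical paths because of the one-edge overlap property, exactly as in \cref{lem:shortcut-size}. If this diameter-based route works, it in fact avoids analysing greedy choices at all, and I would pursue it first. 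To make it apply, I would stretch the auxiliary paths, as in \cref{rmk:kp}, so that distances exceed the diameter bound, and take care that the $D'=n^{1/7-\Theta(\epsilon)}$ layers remain the bottleneck. Finally, the ``with high probability'' qualifier is handled trivially, since the greedy algorithm is deterministic, or via the same union bound if it uses randomized tie-breaking.
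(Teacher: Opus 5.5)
There is a genuine gap, and the diameter-based route you plan to pursue first cannot work. On this family each critical path has only $D'=(n')^{1/7-\Theta(\epsilon)}=n^{2/21-\Theta(\epsilon)}$ edges, far below the $O(n^{1/3})$ diameter guarantee. So the hypothesis of \cref{lem:shortcut-size} (every critical pair brought within $D'/2$ hops) is never triggered. Your claim that reaching $v^{(i)}$ from $u$ within $O(n^{1/3})$ hops forces shortcut edges inside $P'^{(i)}$ is false, since the original $u$-$v^{(i)}$ path already has only $D'+O(1)$ edges. In fact, applying \cref{lem:certified-shortcut-subpaths} to every auxiliary path, stretched or not, gives a certified shortcut of size $O(n\log n)$ under which every reachable pair is within $D'+O(1)$ hops. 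Hence no argument that uses only the size and diameter guarantees of the greedy output can give certification complexity $m^{1+\Omega(1)}$ here. The bound has to come from \emph{which} edges the greedy procedure actually picks.

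That leaves your first route, which is where the whole proof lives and which you only sketch; as set up it does not close either. You allow the forced edge to start ``from some vertex on their unique route'', which permits tails inside $G'^{(i)}$. Such an edge spans only part of $P'^{(i)}$, so the $\Omega(D')$ private cost per critical path is lost. The intended argument is omitted in the paper, which says it follows from \cref{lem:noncert-brr-sc} just as \cref{lem:noncert-kp} follows from \cref{lem:noncert-uy-sc}. It first modifies the graph as in \cref{rmk:kp}: the star of auxiliary in-neighbours at each $s^{(i)}$ is replaced by an auxiliary path of length $\Omega(n^{1/3})$. Then, for a constant fraction of the copies, each extended path is $P'^{(i)}$ (with $D'\ll n^{1/3}$ edges) flanked by two auxiliary tails of length $\Omega(n^{1/3})$. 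As in the proof of \cref{lem:noncert-brr-sc}, the greedy selection rule forces, for each such critical path and copy, an edge whose endpoints cut the extended path into pieces that are all long. Since $P'^{(i)}$ is short, such an edge must go from the $s^{(i)}$-side auxiliary path to the $t^{(i)}$-side auxiliary path. Its unique certifying path therefore contains all of $P'^{(i)}$, and only then does your final counting apply. With stars on the $S$ side, as in your setup, a balanced edge would lie entirely on the $T$-side auxiliary path and say nothing about $P'^{(i)}$.
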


We prove \cref{lem:noncert-brr-sc} in the following.
Proofs of \cref{lem:noncert-brr-hs,lem:noncert-bbb} are omitted as they follow from the proof of \cref{lem:noncert-brr-sc} in a similar way that \cref{lem:noncert-uy-hs,lem:noncert-kp} follow from \cref{lem:noncert-uy-sc}.

\begin{proof}[Proof of \cref{lem:noncert-brr-sc}]
We use the same graph $G$ as in \cref{rmk:uy}, as illustrated in \cref{fig:brr}.
That is, take the graph $G'$ promised by \cref{lem:rp} and connect a disjoint path of length $\Omega(n^{1/2})$ to each vertex in the first and the last layers.
For any critical path $P'$ in $G'$ with endpoints $s,t$, note that $P'$ has less than $n^{1/2}$ edges and that it is now extended by $\Omega(n^{1/2})$ auxiliary edges on both ends.

Recall that \cite{BermanRR10} adds the edge that reduces the potential the most, where the potential is defined as the sum of all pairwise distances.
So the greedy approach must connect two vertices $u,v$ which divide the extended path into three subpaths of length $\Omega(n^{1/2})$ each.
In particular, it means that $u,v$ must be auxiliary vertices associated to $s,t$ respectively since $P'$ has less than $n^{1/2}$ edges.
Furthermore, the greedy approach has to repeat this for every $P'$.
As a result, the lemma follows from a similar argument to the one for \cref{lem:noncert-uy-sc}.
\end{proof}

\begin{remark}[\bf Variants of \cite{BermanRR10}]
\label{rmk:brr}
For simplicity, the proof above is against the approach of \cite{BermanRR10} as described by \cite{bals2025greedy} (Section 4).
Specifically, each iteration adds the edge that reduces the potential the most.
The original version can be viewed as adding any edge that reduces the potential by $\Omega(D^3)$ whenever the diameter of the graph is $D$.
\cref{lem:noncert-brr-sc} continues to hold under this description.
In fact, the only concern would be that the auxiliary paths might be shortcut to diameter $o(n^{1/2})$ before shortcutting the critical paths.
(Otherwise, the same proof works.)
This is actually impossible.
To see this, suppose the auxiliary path associated to $s \in S$ is shortcut to diameter $o(n^{1/2})$.
Let $s'$ be the other end of the auxiliary path, and $s''$ be the closest auxiliary vertex to $s'$ that has ever been shortcut so far.
Then, the distance between $s'$ and $s''$ is $o(n^{1/2})$.
However, this implies that the potential drops by at most $o(n^{3/2})$ in the iteration shortcutting $s''$.
This is a contradiction as the diameter of the graph is always $\Omega(n^{1/2})$.
\end{remark}

\subsubsection{Non-Constructive Chain Covers}
\label{sec:existing-noncert-cc}

We prove the existence of non-constructive chain covers in this section.

\begin{lemma}
\label{lem:noncert-cc}
For $\epsilon \in [0,1/50)$, there exists a family of DAGs $G$ with $n$ vertices, $m=\Theta(n^{1+\epsilon})$ edges, and $\Theta(n^{1/2})$-chain covers $\cC$ such that any shortcut containing edges of all chains in $\cC$ has certification complexity $\Omega(m n^\epsilon)$.
\end{lemma}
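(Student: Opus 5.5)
Take the graph $G'$ from \cref{lem:rp} and add chain-cover gadgets around it, so that every chain edge forces a certifying subgraph that runs along a whole critical path. This parallels the proofs of \cref{lem:noncert-uy-sc,lem:noncert-kp}. We have $|S|,|T|\le n^{1/2}$ and critical paths of length $D'\ll n^{1/2}$.

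\textbf{Construction.} Let $G'$ be the graph promised by \cref{lem:rp} with $n'=\Theta(n)$ vertices. Attach gadgets to the first layer $S$ and last layer $T$ as follows.
\begin{itemize}
\item Pick a family of critical paths in which each critical pair $(s,t)$ is used at most once.
\item For each chosen critical path $P'$ with endpoints $s,t$, add a fresh auxiliary vertex $a$ with edge $(a,s)$ and a fresh auxiliary vertex $b$ with edge $(t,b)$.
\item Make $(a,b)$ a consecutive pair on some chain.
\end{itemize}

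We want $\Theta(n^{1+2\epsilon}/D')$ such chain pairs, but the chain cover may have only $\Theta(n^{1/2})$ chains. So we arrange the auxiliary vertices into $\Theta(n^{1/2})$ long chains $a_1,b_1,a_2,b_2,\dots$ that alternate between the source side and the sink side. Each chain consists of $k$ steps $(a_j,b_j)$, each realized through a distinct critical path. The return step $(b_j,a_{j+1})$ also needs reachability, so we add a direct edge $(b_j,a_{j+1})$. This keeps $G$ a DAG provided we order the copies layer-wise, e.g.\ by using disjoint copies of $G'$ stacked in sequence, as in \cref{lem:noncert-kp}.

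\textbf{Parameters.} Choose the number and size of copies so that three conditions hold:
\begin{itemize}
\item the total number of auxiliary vertices is $O(n)$;
\item the number of chain-crossing steps is $\Theta(n^{1+2\epsilon}/D')$, up to the copy factor;
\item the chains cover all auxiliary vertices.
\end{itemize}
For the $\ell$-chain cover property, $G'$'s vertices themselves are not on chains. We therefore need every path to meet at most $O(n^{1/2})$ non-chain vertices. A path visits at most $D'$ layers per copy, so we need (number of copies traversed)$\times D'\le n^{1/2}$. This constraint, together with the range $\epsilon<1/50$, fixes the sizes: roughly $n^{1/2}/D'$ copies, each with $\Theta(n^{1/2}D')$ vertices. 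Finally, redo the count $m=\Theta(n^{1+\epsilon})$ after a reparametrization of $\epsilon$; this is where the constant $1/50$ (half of $1/25$) should come from.

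\textbf{Lower bound.} Any shortcut containing a chain edge $(a,b)$ must have every certifying edge inside the unique $a$-$b$ path $(a,s)\cup P'\cup(t,b)$. The returning edges go backward in the copy order, so they cannot create alternative $a$-$b$ paths. By the unwinding argument of \cref{lem:shortcut-size}, certifying $(a,b)$ costs $\Omega(D')$ edges with both endpoints on this path. Distinct critical paths share at most one edge, by \cref{item:rp-P-overlap}, so these edge sets are disjoint. Summing over all steps gives $\Omega(n^{1+2\epsilon})=\Omega(mn^{\epsilon})$.

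The main obstacle is the arrangement step. The chains must simultaneously form a valid $\Theta(n^{1/2})$-chain cover (the covering bound on paths through $G'$ copies), preserve uniqueness of each $a$-$b$ path despite the return edges, and keep the copy count compatible with the edge budget. I would first try stacking copies linearly with return edges going only to the next copy, and then verify the covering bound by counting the layers a path can traverse.
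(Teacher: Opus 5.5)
Your architecture matches the paper's: disjoint copies of the critical-path graph, fresh source- and sink-side auxiliaries, edges from one copy's sink-side auxiliaries to the next copy's source-side auxiliaries, and one critical path per chain per copy. The parameter step, however, has a fatal gap. You get the chain-cover property by forcing every path of $G$ to have $O(n^{1/2})$ vertices, which allows only $n^{1/2}/D'$ copies. Each of the $\Theta(n^{1/2})$ chains then makes at most one critical-path step per copy. That gives at most $n/D'$ steps of cost $\Theta(D')$ each, so your argument yields only $\Omega(n)$. Your final sum $\Omega(n^{1+2\epsilon})$ tacitly assumes $\Theta(n^{1+2\epsilon}/D')$ steps, which your own budget forbids.

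This is an obstruction, not a counting slip. The vertices of any chain $C$ lie on a single path $Q_C$ of $G$. Applying \cref{lem:certified-shortcut-subpaths} to each $Q_C$ certifies all chain edges using $O(\sum_C |Q_C|\log n)$ edges. If every path has $O(n^{1/2})$ vertices and there are $\Theta(n^{1/2})$ chains, this is $O(n\log n)$, far below $mn^{\epsilon}$ for $\epsilon>0$. So the bad chains must span paths much longer than $n/\ell=n^{1/2}$, and the covering property cannot come from short paths.

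The paper instead takes $n^{1/2}$ copies of the graph of \cref{lem:noncert-uy-sc}, each on $n'=n^{1/2}$ vertices. Consecutive copies are linked by a perfect matching from the $T$-side auxiliaries of copy $i-1$ to the $S$-side auxiliaries of copy $i$. The $|S|\Delta=O(n^{1/2})$ bad chains run through \emph{all} copies, using every critical path of every copy exactly once. Each copy then forces $\Omega(m'(n')^{\epsilon})=\Omega(n^{1/2+\epsilon})$ new edges, for a total of $\Omega(n^{1+\epsilon})=\Omega(mn^{\epsilon/2})$ with $m=\Theta(n^{1+\epsilon/2})$. Substituting $\epsilon\mapsto 2\epsilon$ gives the range $[0,1/50)$.

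These chains form only a subfamily $\cC'\subseteq\cC$. The core vertices they skip must be covered by the other chains of $\cC$, which the paper leaves implicit. One way to justify this is as follows. Choose $n^{1/2}$ further paths maximizing the number of covered vertices outside $\cC'$, and turn them into disjoint chains. Suppose some path had more than $n^{1/2}$ uncovered vertices. Swapping it in for the path that privately covers the fewest vertices (at most $n^{1/2}$ by averaging) would increase coverage, a contradiction.
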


\begin{figure}[tp]
    \centering
    \input{fig/cc}
    \caption{An illustration of the graph construction for \cref{lem:noncert-cc} (red: a possible chain).}
    \label{fig:cc}
\end{figure}
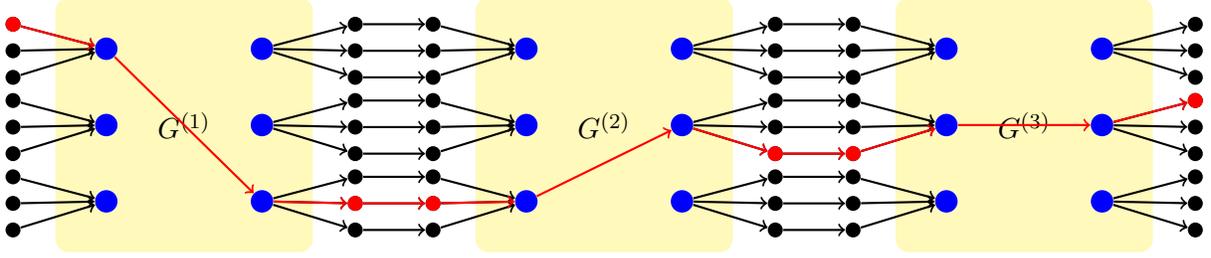

\begin{proof}
Let $G'$ be the graph promised by \cref{lem:noncert-uy-sc} with $n'=n^{1/2}$ vertices and $m'=\Theta((n')^{1+\epsilon})=\Theta(n^{1/2+\epsilon/2})$ edges.
Let $S,T$ be the sets of vertices in the first and the last non-auxiliary layers respectively.
By the proofs of \cref{lem:noncert-uy-sc,lem:rp}, we have $|S|=|T|=\cO((n')^{1/2})=\cO(n^{1/4})$.
Meanwhile, $\Delta=\cO(|S|^{1-o(1)})=\cO(n^{1/4-o(1)})$ critical paths start from each $s \in S$ and $\Delta$ critical paths end at each $t \in T$.
Let $G$ contain $n/n'=n^{1/2}$ disjoint copies of $G'$.
We use superscripts $i \in [n/n']$ to denote corresponding objects in different copies.
For $i > 1$, the auxiliary vertices associated to $T^{(i-1)}$ are connected to the auxiliary vertices associated to $S^{(i)}$ by a canonical perfect bipartite matching, as depicted in \cref{fig:cc}.
So $G$ has $n = n' \cdot n/n'$ vertices and $m = \Theta(m' \cdot n/n' + n) = \Theta(n^{1+\epsilon/2})$ edges in total.

In the following, we construct a subset of chains $\cC' \subseteq \cC$ such that even certifying edges of all chains in $\cC'$ alone already requires $\Omega(n^{1+\epsilon})=\Omega(m n^{\epsilon/2})$ new edges.
The lemma then follows by substituting $\epsilon$ with $2\epsilon$.
To this end, we gradually augment the chains by adding vertices from $G^{(i)}$ in the increasing order of $i$.

Initially, $i=1$.
For each vertex $s \in S$, there are $\Delta$ chains starting from distinct auxiliary vertices associated to $s^{(1)}$.
This is well-defined because there are at least $\Omega((n')^{1/2})=\Omega(n^{1/4}) \ge \Delta$ different associated auxiliary vertices by the proof of \cref{lem:noncert-uy-sc}.
Each of these $\Delta$ chains will be augmented according to a unique one of the $\Delta$ critical paths starting from $s$.
In particular, let $C$ be the chain to be augmented and $P$ the critical path used.
Due to the properties of critical paths, $P$ must end at a unique vertex $t \in T$.
Then add to $C$ an auxiliary vertex associated to $t^{(1)}$.
Note that over all starting vertices $s \in S$, a total of $\Delta$ critical paths end at $t$.
So it is possible to ensure that distinct auxiliary vertices associated to $t^{(1)}$ are added to the chains each time.

Now observe that the above process constructs $|S| \cdot \Delta = \cO(n^{1/2-o(1)})$ vertex-disjoint chains.
Each of them exactly contains one auxiliary vertex associated to some $s^{(1)} \in S^{(1)}$ and one auxiliary vertex associated to some $t^{(1)} \in T^{(1)}$.
More importantly, each of them corresponds to a unique critical path in $G^{(1)}$.
Altogether, in order to certify the edges of all these chains, all critical paths must be certified.
Using an argument similar to the one for \cref{lem:noncert-uy-sc}, this means at least $\Omega(m'(n')^\epsilon)=\Omega(m n^{\epsilon/2} / (n/n'))$ new edges have to be added.

Finally, we can make all chains follow the unique edge going from $G^{(1)}$ to $G^{(2)}$.
So the above argument can be repeated for all copies $i \in [n/n']$.
In total, certifying all chain edges needs at least $\Omega(m n^{\epsilon/2})$ added edges, as claimed.
\end{proof}

\begin{remark}
In fact, the sampling approach of \cite{KoganP22} and its greedy version \cite{bals2025greedy} all need to shortcut certain chain covers as well.
It is not hard to see that the proof of \cref{lem:noncert-cc} can be modified to show the existence of non-constructive $\Theta(n^c)$-chain covers for basically any $c \in (0,1)$.
Thus, all these approaches depend on the quality of the chain covers.
This is in addition to the results we prove in \cref{sec:existing-noncert-sample,sec:existing-noncert-greedy} which essentially imply that the constructed shortcuts have high certification complexity even if the chain edges are given for free.
\end{remark}

%% file: fig/uy.tex
\begin{tikzpicture}
    \begin{scope}[local bounding box=bb]
        \node[fill=blue,circle,inner sep=3pt] (s1) {};
        \foreach \i in {2,...,5}
        {
            \pgfmathtruncatemacro{\ii}{\i-1}
            \node[fill=blue,circle,inner sep=3pt,below=20pt of s\ii] (s\i) {};
        }

        \node[fill=blue,circle,inner sep=3pt,right=150pt of s1] (t1) {};
        \foreach \i in {2,...,5}
        {
            \pgfmathtruncatemacro{\ii}{\i-1}
            \node[fill=blue,circle,inner sep=3pt,below=20pt of t\ii] (t\i) {};
        }
    \end{scope}

    \foreach \i in {1,...,5}
    {
        \node[fill,circle,inner sep=2pt,above left=4pt and 50pt of s\i] (s\i1) {};
        \draw[->,thick] (s\i1) -- (s\i);
        \foreach \j in {2,...,3}
        {
            \pgfmathtruncatemacro{\jj}{\j-1}
            \node[fill,circle,inner sep=2pt,below=4pt of s\i\jj] (s\i\j) {};
        \draw[->,thick] (s\i\j) -- (s\i);
        }
    }
    
    \foreach \i in {1,...,5}
    {
        \node[fill,circle,inner sep=2pt,above right=4pt and 50pt of t\i] (t\i1) {};
        \draw[->,thick] (t\i) -- (t\i1);
        \foreach \j in {2,...,3}
        {
            \pgfmathtruncatemacro{\jj}{\j-1}
            \node[fill,circle,inner sep=2pt,below=4pt of t\i\jj] (t\i\j) {};
        \draw[->,thick] (t\i) -- (t\i\j);
        }
    }

    \begin{scope}[on background layer]
        \fill[yellow!33,rounded corners=10pt] ($(bb.south west) - (20pt,20pt)$) rectangle ($(bb.north east) + (20pt,20pt)$);
        \node at (bb.center) {$G'$};
        \node[above=30pt of s1] {$S$};
        \node[above=30pt of t1] {$T$};
    \end{scope}

    \node[fill=red,circle,inner sep=2pt] at (s12) {};
    \node[fill=red,circle,inner sep=2pt] at (s21) {};
    \node[fill=red,circle,inner sep=2pt] at (s31) {};
    \node[fill=red,circle,inner sep=2pt] at (s33) {};
    \node[fill=red,circle,inner sep=2pt] at (s41) {};
    \node[fill=red,circle,inner sep=2pt] at (s52) {};
    \node[fill=red,circle,inner sep=2pt] at (t11) {};
    \node[fill=red,circle,inner sep=2pt] at (t23) {};
    \node[fill=red,circle,inner sep=2pt] at (t31) {};
    \node[fill=red,circle,inner sep=2pt] at (t41) {};
    \node[fill=red,circle,inner sep=2pt] at (t42) {};
    \node[fill=red,circle,inner sep=2pt] at (t51) {};
\end{tikzpicture}

%% file: fig/brr.tex
\begin{tikzpicture}
    \begin{scope}[local bounding box=bb]
        \node[fill=blue,circle,inner sep=3pt] (s1) {};
        \foreach \i in {2,...,3}
        {
            \pgfmathtruncatemacro{\ii}{\i-1}
            \node[fill=blue,circle,inner sep=3pt,below=20pt of s\ii] (s\i) {};
        }

        \node[fill=blue,circle,inner sep=3pt,right=100pt of s1] (t1) {};
        \foreach \i in {2,...,3}
        {
            \pgfmathtruncatemacro{\ii}{\i-1}
            \node[fill=blue,circle,inner sep=3pt,below=20pt of t\ii] (t\i) {};
        }
    \end{scope}

    \foreach \i in {1,...,3}
    {
        \node[fill,circle,inner sep=2pt,left=50pt of s\i] (s\i1) {};
        \draw[->,thick] (s\i1) -- (s\i);
        \foreach \j in {2,...,5}
        {
            \pgfmathtruncatemacro{\jj}{\j-1}
            \node[fill,circle,inner sep=2pt,left=10pt of s\i\jj] (s\i\j) {};
        \draw[->,thick] (s\i\j) -- (s\i\jj);
        }
    }
    
    \foreach \i in {1,...,3}
    {
        \node[fill,circle,inner sep=2pt,right=50pt of t\i] (t\i1) {};
        \draw[->,thick] (t\i) -- (t\i1);
        \foreach \j in {2,...,5}
        {
            \pgfmathtruncatemacro{\jj}{\j-1}
            \node[fill,circle,inner sep=2pt,right=10pt of t\i\jj] (t\i\j) {};
        \draw[->,thick] (t\i\jj) -- (t\i\j);
        }
    }

    \begin{scope}[on background layer]
        \fill[yellow!33,rounded corners=10pt] ($(bb.south west) - (20pt,20pt)$) rectangle ($(bb.north east) + (20pt,20pt)$);
        \node at (bb.center) {$G'$};
        \node[above=30pt of s1] {$S$};
        \node[above=30pt of t1] {$T$};
    \end{scope}

    \node[fill=red,circle,inner sep=2pt] at (s13) {};
    \node[fill=red,circle,inner sep=2pt] at (s14) {};
    \node[fill=red,circle,inner sep=2pt] at (s25) {};
    \node[fill=red,circle,inner sep=2pt] at (s33) {};
    \node[fill=red,circle,inner sep=2pt] at (t13) {};
    \node[fill=red,circle,inner sep=2pt] at (t22) {};
    \node[fill=red,circle,inner sep=2pt] at (t24) {};
    \node[fill=red,circle,inner sep=2pt] at (t25) {};
    \node[fill=red,circle,inner sep=2pt] at (t31) {};
    \node[fill=red,circle,inner sep=2pt] at (t34) {};
\end{tikzpicture}

%% file: fig/kp.tex
\begin{tikzpicture}
    \begin{scope}[local bounding box=bb1]
        \node[fill=blue,circle,inner sep=3pt] (s11) {};
        \foreach \i in {2,...,2}
        {
            \pgfmathtruncatemacro{\ii}{\i-1}
            \node[fill=blue,circle,inner sep=3pt,right=20pt of s1\ii] (s1\i) {};
        }

        \node[fill=blue,circle,inner sep=3pt,below=50pt of s11] (t11) {};
        \foreach \i in {2,...,2}
        {
            \pgfmathtruncatemacro{\ii}{\i-1}
            \node[fill=blue,circle,inner sep=3pt,right=20pt of t1\ii] (t1\i) {};
        }
    \end{scope}

    \foreach \i in {1,...,2}
    {
        \node[fill,circle,inner sep=2pt,above left=30pt and 4pt of s1\i] (ss1\i1) {};
        \draw[->,thick] (ss1\i1) -- (s1\i);
        \foreach \j in {2,...,3}
        {
            \pgfmathtruncatemacro{\jj}{\j-1}
            \node[fill,circle,inner sep=2pt,right=4pt of ss1\i\jj] (ss1\i\j) {};
        \draw[->,thick] (ss1\i\j) -- (s1\i);
        }
    }
    
    \foreach \i in {1,...,2}
    {
        \node[fill,circle,inner sep=2pt,below left=30pt and 4pt of t1\i] (tt1\i1) {};
        \draw[->,thick] (t1\i) -- (tt1\i1);
        \foreach \j in {2,...,3}
        {
            \pgfmathtruncatemacro{\jj}{\j-1}
            \node[fill,circle,inner sep=2pt,right=4pt of tt1\i\jj] (tt1\i\j) {};
        \draw[->,thick] (t1\i) -- (tt1\i\j);
        }
    }

    \begin{scope}[on background layer]
        \fill[yellow!33,rounded corners=5pt] ($(bb1.south west) - (15pt,15pt)$) rectangle ($(bb1.north east) + (15pt,15pt)$);
        \node at (bb1.center) {$G^{(1)}$};
    \end{scope}

    \foreach \k in {2,...,4}
    {
        \pgfmathtruncatemacro{\kk}{\k-1}
        \begin{scope}[local bounding box=bb\k]
            \node[fill=blue,circle,inner sep=3pt,right=100pt of s\kk1] (s\k1) {};
            \foreach \i in {2,...,2}
            {
                \pgfmathtruncatemacro{\ii}{\i-1}
                \node[fill=blue,circle,inner sep=3pt,right=20pt of s\k\ii] (s\k\i) {};
            }
    
            \node[fill=blue,circle,inner sep=3pt,below=50pt of s\k1] (t\k1) {};
            \foreach \i in {2,...,2}
            {
                \pgfmathtruncatemacro{\ii}{\i-1}
                \node[fill=blue,circle,inner sep=3pt,right=20pt of t\k\ii] (t\k\i) {};
            }
        \end{scope}
    
        \foreach \i in {1,...,2}
        {
            \node[fill,circle,inner sep=2pt,above left=30pt and 4pt of s\k\i] (ss\k\i1) {};
            \draw[->,thick] (ss\k\i1) -- (s\k\i);
            \foreach \j in {2,...,3}
            {
                \pgfmathtruncatemacro{\jj}{\j-1}
                \node[fill,circle,inner sep=2pt,right=4pt of ss\k\i\jj] (ss\k\i\j) {};
            \draw[->,thick] (ss\k\i\j) -- (s\k\i);
            }
        }
        
        \foreach \i in {1,...,2}
        {
            \node[fill,circle,inner sep=2pt,below left=30pt and 4pt of t\k\i] (tt\k\i1) {};
            \draw[->,thick] (t\k\i) -- (tt\k\i1);
            \foreach \j in {2,...,3}
            {
                \pgfmathtruncatemacro{\jj}{\j-1}
                \node[fill,circle,inner sep=2pt,right=4pt of tt\k\i\jj] (tt\k\i\j) {};
            \draw[->,thick] (t\k\i) -- (tt\k\i\j);
            }
        }
    
        \begin{scope}[on background layer]
            \fill[yellow!33,rounded corners=5pt] ($(bb\k.south west) - (15pt,15pt)$) rectangle ($(bb\k.north east) + (15pt,15pt)$);
            \node at (bb\k.center) {$G^{(\k)}$};
        \end{scope}
    }

    \foreach \k in {2,...,4}
    {
        \pgfmathtruncatemacro{\kk}{\k-1}
        \foreach \i in {1,...,2}
        {
            \foreach \j in {1,...,3}
            {
                \draw[->,thick,bend right] (tt\kk\i\j) edge (tt\k\i\j);
            }
        }
    }

    \node[fill=red,circle,inner sep=2pt] at (ss112) {};
    \node[fill=red,circle,inner sep=2pt] at (ss122) {};
    \node[fill=red,circle,inner sep=2pt] at (ss123) {};
    \node[fill=red,circle,inner sep=2pt] at (ss213) {};
    \node[fill=red,circle,inner sep=2pt] at (ss221) {};
    \node[fill=red,circle,inner sep=2pt] at (ss312) {};
    \node[fill=red,circle,inner sep=2pt] at (ss323) {};
    \node[fill=red,circle,inner sep=2pt] at (ss411) {};
    \node[fill=red,circle,inner sep=2pt] at (ss413) {};
    \node[fill=red,circle,inner sep=2pt] at (ss422) {};
    
    \draw[->,thick,bend right,red] (tt111) edge (tt211);
    \draw[->,thick,bend right,red] (tt211) edge (tt311);
    \draw[->,thick,bend right,red] (tt311) edge (tt411);
    \draw[->,thick,bend right,red] (tt122) edge (tt222);
    \draw[->,thick,bend right,red] (tt222) edge (tt322);
    \draw[->,thick,bend right,red] (tt322) edge (tt422);
\end{tikzpicture}

%% file: fig/cc.tex
\begin{tikzpicture}
    \begin{scope}[local bounding box=bb1]
        \node[fill=blue,circle,inner sep=3pt] (s11) {};
        \foreach \i in {2,...,3}
        {
            \pgfmathtruncatemacro{\ii}{\i-1}
            \node[fill=blue,circle,inner sep=3pt,below=20pt of s1\ii] (s1\i) {};
        }

        \node[fill=blue,circle,inner sep=3pt,right=50pt of s11] (t11) {};
        \foreach \i in {2,...,3}
        {
            \pgfmathtruncatemacro{\ii}{\i-1}
            \node[fill=blue,circle,inner sep=3pt,below=20pt of t1\ii] (t1\i) {};
        }
    \end{scope}

    \foreach \i in {1,...,3}
    {
        \node[fill,circle,inner sep=2pt,above left=4pt and 30pt of s1\i] (ss1\i1) {};
        \draw[->,thick] (ss1\i1) -- (s1\i);
        \foreach \j in {2,...,3}
        {
            \pgfmathtruncatemacro{\jj}{\j-1}
            \node[fill,circle,inner sep=2pt,below=4pt of ss1\i\jj] (ss1\i\j) {};
        \draw[->,thick] (ss1\i\j) -- (s1\i);
        }
    }
    
    \foreach \i in {1,...,3}
    {
        \node[fill,circle,inner sep=2pt,above right=4pt and 30pt of t1\i] (tt1\i1) {};
        \draw[->,thick] (t1\i) -- (tt1\i1);
        \foreach \j in {2,...,3}
        {
            \pgfmathtruncatemacro{\jj}{\j-1}
            \node[fill,circle,inner sep=2pt,below=4pt of tt1\i\jj] (tt1\i\j) {};
        \draw[->,thick] (t1\i) -- (tt1\i\j);
        }
    }

    \begin{scope}[on background layer]
        \fill[yellow!33,rounded corners=5pt] ($(bb1.south west) - (15pt,15pt)$) rectangle ($(bb1.north east) + (15pt,15pt)$);
        \node at (bb1.center) {$G^{(1)}$};
    \end{scope}

    \foreach \k in {2,...,3}
    {
        \pgfmathtruncatemacro{\kk}{\k-1}
        \begin{scope}[local bounding box=bb\k]
            \node[fill=blue,circle,inner sep=3pt,right=150pt of s\kk1] (s\k1) {};
            \foreach \i in {2,...,3}
            {
                \pgfmathtruncatemacro{\ii}{\i-1}
                \node[fill=blue,circle,inner sep=3pt,below=20pt of s\k\ii] (s\k\i) {};
            }
    
            \node[fill=blue,circle,inner sep=3pt,right=50pt of s\k1] (t\k1) {};
            \foreach \i in {2,...,3}
            {
                \pgfmathtruncatemacro{\ii}{\i-1}
                \node[fill=blue,circle,inner sep=3pt,below=20pt of t\k\ii] (t\k\i) {};
            }
        \end{scope}
        
        \foreach \i in {1,...,3}
        {
            \node[fill,circle,inner sep=2pt,above left=4pt and 30pt of s\k\i] (ss\k\i1) {};
            \draw[->,thick] (ss\k\i1) -- (s\k\i);
            \foreach \j in {2,...,3}
            {
                \pgfmathtruncatemacro{\jj}{\j-1}
                \node[fill,circle,inner sep=2pt,below=4pt of ss\k\i\jj] (ss\k\i\j) {};
            \draw[->,thick] (ss\k\i\j) -- (s\k\i);
            }
        }
        
        \foreach \i in {1,...,3}
        {
            \node[fill,circle,inner sep=2pt,above right=4pt and 30pt of t\k\i] (tt\k\i1) {};
            \draw[->,thick] (t\k\i) -- (tt\k\i1);
            \foreach \j in {2,...,3}
            {
                \pgfmathtruncatemacro{\jj}{\j-1}
                \node[fill,circle,inner sep=2pt,below=4pt of tt\k\i\jj] (tt\k\i\j) {};
            \draw[->,thick] (t\k\i) -- (tt\k\i\j);
            }
        }
        
        \begin{scope}[on background layer]
            \fill[yellow!33,rounded corners=5pt] ($(bb\k.south west) - (15pt,15pt)$) rectangle ($(bb\k.north east) + (15pt,15pt)$);
            \node at (bb\k.center) {$G^{(\k)}$};
        \end{scope}
    }
    
    \foreach \k in {2,...,3}
    {
        \pgfmathtruncatemacro{\kk}{\k-1}
        \foreach \i in {1,...,3}
        {
            \foreach \j in {1,...,3}
            {
                \draw[->,thick] (tt\kk\i\j) -- (ss\k\i\j);
            }
        }
    }

    \node[fill=red,circle,inner sep=2pt] at (ss111) {};
    \node[fill=red,circle,inner sep=2pt] at (tt132) {};
    \node[fill=red,circle,inner sep=2pt] at (ss232) {};
    \node[fill=red,circle,inner sep=2pt] at (tt223) {};
    \node[fill=red,circle,inner sep=2pt] at (ss323) {};
    \node[fill=red,circle,inner sep=2pt] at (tt321) {};
    
    \draw[->,thick,red] (ss111) -- (s11);
    \draw[->,thick,red] (s11) -- (t13);
    \draw[->,thick,red] (t13) -- (tt132);
    \draw[->,thick,red] (tt132) -- (ss232);
    \draw[->,thick,red] (ss232) -- (s23);
    \draw[->,thick,red] (s23) -- (t22);
    \draw[->,thick,red] (t22) -- (tt223);
    \draw[->,thick,red] (tt223) -- (ss323);
    \draw[->,thick,red] (ss323) -- (s32);
    \draw[->,thick,red] (s32) -- (t32);
    \draw[->,thick,red] (t32) -- (tt321);
\end{tikzpicture}

%% file: appendix_lowdepth.tex
\section{Extending a Certified Shortcut to have Low Certification Depth}\label{sec:low-depth-certificates}

In this section, we show the following:

\begin{lemma}
    For any $n$-vertex DAG $G = (V, E)$ and certified shortcut $H$ on $G$, there exists a sequence $H_1, H_2, \dots, H_k$ of $k = O(\log n)$ shortcuts of total size $\sum_i |H_i| = O(|H| \log n)$ such that
    \begin{itemize}
        \item $H \subseteq E \cup H_1 \cup \dots \cup H_k$.
        \item For every edge $e = (u, v) \in H_i$, there is a pair of edges $(u, w), (w, v) \in E \cup H_1 \cup \dots H_{i - 1}$ appearing earlier in the sequence that certify $e$.
    \end{itemize}
\end{lemma}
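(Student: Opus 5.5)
The plan is to view the certification structure of $H$ as a derivation DAG and then compress it with a rake/compress-style decomposition. I would first fix, for every $e=(u,v)\in H$, one certifying pair $(u,w),(w,v)\in E\cup H$; these are the \emph{children} of $e$, and children lying in $E$ are leaves. By the topological-order argument of \cref{lem:equi-seq-sc}, each child spans a strictly smaller interval of the topological order, so this derivation graph is acyclic. Next I would assign each $e\in H$ a Strahler-type rank: leaves have rank $0$, and $e$ has rank equal to the maximum of its children's ranks, plus one if the two ranks are equal. A standard counting argument (a rank-$r$ edge unfolds into at least $2^r$ leaf occurrences, which I would bound by $n^2$ or by $|H|$) should give rank $O(\log n)$. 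I expect this counting to need care, because unfolding a DAG can blow up, so I may instead define rank via the sizes of nested topological intervals.

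Fix a rank $r$. The rank-$r$ edges decompose into maximal chains $e_0,e_1,\dots,e_k$, where $e_{i+1}$ is the unique rank-$r$ child of $e_i$ and the other child $f_i$ has rank $<r$ (or lies in $E$). Each $f_i$ shares an endpoint with $e_i$, so the left extensions $f_i$ form a path from $u_0$ to $u_k$ and the right extensions form a path from $v_k$ to $v_0$, and $e_i=(u_i,v_i)$ with $u_i$ on the left path and $v_i$ on the right path. Once all lower-rank edges and the bottom edge $e_k$ exist, I would apply \cref{lem:certified-shortcut-subpaths} to the concatenated path ``left path, then $e_k$, then right path'', which has length $O(k)$. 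This adds $O(k\log k)$ edges. Its divide-and-conquer construction has recursion depth $O(\log k)$ and can be scheduled as $O(\log k)$ parallel rounds, each round using only edges from earlier rounds. Every $e_i$ is then either a new edge or certified by two new edges, so it can be added one round later. Summing over chains, the sizes are disjoint across chains, giving total size $O(|H|\log n)$. The first condition $H\subseteq E\cup H_1\cup\dots\cup H_k$ holds by construction.

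Processing ranks sequentially gives $O(\log n)$ rounds per rank, hence $O(\log^2 n)$ rounds overall. I expect the main obstacle to be bringing this down to the claimed $k=O(\log n)$. My first attempt would be to pipeline across ranks. The subpath structure from \cref{lem:certified-shortcut-subpaths} for a rank-$r$ chain can start on each segment as soon as the $f_i$ in that segment are available, so I would schedule each chain's divide-and-conquer bottom-up and charge its depth against the ranks of its inputs. The aim is a potential bound of the form $\text{round}(e)\le c(\text{rank}(e)+\log(\text{chain position}))$ that telescopes to $O(\log n)$, as in Miller--Reif tree contraction.

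If that charging fails, the fallback is to run randomized rake/compress tree contraction directly on the derivation structure. Compress steps would splice out rank-preserving unary chains using the 2-hop shortcut of \cref{lem:certified-shortcut-subpaths}, and rake steps would absorb finished low-rank children. Each contraction round should shrink the structure by a constant factor in $O(1)$ certification rounds, while adding only $O(\log n)$ edges per original edge amortized.
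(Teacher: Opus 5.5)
\textbf{Gap 1: the bound $k=O(\log n)$ is not proved.} Processing ranks one after another gives only $O(\log^2 n)$ rounds. The pipelining and rake/compress ideas are stated as plans, not arguments.

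Moreover, your specific potential $\mathrm{round}(e)\le c(\mathrm{rank}(e)+\log(\text{position}))$ cannot be attained within your scheme. Suppose every rank-$j$ chain has length $\ell$, and each of its $\ell$ side inputs, as well as both children of its bottom edge, is the top of a separate rank-$(j-1)$ structure of the same shape. This is realizable on a single path with $(\ell+2)^r<n$ edges. Your chain shortcut treats these inputs as atoms, and one round can at most double the number of atoms an edge spans. So each rank adds at least $\log_2\ell$ rounds, and the rank-$r$ top appears no earlier than round $r\log_2\ell$. That is not $O(r+\log\ell)$ when $r$ and $\log\ell$ are both of order $\sqrt{\log n}$.

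A charging argument that works would have to be against something like $\log|P_e|$, where $P_e$ is the unfolded $G$-path of $e$. At the least, that would require splitting chains by the weights $|P_{f_i}|$ rather than by count. Also, the constant-factor shrinkage of rake/compress is a tree argument, and you do not explain how it survives a derivation DAG with sharing.

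Two smaller points:
\begin{itemize}
\item The $O(\log k)$-round schedule you claim for \cref{lem:certified-shortcut-subpaths} needs witnesses other than the parent-based ones of \cref{lem:intree-outtree-certified}. Those have linear depth.
\item Your worry about blow-up in the rank bound is unfounded. The leaves of the unfolding of $e$ are exactly the edges of $P_e$, which is a path in a DAG, so every rank is at most $\log_2 n$.
\end{itemize}

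\textbf{Gap 2: the chains are not disjoint.} Your size bound assumes disjoint chains, but the certification structure is a DAG with sharing. A rank-$r$ edge has at most one rank-$r$ child, yet it can be the rank-$r$ child of many rank-$r$ edges, so maximal chains overlap.

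For example, let $e_i=(u_i,v)$ for $0\le i\le\ell$. For $i<\ell$, certify $e_i$ by $(u_i,u_{i+1})\in E$ and $e_{i+1}$; certify $e_\ell$ by two edges of $E$. In addition, let each $e_i$, together with an edge $(x_i,u_i)\in E$, certify a pendant edge $(x_i,v)$. Then $|H|=\Theta(\ell)$, but the $\ell+1$ maximal chains have total length $\Theta(\ell^2)$. If you instead split them into disjoint chains, you break your invariant that every off-chain input has strictly lower rank.

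\textbf{How the paper handles both issues.} It rebalances using a persistent balanced tree.
\begin{itemize}
\item Let $P_e$ be the concatenation of the certifying paths of $e$'s two certifying edges.
\item Build $T_e=\mathrm{join}(T_{e_1},T_{e_2})$ as a persistent treap over the edges of $P_e$. Every $T_e$ then has depth $O(\log n)$, while each join creates only $O(\log n)$ new nodes. This gives $O(|H|\log n)$ distinct subtrees in total, despite the sharing.
\item For each distinct subtree rooted at $x$ with height $d$ and children $y,z$, add the edge shortcutting $P_x$ to $H_{2d}$. When both children exist, also add a dummy edge shortcutting $P_y$ followed by the $E$-edge at $x$ to $H_{2d-1}$.
\end{itemize}
Every added edge is then certified by edges from strictly earlier layers.
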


\begin{proof}
    Fix the two certifying edges for every shortcut edge, and let the \textit{certifying path} $P_e$ of each shortcut edge $e \in H$ be the concatenation of the certifying paths $P_{e_1}$ and $P_{e_2}$ of its two certifying edges $e_1, e_2$, where we let the certifying path of an edge $e \in E$ be just that edge itself.
    
    First, assign to each shortcut edge $e \in H$ a binary tree $T_e$ on the edges on its certifying path $P_e$, such that the following hold:
    \begin{enumerate}
        \item Each tree has asymptotically optimal depth $O(\log |P_e|)$.
        \item The total number of \emph{distinct} subtrees over all the trees $T_e$ ($e \in H$) is at most $O(|H| \log n)$.
    \end{enumerate}
    Such an assignment can be achieved by using a standard persistent balanced binary tree structure such as a treap \cite{AragonS89} that supports the join-operation on two trees (this operation combines the two input trees so that in the resulting tree, every node of the first tree appears left of every node of the second tree) with worst-case $O(\log n)$ new nodes created. Then, we simply let $T_e = \mathrm{join}(T_{e_1}, T_{e_2})$ for each shortcut edge $e \in H$.

    Each subtree of any of these binary trees corresponds to a path on $G$. For a subtree rooted at a tree node $x$ that contains the edge $e \in E$, has left child $y$, and has right child $z$, this path $P_x$ is the concatenation of $P_y$, $e$ and $P_z$, where $P_y$ is empty if $x$ has no left child, and $P_z$ is empty if it has no right child. We have $P_e = P_x$ for the root $x$ of $T_e$.

    Now, let $k := 2 \max_{e \in H} \mathrm{depth}(T_e)$ be two times the greatest depth of any of the trees $T_e$. Let $x$ be the root tree node of some unique non-leaf subtree, and let $d > 0$ be the depth of the subtree. We include for each such $x$ a shortcut edge $e_{rep}(x)$ shortcutting its path $P_x$ in $H_{2d}$. If the tree node has both a left child $y$ and a right child $z$, to help certify $e_{rep}(x)$, we additionally include in $H_{2d - 1}$ a dummy edge $e_{dum}(x)$ shortcutting the concatenation of $P_{y}$ and the edge $e \in E$ contained in $x$.
    
    Now,
    \begin{itemize}
        \item Since the graph is a DAG, $|P_e| \leq n$ for each $e \in H$. Thus, $k = O(\log n)$. 
        \item Since for each unique subtree over the trees $T_e$ we included at most two edges into $H_1, \dots, H_k$, and there are $O(|H| \log n)$ unique subtrees, we have $\sum_i |H_i| = O(|H| \log n)$.
        \item Since $P_e = P_x$ for the root $x$ of $T_e$ for each $e \in H$, and the edge $e_{rep}(x)$ shortcuts $P_x$, we have $e_{rep}(x) = e$. Thus, $H \subseteq E \cup H_1 \cup \dots \cup H_k$. 
        \item Let $x$ be the root of some unique subtree. If $x$ has both a left child $y$ and a right child $z$, then
        \begin{itemize}
            \item $e_{rep}(x)$ is certified by $e_{dum}(x)$ and $e_{rep}(z)$. Both of these edges appear earlier than $e_{rep}(x)$ as the subtrees of the children of $x$ have strictly smaller depth than the subtree of $x$.
            \item $e_{dum}(x)$ is certified by $e_{rep}(y)$ and the edge $e \in E$ contained in $x$.
        \end{itemize}
        If $x$ has only a left child $y$, $e_{rep}(x)$ is certified by $e_{rep}(y)$ and the edge $e \in E$ contained in $x$. If $x$ has only a right child $z$, $e_{rep}(x)$ is certified by the edge $e \in E$ contained in $x$ and $e_{rep}(z)$. 
    \end{itemize}  
\end{proof}

%% file: app_proof.tex
\section{Deferred Proofs}
\label{sec:miss-proof}

\subsection{Proof of \cref{lem:hs}}
\label{sec:proof-hs}

{\renewcommand\footnote[1]{}\lemhs*}

\begin{proof}
The graph $G$ is constructed as follows.
For simplicity, we first consider the directed setting by constructing DAGs for any $d>0$.
It is extended to the undirected setting at the end, although only for $d \in \set{1,2}$.

\paragraph{Vertices.}

The vertex set $V$ consists of all $(d+2)$-dimensional grid points on $[d] \times [4Dr]^{d+1}$.
So we have $|V|=\Theta(dD^{d+1} r^{d+1})$, as claimed in \cref{item:hs-N}.

\paragraph{Edges.}

For each $i \in [d]$, $\bu \in [4Dr]^{d+1}$, and $\bv=(y,z) \in \cV(r)$, the edge set $E$ contains an edge from $(i,\bu)$ to $((i \bmod d)+1,\bu+\be_i(\bv))$, if it does not go outside the grid, where $\be_i(\bv)$ is the $(d+1)$-dimensional vector with all zero coordinates except that the $i$-th and the $(i+1)$-th coordinates are $y,z$ respectively.
So we have $|E|=\Theta(|V| \cdot \Delta)$, where $\Delta=|\cV(r)|=\Theta(r^{2/3})$ by \cref{lem:ball-size}, as claimed in \cref{item:hs-N}.

\paragraph{Critical paths.}

\begin{algorithm}[tp]
    \caption{The algorithm to construct $\cP$.}
    \label{alg:path}
    \begin{algorithmic}[1]
        \State Let $\cP \gets \emptyset$.
        \For{$\bv_1,\dots,\bv_d \in \cV(r)$}
            \State Let $E' \gets E$.
            \Repeat
                \State Let $P=((i_1,\bu_1),\dots,(i_{dD+1},\bu_{dD+1}))$ be a path in $E'$ such that for all $j \in [dD]$,
                    \[
                    (i_{j+1},\bu_{j+1}) = ((i_j \bmod d)+1,\bu_j + \be_{i_j}(\bv_{i_j})).
                    \]
                \State Let $\cP \gets \cP \cup \set{P}$ and $E' \gets E' \setminus P$.
            \Until{No $P$ exists.}
        \EndFor
    \end{algorithmic}
\end{algorithm}

The set of critical paths $\cP$ is constructed by \cref{alg:path}.
\cref{item:hs-P-length} follows directly from the construction.

For \cref{item:hs-P-unique}, observe that each $P \in \cP$ uniquely determines its first vertex $(i_1,\bu_1)$ and $d$ direction vectors $\bv_i=(y_i,z_i)$ for $i \in [d]$, and vice versa.
Also, its last vertex is
\[
(i_{dD+1}=i_1, \bu_{dD+1} = \bu_1 + D \cdot \sum_{i \in [d]} \be_i(\bv_i)).
\]
Let $P'$ be any shortest path from $(i_1,\bu_1)$ to $(i_{dD+1},\bu_{dD+1})$ in $G$.
Since $P'$ is a shortest path, we can get $|P'| \le |P| = dD$.
In fact,  $|P'| = dD'$ for some $D' \le D$ due to the construction of the edge set $E$.
So $P'=((i'_1=i_1,\bu'_1=\bu_1),\dots,(i'_{dD'+1}=i_{dD+1},\bu'_{dD'+1}=\bu_{dD+1}))$.
Furthermore, $P'$ induces direction vectors $\bv_{i,j}=(y_{i,j},z_{i,j})$ for $i \in [d]$ and $j \in [D']$ such that
\begin{align}
D \cdot y_1 & = \sum_{j \in [D']} y_{1,j}, \nonumber\\
D \cdot z_i + D \cdot y_{i+1} & = \sum_{j \in [D']} z_{i,j} + \sum_{j \in [D']} y_{i+1,j}, \qquad \forall i \in [d-1] \label{eqn:unique-sum}\\
D \cdot z_d & = \sum_{j \in [D']} z_{d,j}. \nonumber
\end{align}

We prove by induction on $d$ that it implies $D'=D$ and $\bv_{i,j}=\bv_i$ for all $i \in [d]$ and $j \in [D]$.
The base case is $d=1$, where we have
\[
D \cdot \bv_1 = \sum_{j \in [D']} \bv_{1,j}.
\]
The claim then follows from strict convexity of $\cV(r)$ as $D' \le D$.
Now consider $d>1$.
Since
\[
D \cdot z_d = \sum_{j \in [D']} z_{d,j},
\]
again by strict convexity of $\cV(r)$, we can get
\begin{align}
D \cdot y_d \ge \sum_{j \in [D']} y_{d,j}. \label{eqn:unique-con}
\end{align}
If \cref{eqn:unique-con} is actually an equality, by \cref{eqn:unique-sum}, we also have
\[
D \cdot z_{d-1} = \sum_{j \in [D']} z_{d-1,j}.
\]
Thus, the claim follows from the inductive hypothesis for $d-1$ and the base case.
Otherwise, \cref{eqn:unique-con} is a strict inequality, implying that
\[
D \cdot z_{d-1} < \sum_{j \in [D']} z_{d-1,j},
\]
due to \cref{eqn:unique-sum}.
By repeatedly applying the strict convexity argument, we would finally conclude
\[
D \cdot y_1 > \sum_{j \in [D']} y_{1,j},
\]
a contradiction.

Now we show \cref{item:hs-P-overlap}.
Fix $P,P' \in \cP$.
By \cref{alg:path}, $P \cap P' = \emptyset$ if they use the same $d$ direction vectors, as they would be added to $\cP$ in the same iteration of the for loop.
Otherwise, they must use different sets of $d$ direction vectors.
Observe that \cref{item:hs-P-unique} also implies $P \cap P'$ can only be a consecutive subpath of both $P,P'$.
Meanwhile, any path of $d$ edges uniquely determines all $d$ direction vectors.
Altogether, it must be that $|P \cap P'| < d$.

To see the size of $\cP$, it is sufficient to show that each iteration of the for loop in \cref{alg:path} adds $\Theta(|V|/(dD))$ critical paths.
Fix an iteration for $\bv_1,\dots,\bv_d$, which uses only the edges from $(i,\bu)$ to $((i \bmod d)+1,\bu+\be_i(\bv_i))$ for $i \in [d]$ and $\bu \in [4Dr]^{d+1}$.
Note that each vertex $(i,\bu)$ with $i \in [d]$ and $\bu \in [2Dr]^{d+1}$ has one such edge.
Let $E''$ be the set of these $|V|/2^{d+1}$ edges and $\cP'$ the set of critical paths added in this iteration.
We claim that $|\cP'| = \Theta(|E''|/(dD)) = \Theta(|V|/(dD))$, as desired.

Regarding the claim, consider an unused edge $e \in E''$ from $(i',\bu)$ to $((i' \bmod d)+1,\bu+\be_{i'}(\bv_{i'}))$ for some $i' \in [d]$ and $\bu \in [2Dr]^{d+1}$.
Let $P'$ be the unique path from $(i',\bu)$ with $dD$ edges, by following the directions $\bv_i$ alternately.
$P'$ is well defined as $\bu \in [2Dr]^{d+1}$ and each of its coordinates increases by at most $2Dr$.
Since $e$ is unused, there exists $P \in \cP'$ such that $P \cap P' \ne \emptyset$.
Observe that this happens at most $dD$ times for any fixed $P$, one for each of $dD$ ``backtracing'' steps in the directions $-\bv_i$ alternately.
So each such unused edge $e \in E''$ can be paired with a unique edge of $P$.
In turn, this means $E''$ has at most $|\cP'| \cdot dD$ unused edges and at most $|\cP'| \cdot dD$ used edges.
Combined with $|E''|=|V|/2^{d+1}$, we can get
\[
2|\cP'| \cdot dD \ge |V|/2^{d+1},
\]
concluding the proof.

\paragraph{From directed to undirected.}

In fact, almost identical arguments still work if the directions of the edges are eliminated, except that we now need to show that each critical path $P \in \cP$ found by \cref{alg:path} remains the unique shortest path between its endpoints.
This is necessary because each edge can now be traversed in both directions.
We show that such uniqueness indeed holds for $d \in \set{1,2}$, which is sufficient for our purposes, in the rest of this section.

The case of $d=1$ follows straightforwardly from strict convexity.
To see this, each critical path $P \in \cP$ must has the form $s=(1,\bu),(1,\bu+\bv),\dots,t=(1,\bu+D \cdot \bv)$, for some $\bu \in [4Dr]^2$ and $\bv \in \cV(r)$.
Due to strict convexity, this is the unique $s$-$t$ path with at most $D$ edges, each of which corresponds to a vector in $\cV(r) \cup (-\cV(r)) \subseteq \cB(r)$.

For $d=2$, fix a critical path $P$ between $s=(i,\bu)$ and $t=(i,\bu')$ for some $i \in [2]$ and $\bu,\bu' \in [4Dr]^3$.
Without loss of generality, assume $i=1$.
The case of $i=2$ is symmetric.
Let $\bv_1=(y_1,z_1),\bv_2=(y_2,z_2) \in \cV(r)$ be the two direction vectors induced by $P$.
Let $P'$ be any path between $s$ and $t'=(1,\bu'')$ for some $\bu'' \in [4Dr]^3$, with at most $2D$ edges.
We claim that $P$ is the unique such path maximizing $\langle \bu''-\bu, \bv' \rangle$, where $\bv'=(y_1/z_1,1,z_2/y_2)$.
Observe that our constructed graph is bipartite for $d=2$.
As a result, each odd step of $P'$ either traverses an edge corresponding to $\be_1(\bv'_1)$ for some $\bv'_1 \in \cV(r)$ in the forward direction, or traverses an edge corresponding to $\be_2(\bv'_2)$ for some $\bv'_2 \in \cV(r)$ in the backward direction.
In the former case, $\langle \be_1(\bv'_1), \bv' \rangle > 0$ and $\bv'_1=\bv_1$ is the unique maximizer due to strict convexity.
In the latter case, we always have $\langle -\be_2(\bv'_2), \bv' \rangle < 0$ because $\bv'_2 \in \cV(r)$ can only have positive coordinates by definition.
Thus, $P'$ can maximize $\langle \bu''-\bu, \bv' \rangle$ only if it traverses the edge corresponding to $\be_1(\bv_1)$ in the forward direction, as does $P$.
A similar analysis holds for even steps.
This concludes the proof of the claim.
Altogether, $P$ must be the unique $s$-$t$ shortest path.
\end{proof}

\subsection{Proof of \cref{lem:rp}}
\label{sec:proof-rp}

\lemrp*

\begin{proof}
The starting point is a construction of \cite{bodwin2021new}, originally introduced in the context of reachability and distance preservers.
At a high level, the graph is obtained via obstacle products, where both the inner and the outer graphs use constructions of \cite{HuangP21}, summarized in \cref{lem:hp}.

The outer graph has $3$ layers of high dimensional grids.
Regarding the inner graph, for reachability and distance preservers as in \cite{bodwin2021new}, it suffices to use $2$-dimensional grids in each layer because the lower bounds are typically concerned with $\cO(n)$-size preservers, as preservers of size $\cO(m)$ make little sense.
For us, to obtain $m^{1+\Omega(1)}$ lower bounds on certification complexity, we have to first apply alternation products to increase the number of critical paths.
This is also by now the standard way of extending diameter lower bounds on $\cO(n)$-size shortcuts to $\cO(m)$-size shortcuts.
Specifically, we use the improved alternation product of \cite{LuWWX22} as given in \cref{lem:lvwx}.

Readers are referred to for example \cite{CoppersmithE06,AbboudB24,AbboudB17,BodwinH22,BodwinHWWX24} for any necessary background knowledge and more detailed discussion on reachability and distance preservers, as well as alternation and obstacle products.

In the rest of this section, we use subscripts $\sO,\sI$ to distinguish between variables for the outer and the inner graphs.
Concretely, set $d_\sO=\sqrt{\log n_\sO}$ and $D_\sO=3$ for the outer graph $G_\sO$.
By \cref{lem:hp}, we can get $r_\sO=\Theta(n_\sO^{1/d_\sO}/D_\sI^{(d_\sO+1)/d_\sO})=2^{d_\sO-o(1)}$ and $\Delta_\sO=\Theta(r_\sO^{d_\sO(d_\sO-1)/(d_\sO+1)})=n_\sO^{1-o(1)}$.
Moreover, $G_\sO$ has $|\cP_\sO|=\Theta(n_\sO \Delta_\sO / D_\sO)=n_\sO^{2-o(1)}$ edge-disjoint critical paths.

In the meantime, set $d_\sI=2$ and $r_\sI=n_\sI^\epsilon$ for the inner graph $G_\sI$.
By \cref{lem:lvwx}, we also get $D_\sI=\Theta((n_\sI/d_\sI)^{1/(d_\sI+2)}/r_\sI^{(d_\sI+1)/(d_\sI+2)})=\Theta(n_\sI^{1/4-3\epsilon/4})$ and $\Delta_\sI=\Theta(r_\sI^{2/3})=\Theta(n_\sI^{2\epsilon/3})$.
Moreover, $G_\sI$ has $|\cP_\sI|=\Theta(n_\sI \Delta_\sI^{d_\sI}/(d_\sI D_\sI))=\Theta(n_\sI^{3/4+25\epsilon/12})$ critical paths, any two of which share at most one edge.

Now, we are ready to apply obstacle products.
In particular, the final graph $G$ is obtained by replacing each vertex in the middle layer of $G_\sO$ with a disjoint copy of $G_\sI$, according to the following procedure.
Fix a vertex $v$ in the middle layer of $G_\sO$.
We write $G^{(v)}$ to denote the copy of $G_\sI$ corresponding to $v$, and use $u^{(v)}$ to identify the copy of vertex $u$ in $G^{(v)}$.
Since critical paths in $G_\sO$ are edge-disjoint, at most $\Delta_\sO$ of them contain $v$.
For every such path $(x,v,y)$, arbitrarily pick a distinct critical path in $G_\sI$. 
Let $s,t$ be its endpoints.
By adding the edges $(x,s^{(v)})$ and $(t^{(v)},y)$, there is now a path from $x$ to $y$ in $G$ by following the critical path from $s^{(v)}$ to $t^{(v)}$ in $G^{(v)}$.
Note that the above procedure is well-defined so long as $|\cP_\sI| \ge \Delta_\sO$.
Let $\cP$ be the set of all paths obtained by this procedure.

We show that \cref{lem:rp} is satisfied by $G$ and $\cP$.
To this end, observe that the total number of vertices in $G$ is $n=\Theta(n_\sO n_\sI)$.
By setting $\Delta_\sO=|\cP_\sI|$, we can get $n_\sO=n^{3/7+100\epsilon/147-o(\epsilon)}$ and $n_\sI=n^{4/7-100\epsilon/147+o(\epsilon)}$.
As both $G_\sO$ and $G_\sI$ are layered DAGs, $G$ is also a layered DAG due to obstacle products, and it has $D=\Theta(D_\sI)=\Theta(n_\sI^{1/4-3\epsilon/4})=n^{1/7-\Theta(\epsilon)}$ layers, proving \cref{item:rp-D}.
For \cref{item:rp-S}, the first and the last layers each have $\Theta(n_\sO)=n^{3/7+100\epsilon/147-o(\epsilon)} \le n^{1/2}$ vertices for $\epsilon \in [0,21/200]$.

As to \cref{item:rp-P}, \cref{item:rp-P-length} is straightforward by construction.
Also recall that each critical path in $G$ is obtained from a critical path $(x,v,y)$ in $G_\sO$, by replacing $v$ with some critical path from $s^{(v)}$ to $t^{(v)}$ in $G^{(v)}$.
Since $(x,v,y)$ is the unique path from $x$ to $y$ in $G_\sO$, any path from $x$ to $y$ in $G$ can only use vertices in $G^{(v)}$.
Moreover, since critical paths in $G_\sO$ are edge-disjoint, $(x,s^{(v)})$ is the only edge in $G$ from $x$ to any vertex in $G^{(v)}$.
Similarly, $(t^{(v)},y)$ is the only edge in $G$ from any vertex in $G^{(v)}$ to $y$.
As a result, the uniqueness of $s$-$t$ path in $G_\sI$ implies the uniqueness of $x$-$y$ path in $G$, hence proving \cref{item:rp-P-unique}.
To see \cref{item:rp-P-overlap}, consider any two critical paths $P,P'$ in $G$ obtained from critical paths $(x,v,y),(x',v',y')$ in $G_\sO$ respectively, which are edge-disjoint by \cref{item:hp-P-overlap} of \cref{lem:hp}.
If they are actually vertex-disjoint in $G_\sO$, then $P,P'$ are also vertex-disjoint in $G$.
If $x=x'$ or $y=y'$, we must have $v \ne v'$, so $P,P'$ share a single vertex, namely either $x$ or $y$, in $G$ as well.
Otherwise, $v=v'$.
In such case, $P,P'$ must follow two distinct critical paths in $G^{(v)}$.
Thus, their intersection is at most one edge by \cref{item:lvwx-P-overlap} of \cref{lem:lvwx}.

Finally, we calculate the total number of edges and of critical paths in $G$.
By construction, it has $m=\Theta(m_\sO+n_\sO m_\sI)=\Theta(n_\sO n_\sI \Delta_\sI)=\Theta(n\Delta_\sI)$.
Meanwhile, we have $|\cP|=|\cP_\sO|=\Theta(n_\sO \Delta_\sO)=\Theta(n_\sO |\cP_\sI|)=\Theta(n_\sO n_\sI \Delta_\sI^2 / D_\sI)=\Theta(n \Delta_\sI^2 / D)$.
Note that $\Delta_\sI=\Theta(n_\sI^{2\epsilon/3})=n^{8\epsilon/21-o(\epsilon)}$.
Therefore, adjusting $\epsilon$ by an appropriate constant factor in the construction concludes the proof.
\end{proof}

%% file: appendix_treaps.tex
\section{Treaps}\label{sec:treaps}

A treap \cite{AragonS89} is a randomized balanced binary tree data structure representing an ordered sequence $S = s_1, \dots, s_n$ of elements, each of which is associated with a randomly sampled \textit{priority} $p(s_i)$. The layout of the treap is entirely determined by the sequence $S$ and the priorities: the root of the treap is the element $s_j$ of minimum priority $p(s_j)$, its left child is the root of a treap built on elements $s_1, \dots, s_{j - 1}$, and its right child is the root of a treap built on elements $s_{j + 1}, \dots, s_n$ (using the same global priorities $p(e_i)$). Each vertex also stores the size of its subtree.

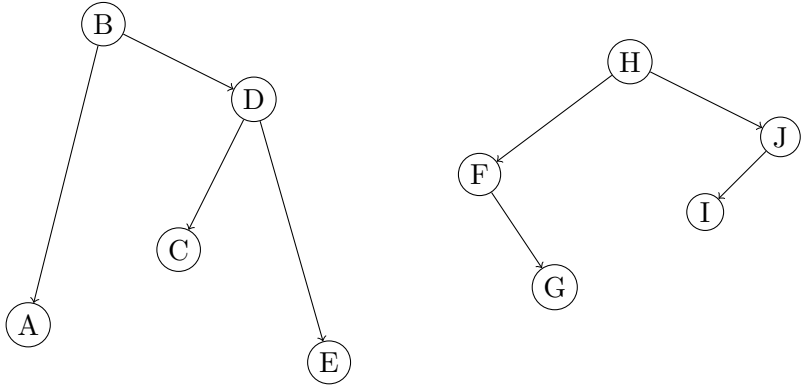
\begin{figure}[tp]
    \centering
    \input{fig/treap1}
    \caption{Two treaps on sequences ABCDE and FGHIJ respectively, with the priorities represented by y-coordinates, vertices of smaller priority being higher up.}
    \label{fig:treap1}
\end{figure}

Note that the depth of a treap on an $n$-length sequence is $O(\log n)$ with high probability: the minimum-priority element is in the middle half of elements with probability $\frac{1}{2}$, in which case the sequences for both the left and right subtrees are a constant fraction smaller. Furthermore, given fixed sampled priorities, a treap has the same layout and in particular depth regardless of the operations that produced it. Thus, for a sequence of $\poly(n)$ operations on a number of treaps of total size $n$, each treap produced has depth $O(\log n)$ with high probability.

A treap supports two main operations:
\begin{itemize}
    \item \textsc{Join}: given two treaps with sequences $S_1$ and $S_2$, construct a treap for the sequence $S = S_1 S_2$. This operation can be performed in time proportional to the depth of the resulting treap, which is $O(\log |S|)$ with high probability.
    \item \textsc{Split}: given a treap and a split point $k$, split the treap into two treaps for the prefix $S_1$ and suffix $S_2$ of the sequence $S$, where $|S_1| = k$ and $|S_2| = |S| - k$. This operation can be performed in time proportional to the depth of the input treap, which is $O(\log |S|)$ with high probability.
\end{itemize}
Both of these operations are simple to implement recursively. Specifically, the \textsc{join} operation on two treaps can be performed as follows:
\begin{enumerate}
    \item Set the root of the result to be the lower-priority root of the two input treaps. Suppose this is the root of the left treap.
    \item Recursively \textsc{join} the right subtree of the left treap's root and the right treap, to form the treap for the subsequence right of the root.
    \item Set the right subtree of the root to be the result of this merge.
\end{enumerate}
The \textsc{split} operation on a treap and a split point $k$ can be performed as follows:
\begin{enumerate}
    \item If the left subtree has size at least $k$, recursively split it, set the right tree of the recursive split to be the new left subtree, and return the left tree of the recursive split and the old root.
    \item Otherwise, if the left subtree has size strictly less than $k$, recursively split the right subtree at $k - \text{size of the left subtree} - 1$. Set the left tree of the recursive split to be the new right subtree, and return the old root and the right tree of the recursive split.
\end{enumerate}
The subtree sizes can be maintained by updating the size whenever a child is changed or recursion returns to a vertex.

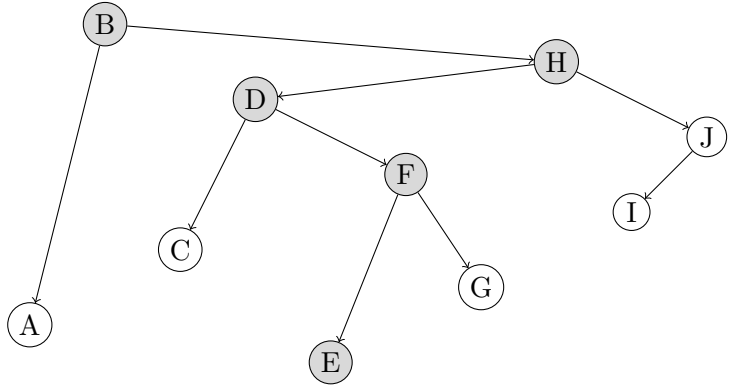
\begin{figure}[tp]
    \centering
    \input{fig/treap2}
    \caption{The treap resulting from a \textsc{join} operation on the two treaps of \Cref{fig:treap1}. The vertices visited by the operation are highlighted in gray. Subtrees of unvisited vertices remain identical.}
    \label{fig:treap2}
\end{figure}

\begin{figure}[tp]
    \centering
    \input{fig/treap3}
    \caption{The two treaps resulting from a \textsc{split} operation on the treap of \Cref{fig:treap2} with $k = 6$. The vertices visited by the operation are highlighted in gray. Subtrees of unvisited vertices remain identical.}
    \label{fig:treap3}
\end{figure}
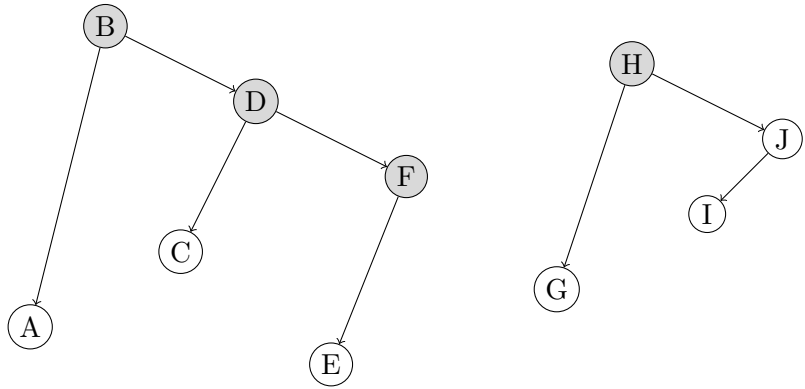

%% file: fig/treap1.tex
\begin{tikzpicture}
    \node[draw, circle,inner sep=2pt] (A) at (0, -4) {A};
    \node[draw, circle,inner sep=2pt] (B) at (1, 0) {B};
    \node[draw, circle,inner sep=2pt] (C) at (2, -3) {C};
    \node[draw, circle,inner sep=2pt] (D) at (3, -1) {D};
    \node[draw, circle,inner sep=2pt] (E) at (4, -4.5) {E};
    \draw[->] (B) edge (A);
    \draw[->] (B) edge (D);
    \draw[->] (D) edge (C);
    \draw[->] (D) edge (E);

    \node[draw, circle,inner sep=2pt] (F) at (6, -2) {F};
    \node[draw, circle,inner sep=2pt] (G) at (7, -3.5) {G};
    \node[draw, circle,inner sep=2pt] (H) at (8, -0.5) {H};
    \node[draw, circle,inner sep=2pt] (I) at (9, -2.5) {I};
    \node[draw, circle,inner sep=2pt] (J) at (10, -1.5) {J};
    \draw[->] (H) edge (F);
    \draw[->] (H) edge (J);
    \draw[->] (J) edge (I);
    \draw[->] (F) edge (G);
\end{tikzpicture}

%% file: fig/treap2.tex
\begin{tikzpicture}

    \node[draw, circle,inner sep=2pt] (A) at (0, -4) {A};
    \node[draw, circle,inner sep=2pt, fill=black!15] (B) at (1, 0) {B};
    \node[draw, circle,inner sep=2pt] (C) at (2, -3) {C};
    \node[draw, circle,inner sep=2pt, fill=black!15] (D) at (3, -1) {D};
    \node[draw, circle,inner sep=2pt, fill=black!15] (E) at (4, -4.5) {E};

    \node[draw, circle,inner sep=2pt, fill=black!15] (F) at (5, -2) {F};
    \node[draw, circle,inner sep=2pt] (G) at (6, -3.5) {G};
    \node[draw, circle,inner sep=2pt, fill=black!15] (H) at (7, -0.5) {H};
    \node[draw, circle,inner sep=2pt] (I) at (8, -2.5) {I};
    \node[draw, circle,inner sep=2pt] (J) at (9, -1.5) {J};
    
    \draw[->] (B) edge (A);
    \draw[->] (B) edge (H);
    \draw[->] (D) edge (C);
    \draw[->] (D) edge (F);
    \draw[->] (H) edge (D);
    \draw[->] (H) edge (J);
    \draw[->] (J) edge (I);
    \draw[->] (F) edge (E);
    \draw[->] (F) edge (G);
\end{tikzpicture}

%% file: fig/treap3.tex
\begin{tikzpicture}
    \node[draw, circle,inner sep=2pt] (A) at (0, -4) {A};
    \node[draw, circle,inner sep=2pt, fill=black!15] (B) at (1, 0) {B};
    \node[draw, circle,inner sep=2pt] (C) at (2, -3) {C};
    \node[draw, circle,inner sep=2pt, fill=black!15] (D) at (3, -1) {D};
    \node[draw, circle,inner sep=2pt] (E) at (4, -4.5) {E};

    \node[draw, circle,inner sep=2pt, fill=black!15] (F) at (5, -2) {F};
    \node[draw, circle,inner sep=2pt] (G) at (7, -3.5) {G};
    \node[draw, circle,inner sep=2pt, fill=black!15] (H) at (8, -0.5) {H};
    \node[draw, circle,inner sep=2pt] (I) at (9, -2.5) {I};
    \node[draw, circle,inner sep=2pt] (J) at (10, -1.5) {J};
    
    \draw[->] (B) edge (A);
    \draw[->] (B) edge (D);
    \draw[->] (D) edge (C);
    \draw[->] (D) edge (F);
    \draw[->] (F) edge (E);
    
    \draw[->] (H) edge (G);
    \draw[->] (H) edge (J);
    \draw[->] (J) edge (I);
\end{tikzpicture}

%% file: main.bib
@incollection{Raskhodnikova10,
  author       = {Sofya Raskhodnikova},
  editor       = {Oded Goldreich},
  title        = {Transitive-Closure Spanners: {A} Survey},
  booktitle    = {Property Testing - Current Research and Surveys},
  series       = {Lecture Notes in Computer Science},
  volume       = {6390},
  pages        = {167--196},
  publisher    = {Springer},
  year         = {2010},
  url          = {https://doi.org/10.1007/978-3-642-16367-8\_10},
  doi          = {10.1007/978-3-642-16367-8\_10},
  bibsource    = {dblp computer science bibliography, https://dblp.org}
}

@inproceedings{BodwinH22,
  author       = {Greg Bodwin and
                  Gary Hoppenworth},
  title        = {New Additive Spanner Lower Bounds by an Unlayered Obstacle Product},
  booktitle    = {63rd {IEEE} Annual Symposium on Foundations of Computer Science, {FOCS}
                  2022, Denver, CO, USA, October 31 - November 3, 2022},
  pages        = {778--788},
  publisher    = {{IEEE}},
  year         = {2022},
  url          = {https://doi.org/10.1109/FOCS54457.2022.00079},
  doi          = {10.1109/FOCS54457.2022.00079},
  bibsource    = {dblp computer science bibliography, https://dblp.org}
}

@article{AbboudB24,
  author       = {Amir Abboud and
                  Greg Bodwin},
  title        = {Reachability Preservers: New Extremal Bounds and Approximation Algorithms},
  journal      = {{SIAM} J. Comput.},
  volume       = {53},
  number       = {2},
  pages        = {221--246},
  year         = {2024},
  url          = {https://doi.org/10.1137/21m1442176},
  doi          = {10.1137/21M1442176},
  bibsource    = {dblp computer science bibliography, https://dblp.org}
}

@article{AbboudB17,
  author       = {Amir Abboud and
                  Greg Bodwin},
  title        = {The 4/3 Additive Spanner Exponent Is Tight},
  journal      = {J. {ACM}},
  volume       = {64},
  number       = {4},
  pages        = {28:1--28:20},
  year         = {2017},
  url          = {https://doi.org/10.1145/3088511},
  doi          = {10.1145/3088511},
  bibsource    = {dblp computer science bibliography, https://dblp.org}
}

@article{bals2025greedy,
  title={Greedy Algorithms for Shortcut Sets and Hopsets},
  author={Bals, Ben and Blikstad, Joakim and Bodwin, Greg and Dadush, Daniel and Forster, Sebastian and Nazari, Yasamin},
  journal={arXiv preprint arXiv:2511.20111},
  year={2025}
}

@article{bodwin2021new,
  title={New results on linear size distance preservers},
  author={Bodwin, Greg},
  journal={SIAM Journal on Computing},
  volume={50},
  number={2},
  pages={662--673},
  year={2021},
  publisher={SIAM}
}

@article{CoppersmithE06,
  author       = {Don Coppersmith and
                  Michael Elkin},
  title        = {Sparse Sourcewise and Pairwise Distance Preservers},
  journal      = {{SIAM} J. Discret. Math.},
  volume       = {20},
  number       = {2},
  pages        = {463--501},
  year         = {2006},
  url          = {https://doi.org/10.1137/050630696},
  doi          = {10.1137/050630696},
  bibsource    = {dblp computer science bibliography, https://dblp.org}
}

@inproceedings{Thorup92,
  author       = {Mikkel Thorup},
  editor       = {Ernst W. Mayr},
  title        = {On Shortcutting Digraphs},
  booktitle    = {Graph-Theoretic Concepts in Computer Science, 18th International Workshop,
                  {WG} '92, Wiesbaden-Naurod, Germany, June 19-20, 1992, Proceedings},
  series       = {Lecture Notes in Computer Science},
  volume       = {657},
  pages        = {205--211},
  publisher    = {Springer},
  year         = {1992},
  url          = {https://doi.org/10.1007/3-540-56402-0\_48},
  doi          = {10.1007/3-540-56402-0\_48},
  bibsource    = {dblp computer science bibliography, https://dblp.org}
}

@inproceedings{Hesse03,
  author       = {William Hesse},
  title        = {Directed graphs requiring large numbers of shortcuts},
  booktitle    = {Proceedings of the Fourteenth Annual {ACM-SIAM} Symposium on Discrete
                  Algorithms, January 12-14, 2003, Baltimore, Maryland, {USA}},
  pages        = {665--669},
  publisher    = {{ACM/SIAM}},
  year         = {2003},
  url          = {http://dl.acm.org/citation.cfm?id=644108.644216},
  bibsource    = {dblp computer science bibliography, https://dblp.org}
}

@inproceedings{KoganP23,
  author       = {Shimon Kogan and
                  Merav Parter},
  editor       = {Nikhil Bansal and
                  Viswanath Nagarajan},
  title        = {Faster and Unified Algorithms for Diameter Reducing Shortcuts and
                  Minimum Chain Covers},
  booktitle    = {Proceedings of the 2023 {ACM-SIAM} Symposium on Discrete Algorithms,
                  {SODA} 2023, Florence, Italy, January 22-25, 2023},
  pages        = {212--239},
  publisher    = {{SIAM}},
  year         = {2023},
  url          = {https://doi.org/10.1137/1.9781611977554.ch9},
  doi          = {10.1137/1.9781611977554.CH9},
  bibsource    = {dblp computer science bibliography, https://dblp.org}
}

@article{barany1998convex,
  title={The convex hull of the integer points in a large ball},
  author={B{\'a}r{\'a}ny, Imre and Larman, David G},
  journal={Mathematische Annalen},
  volume={312},
  pages={167--182},
  year={1998},
  publisher={Springer-Verlag}
}

@article{UllmanY91,
  author       = {Jeffrey D. Ullman and
                  Mihalis Yannakakis},
  title        = {High-Probability Parallel Transitive-Closure Algorithms},
  journal      = {{SIAM} J. Comput.},
  volume       = {20},
  number       = {1},
  pages        = {100--125},
  year         = {1991},
  url          = {https://doi.org/10.1137/0220006},
  doi          = {10.1137/0220006},
  bibsource    = {dblp computer science bibliography, https://dblp.org}
}

@inproceedings{JambulapatiLS19,
  author       = {Arun Jambulapati and
                  Yang P. Liu and
                  Aaron Sidford},
  editor       = {David Zuckerman},
  title        = {Parallel Reachability in Almost Linear Work and Square Root Depth},
  booktitle    = {60th {IEEE} Annual Symposium on Foundations of Computer Science, {FOCS}
                  2019, Baltimore, Maryland, USA, November 9-12, 2019},
  pages        = {1664--1686},
  publisher    = {{IEEE} Computer Society},
  year         = {2019},
  url          = {https://doi.org/10.1109/FOCS.2019.00098},
  doi          = {10.1109/FOCS.2019.00098},
  bibsource    = {dblp computer science bibliography, https://dblp.org}
}

@inproceedings{BermanRR10,
  author       = {Piotr Berman and
                  Sofya Raskhodnikova and
                  Ge Ruan},
  editor       = {Kamal Lodaya and
                  Meena Mahajan},
  title        = {Finding Sparser Directed Spanners},
  booktitle    = {{IARCS} Annual Conference on Foundations of Software Technology and
                  Theoretical Computer Science, {FSTTCS} 2010, December 15-18, 2010,
                  Chennai, India},
  series       = {LIPIcs},
  volume       = {8},
  pages        = {424--435},
  publisher    = {Schloss Dagstuhl - Leibniz-Zentrum f{\"{u}}r Informatik},
  year         = {2010},
  url          = {https://doi.org/10.4230/LIPIcs.FSTTCS.2010.424},
  doi          = {10.4230/LIPICS.FSTTCS.2010.424},
  bibsource    = {dblp computer science bibliography, https://dblp.org}
}

@inproceedings{KoganP22,
  author       = {Shimon Kogan and
                  Merav Parter},
  editor       = {Joseph (Seffi) Naor and
                  Niv Buchbinder},
  title        = {New Diameter-Reducing Shortcuts and Directed Hopsets: Breaking the
                  Barrier},
  booktitle    = {Proceedings of the 2022 {ACM-SIAM} Symposium on Discrete Algorithms,
                  {SODA} 2022, Virtual Conference / Alexandria, VA, USA, January 9 -
                  12, 2022},
  pages        = {1326--1341},
  publisher    = {{SIAM}},
  year         = {2022},
  url          = {https://doi.org/10.1137/1.9781611977073.55},
  doi          = {10.1137/1.9781611977073.55},
  bibsource    = {dblp computer science bibliography, https://dblp.org}
}

@inproceedings{BodwinH23,
  author       = {Greg Bodwin and
                  Gary Hoppenworth},
  title        = {Folklore Sampling is Optimal for Exact Hopsets: Confirming the {\(\surd\)}n
                  Barrier},
  booktitle    = {64th {IEEE} Annual Symposium on Foundations of Computer Science, {FOCS}
                  2023, Santa Cruz, CA, USA, November 6-9, 2023},
  pages        = {701--720},
  publisher    = {{IEEE}},
  year         = {2023},
  url          = {https://doi.org/10.1109/FOCS57990.2023.00046},
  doi          = {10.1109/FOCS57990.2023.00046},
  bibsource    = {dblp computer science bibliography, https://dblp.org}
}

@inproceedings{HoppenworthXX25,
  author       = {Gary Hoppenworth and
                  Yinzhan Xu and
                  Zixuan Xu},
  editor       = {Yossi Azar and
                  Debmalya Panigrahi},
  title        = {New Separations and Reductions for Directed Hopsets and Preservers},
  booktitle    = {Proceedings of the 2025 Annual {ACM-SIAM} Symposium on Discrete Algorithms,
                  {SODA} 2025, New Orleans, LA, USA, January 12-15, 2025},
  pages        = {4405--4443},
  publisher    = {{SIAM}},
  year         = {2025},
  url          = {https://doi.org/10.1137/1.9781611978322.150},
  doi          = {10.1137/1.9781611978322.150},
  bibsource    = {dblp computer science bibliography, https://dblp.org}
}

@inproceedings{LuWWX22,
  author       = {Kevin Lu and
                  Virginia {Vassilevska Williams} and
                  Nicole Wein and
                  Zixuan Xu},
  editor       = {Joseph (Seffi) Naor and
                  Niv Buchbinder},
  title        = {Better Lower Bounds for Shortcut Sets and Additive Spanners via an
                  Improved Alternation Product},
  booktitle    = {Proceedings of the 2022 {ACM-SIAM} Symposium on Discrete Algorithms,
                  {SODA} 2022, Virtual Conference / Alexandria, VA, USA, January 9 -
                  12, 2022},
  pages        = {3311--3331},
  publisher    = {{SIAM}},
  year         = {2022},
  url          = {https://doi.org/10.1137/1.9781611977073.131},
  doi          = {10.1137/1.9781611977073.131},
  bibsource    = {dblp computer science bibliography, https://dblp.org}
}

@inproceedings{BodwinHWWX24,
  author       = {Greg Bodwin and
                  Gary Hoppenworth and
                  Virginia {Vassilevska Williams} and
                  Nicole Wein and
                  Zixuan Xu},
  editor       = {Karl Bringmann and
                  Martin Grohe and
                  Gabriele Puppis and
                  Ola Svensson},
  title        = {Additive Spanner Lower Bounds with Optimal Inner Graph Structure},
  booktitle    = {51st International Colloquium on Automata, Languages, and Programming,
                  {ICALP} 2024, July 8-12, 2024, Tallinn, Estonia},
  series       = {LIPIcs},
  volume       = {297},
  pages        = {28:1--28:17},
  publisher    = {Schloss Dagstuhl - Leibniz-Zentrum f{\"{u}}r Informatik},
  year         = {2024},
  url          = {https://doi.org/10.4230/LIPIcs.ICALP.2024.28},
  doi          = {10.4230/LIPICS.ICALP.2024.28},
  bibsource    = {dblp computer science bibliography, https://dblp.org}
}

@inproceedings{WilliamsXX24,
  author       = {Virginia {Vassilevska Williams} and
                  Yinzhan Xu and
                  Zixuan Xu},
  editor       = {David P. Woodruff},
  title        = {Simpler and Higher Lower Bounds for Shortcut Sets},
  booktitle    = {Proceedings of the 2024 {ACM-SIAM} Symposium on Discrete Algorithms,
                  {SODA} 2024, Alexandria, VA, USA, January 7-10, 2024},
  pages        = {2643--2656},
  publisher    = {{SIAM}},
  year         = {2024},
  url          = {https://doi.org/10.1137/1.9781611977912.94},
  doi          = {10.1137/1.9781611977912.94},
  bibsource    = {dblp computer science bibliography, https://dblp.org}
}

@article{HuangP21,
  author       = {Shang{-}En Huang and
                  Seth Pettie},
  title        = {Lower Bounds on Sparse Spanners, Emulators, and Diameter-Reducing
                  Shortcuts},
  journal      = {{SIAM} J. Discret. Math.},
  volume       = {35},
  number       = {3},
  pages        = {2129--2144},
  year         = {2021},
  url          = {https://doi.org/10.1137/19M1306154},
  doi          = {10.1137/19M1306154},
  bibsource    = {dblp computer science bibliography, https://dblp.org}
}

@inproceedings{AragonS89,
  author       = {Cecilia R. Aragon and
                  Raimund Seidel},
  title        = {Randomized Search Trees},
  booktitle    = {30th Annual Symposium on Foundations of Computer Science, Research
                  Triangle Park, North Carolina, USA, 30 October - 1 November 1989},
  pages        = {540--545},
  publisher    = {{IEEE} Computer Society},
  year         = {1989},
  url          = {https://doi.org/10.1109/SFCS.1989.63531},
  doi          = {10.1109/SFCS.1989.63531},
  bibsource    = {dblp computer science bibliography, https://dblp.org}
}

@inproceedings{Fineman18,
  author       = {Jeremy T. Fineman},
  editor       = {Ilias Diakonikolas and
                  David Kempe and
                  Monika Henzinger},
  title        = {Nearly work-efficient parallel algorithm for digraph reachability},
  booktitle    = {Proceedings of the 50th Annual {ACM} {SIGACT} Symposium on Theory
                  of Computing, {STOC} 2018, Los Angeles, CA, USA, June 25-29, 2018},
  pages        = {457--470},
  publisher    = {{ACM}},
  year         = {2018},
  url          = {https://doi.org/10.1145/3188745.3188926},
  doi          = {10.1145/3188745.3188926},
  bibsource    = {dblp computer science bibliography, https://dblp.org}
}

@inproceedings{Caceres23,
  author       = {Manuel C{\'{a}}ceres},
  editor       = {Kousha Etessami and
                  Uriel Feige and
                  Gabriele Puppis},
  title        = {Minimum Chain Cover in Almost Linear Time},
  booktitle    = {50th International Colloquium on Automata, Languages, and Programming,
                  {ICALP} 2023, Paderborn, Germany, July 10-14, 2023},
  series       = {LIPIcs},
  volume       = {261},
  pages        = {31:1--31:12},
  publisher    = {Schloss Dagstuhl - Leibniz-Zentrum f{\"{u}}r Informatik},
  year         = {2023},
  url          = {https://doi.org/10.4230/LIPIcs.ICALP.2023.31},
  doi          = {10.4230/LIPICS.ICALP.2023.31},
  bibsource    = {dblp computer science bibliography, https://dblp.org}
}

@inproceedings{KoganP22fast,
  author       = {Shimon Kogan and
                  Merav Parter},
  editor       = {Mikolaj Bojanczyk and
                  Emanuela Merelli and
                  David P. Woodruff},
  title        = {Beating Matrix Multiplication for n{\^{}}\{1/3\}-Directed Shortcuts},
  booktitle    = {49th International Colloquium on Automata, Languages, and Programming,
                  {ICALP} 2022, Paris, France, July 4-8, 2022},
  series       = {LIPIcs},
  volume       = {229},
  pages        = {82:1--82:20},
  publisher    = {Schloss Dagstuhl - Leibniz-Zentrum f{\"{u}}r Informatik},
  year         = {2022},
  url          = {https://doi.org/10.4230/LIPIcs.ICALP.2022.82},
  doi          = {10.4230/LIPICS.ICALP.2022.82},
  bibsource    = {dblp computer science bibliography, https://dblp.org}
}

@inproceedings{Brand0PKLGSS23flow,
  author       = {Jan van den Brand and
                  Li Chen and
                  Richard Peng and
                  Rasmus Kyng and
                  Yang P. Liu and
                  Maximilian Probst Gutenberg and
                  Sushant Sachdeva and
                  Aaron Sidford},
  title        = {A Deterministic Almost-Linear Time Algorithm for Minimum-Cost Flow},
  booktitle    = {64th {IEEE} Annual Symposium on Foundations of Computer Science, {FOCS}
                  2023, Santa Cruz, CA, USA, November 6-9, 2023},
  pages        = {503--514},
  publisher    = {{IEEE}},
  year         = {2023},
  url          = {https://doi.org/10.1109/FOCS57990.2023.00037},
  doi          = {10.1109/FOCS57990.2023.00037},
  bibsource    = {dblp computer science bibliography, https://dblp.org}
}
